\declaretheorem[]{definition}
\declaretheorem[]{theorem}
\declaretheorem[]{property}
\newtheorem{lemma}{Lemma}
\newtheorem{corollary}{Corollary}[theorem]
\DeclareRobustCommand{\qed}{%
  \ifmmode 
  \else \leavevmode\unskip\penalty9999 \hbox{}\nobreak\hfill
  \fi
  \quad\hbox{\qedsymbol}}
\newcommand{\openbox}{\leavevmode
  \hbox to.77778em{%
  \hfil\vrule
  \vbox to.675em{\hrule width.6em\vfil\hrule}%
  \vrule\hfil}}
\newcommand{\qedsymbol}{\openbox}
\newenvironment{proof}[1][\proofname]{\par
  \normalfont
  \topsep6\p@\@plus6\p@ \trivlist
  \item[\hskip\labelsep\itshape
    #1.]\ignorespaces
}{%
  \qed\endtrivlist
}
\newcommand{\proofname}{Proof}
\renewcommand{\thesubsection}{\thesection.\Roman{subsection}}
\newcommand{\commute}[2]{\{#1,#2\}_{\textrm{G}}}
\newcommand{\commuteO}[2]{\{#1,#2\}_{\textrm{O}}}
\newcommand{\commuteS}[2]{\{#1,#2\}_{\textrm{S}}}
\newcommand{\pool}{\mathcal{P}}    
\newcommand{\subpool}{\mathcal{S}} 
\newcommand{\nset}{\mathcal{N}}    
\newcommand{\layer}{\mathcal{A}}   
\newcommand{\support}{\mathrm{Supp}}   
\newcommand{\pauli}{\mathcal{P}}
\newcommand{\paulix}{\mathcal{X}}
\newcommand{\pauliy}{\mathcal{Y}}
\newcommand{\pauliz}{\mathcal{Z}}
\newcommand{\damp}{\mathcal{F}}
\newcommand{\dephase}{\mathcal{C}}
\newcommand{\depolarize}{\mathcal{D}}
\newcommand{\damplayer}{\bar{\damp}}
\newcommand{\dephaselayer}{\bar{\dephase}}
\newcommand{\depolarizelayer}{\bar{\depolarize}}
\newcommand{\residual}{\mathcal{R}}
\newcommand{\tmax}{t_{\text{max}}}
\newcommand{\mmax}{m_{s}}
\newcommand{\nmax}{n_{\text{max}}}
\newcommand{\bound}{\mathcal{E}}    
\newcommand{\accuracy}{\varepsilon}    
\newcommand{\Lvalue}{\mathcal{L}}   
\newcommand{\Lmax}{\ell} 
\newcommand{\dt}{\tau}    
\newcommand{\HF}{\text{HF}}
\DeclareMathOperator*{\argmin}{arg\,min}
\newcommand{\approptoinn}[2]{\mathrel{\vcenter{
  \offinterlineskip\halign{\hfil$##$\cr
    #1\propto\cr\noalign{\kern2pt}#1\sim\cr\noalign{\kern-2pt}}}}}
\newcommand{\appropto}{\mathpalette\approptoinn\relax}
\newcommand{\bigO}[1]{\mathcal O\left(#1\right)}
\newcommand{\bigOmega}[1]{\Omega\left(#1\right)}
\newcommand{\bigTheta}[1]{\Theta\left(#1\right)}
\newcommand\namebond[4][5pt]{\chemmove{\path(#2)--(#3)node[midway,sloped,yshift=#1]{#4};}}
\newcommand\arcbetweennodes[3]{%
\pgfmathanglebetweenpoints{\pgfpointanchor{#1}{center}}{\pgfpointanchor{#2}{center}}%
\let#3\pgfmathresult}
\newcommand\arclabel[6][stealth-stealth,shorten <=1pt,shorten >=1pt]{%
\chemmove{%
\arcbetweennodes{#4}{#3}\anglestart \arcbetweennodes{#4}{#5}\angleend
\draw[#1]([shift=(\anglestart:#2)]#4)arc(\anglestart:\angleend:#2);
\pgfmathparse{(\anglestart+\angleend)/2}\let\anglestart\pgfmathresult
\node[shift=(\anglestart:#2+1pt)#4,anchor=\anglestart+180,rotate=\anglestart+90,inner sep=0pt,
outer sep=0pt]at(#4){#6};}}
\lstdefinestyle{inline}{basicstyle=\tt}
\DeclareSIUnit{\Hartree}{Ha}
\definecolor{myorange}{HTML}{FE6100}
\definecolor{myblue}{HTML}{64A5FF}
\definecolor{mypink}{HTML}{FF1972}
\definecolor{mygreen}{HTML}{62D09D}
\definecolor{myviolet}{HTML}{4F428E}
\definecolor{myyellow}{HTML}{FFBF00}
\renewcommand{\thesubsection}{\Roman{section} \Alph{subsection}}
\def\p@subsection{}
\def\p@subsubsection{}
\begin{document}

\title{Layering and subpool exploration for adaptive Variational 
Quantum Eigensolvers: \\ Reducing circuit depth, runtime, and 
susceptibility to noise}

\author{Christopher K. Long}
\affiliation{ Hitachi  Cambridge  Laboratory,  J.  J.  Thomson  Ave.,  
Cambridge,  CB3  0HE,  United  Kingdom
}
\affiliation{ Cavendish Laboratory,  Department of Physics,  University  of  
Cambridge,  Cambridge,  CB3  0HE,  United  Kingdom
}
\author{Kieran Dalton}
\affiliation{ Hitachi  Cambridge  Laboratory,  J.  J.  Thomson  Ave.,  
Cambridge,  CB3  0HE,  United  Kingdom
}
\affiliation{ Cavendish Laboratory,  Department of Physics,  University  of  
Cambridge,  Cambridge,  CB3  0HE,  United  Kingdom
}
\author{Crispin H. W. Barnes}
\affiliation{ Cavendish Laboratory,  Department of Physics,  University  of  
Cambridge,  Cambridge,  CB3  0HE,  United  Kingdom
}
\author{David R. M. Arvidsson-Shukur}
\affiliation{ Hitachi  Cambridge  Laboratory,  J.  J.  Thomson  Ave.,  
Cambridge,  CB3  0HE,  United  Kingdom
}
\author{Normann Mertig}
\affiliation{ Hitachi  Cambridge  Laboratory,  J.  J.  Thomson  Ave.,  
Cambridge,  CB3  0HE,  United  Kingdom
}

\date{\today}

\begin{abstract}
Adaptive variational quantum eigensolvers (ADAPT-VQEs) are promising candidates for simulations of strongly correlated systems on near-term quantum hardware.
To further improve the noise resilience of these algorithms, recent efforts have been directed towards compactifying, or \textit{layering}, their ansatz circuits.
Here, we broaden the understanding of the algorithmic layering process in three ways.
First, we investigate the non-commutation relations between the different elements that are used to build ADAPT-VQE ansätze. Doing so, we develop a framework for studying and developing layering algorithms, which produce shallower circuits.
Second, based on this framework, we develop a new subroutine that can reduce the number of quantum-processor calls by optimizing the selection procedure with which a variational quantum algorithm appends ansatz elements.
Third, we provide a thorough numerical investigation of the noise-resilience improvement available via layering the circuits of ADAPT-VQE algorithms. We find that layering leads to an improved noise resilience with respect to amplitude-damping and dephasing noise, which, in general, affect idling and non-idling qubits alike. With respect to depolarizing noise,  which tends to affect only actively manipulated qubits, we observe no advantage of layering.
\end{abstract}

\maketitle

\section{Introduction}

Quantum chemistry simulations of strongly correlated systems are challenging for classical computers \cite{QCCReview}.
While approximate methods often lack accuracy \cite{QCCReview, Self-consistent_field_with_exchange_for_beryllium, PhysRev.140.A1133, PhysRev.136.B864, ROSSI1999530}, exact methods become infeasible when the system sizes exceed more than 34 spin orbitals---the largest system for which a full configuration interaction (FCI) calculation has been conducted
\cite{ROSSI1999530}.
For this reason, simulations of many advanced chemical systems, such as enzyme active sites and surface catalysts, rely on knowledge-intense, domain-specific approximations \cite{doi:10.1021/acs.chemrev.9b00829}.
Therefore, developing general chemistry simulation methods for quantum computers could prove valuable.

Variational quantum eigensolvers (VQEs) \cite{Peruzzo2014, QCCReview, TILLY20221, UCCSD, ADAPT-VQE, QEB-ADAPT-VQE, eQEB-ADAPT-VQE, qubit-ADAPT-VQE, KieransNoise, Tetris-ADAPT-VQE, L-VQE} are a class of quantum-classical methods intended to perform chemistry simulations on near-term quantum hardware.
More specifically, VQEs calculate upper bounds to the ground state energy $E_0$ of a molecular Hamiltonian $H$ using the Rayleigh-Ritz variational principle 
\begin{align}
    E_0\le E(\vec\theta)\equiv\Trace\left(H 
\Lambda(\vec\theta)[\rho_0]\right).
\end{align}
A quantum processor is used to apply a parametrized quantum circuit to an initial state. In the presence of noise the quantum circuit can, in general, be represented by the parameterized completely positive trace-preserving (CPTP) map $\Lambda(\vec\theta)$ and the initial state can be represented by the density matrix $\rho_0$. We will use square brackets to enclose a state acted upon by a CPTP map.
This generates a parametrized trial state $\rho(\vec\theta) \equiv 
\Lambda(\vec\theta) [\rho_0]$ that is hard to represent on classical computers.
The energy expectation value $E(\vec\theta)$ of $\rho(\vec\theta)$ gives a bound on $E_0$, which can be accurately sampled using polynomially few measurements \cite{QCCReview, TILLY20221}.
A classical computer then varies $\vec\theta$ to minimize $E(\vec\theta)$ 
iteratively.
Provided that the ansatz circuit is sufficiently expressive, $E(\vec\theta)$ converges to $E_0$ and returns the ground state energy.
Initial implementations of VQEs on near-term hardware have been reported in 
\cite{Peruzzo2014,doi:10.1126/science.abb9811,PhysRevX.6.031007,Kandala2017,PhysRevX.8.031022,Xue2022}.
Despite these encouraging results, several refinements are needed to alleviate trainability issues \cite{BarrenPlateau,  BarrenPlateauGradFree, BarrenPlateauHighDeriv, NoiseBarrenPlateau} and to make VQEs feasible for molecular simulations with larger numbers of orbitals.
Moreover, recent results indicate that the noise resilience of VQE algorithms must be improved to enable useful simulations \cite{KieransNoise, NoiseBarrenPlateau, De_Palma_2023}.

Adaptive VQEs (ADAPT-VQEs) \cite{ADAPT-VQE} are promising VQE algorithms, which partially address the issues of trainability and noise resilience.
They operate by improving the ansatz circuits in $\tmax$ consecutive steps
\begin{align}
 \Lambda_t(\theta_t,\ldots,\theta_1) = A_t(\theta_t) \circ 
\Lambda_{t-1}(\theta_{t-1},\ldots,\theta_1),
\end{align}
starting from the identity map $\Lambda_0=\text{\normalfont id}$.
Here, $t=1,...,\tmax$ indexes the step and $\circ$ denotes functional composition of the CPTP maps.
An ansatz element $A_t(\theta_t)$ is added to the ansatz circuit in each step.
The ansatz element $A_t(\theta_t)$ is chosen from an ansatz-element pool $\pool$ by computing the energy gradient for each ansatz element and picking the ansatz element with the steepest gradient.
Numerical evidence suggests that such ADAPT-VQEs are readily trainable and can minimize the energy landscape \cite{BarrenPlateauSolution}.
In the original proposal of ADAPT-VQE, the ansatz-element pool was physically motivated, comprising single and double fermionic excitations.
Since then, different types of ansatz-element pools have been proposed to minimize the number of CNOT gates in the ansatz circuit and thus improve the noise resilience of ADAPT-VQE \cite{QEB-ADAPT-VQE, eQEB-ADAPT-VQE, qubit-ADAPT-VQE, EfficientCNOT}.

ADAPT-VQEs still face issues. 
Compared with other VQE algorithms, ADAPT-VQEs make more calls to 
quantum processors.
This is because in every iteration, finding the ansatz element with the steepest energy gradient requires at least $\bigO{|\pool|}$ quantum 
processor calls.
This makes more efficient pool-exploration strategies desirable.
Moreover, noise poses serious restrictions on the maximum depth of useful VQE ansatz circuits \cite{KieransNoise}.
This makes shallower ansatz circuits desirable.
A recent algorithm called TETRIS-ADAPT-VQE compresses VQE ansatz circuits into compact layers of ansatz elements \cite{Tetris-ADAPT-VQE}.
This yields shallower ansatz circuits.
However, it has not yet been demonstrated that shallower ansatz circuits lead to improved noise resilience.
It is, therefore, important to evaluate whether such shallow ansatz circuits boost the noise resilience of ADAPT-VQEs.

In this paper, we broaden the understanding of TETRIS-like layering algorithms.
First, we show how non-commuting ansatz elements can be used to define a topology on the ansatz-element pool.
Based on this topology, we present Subpool Exploration: a pool-exploration strategy to reduce the number of quantum-processor calls when searching for ansatz elements with large energy gradients.
We then investigate several flavors of algorithms to layer and shorten ansatz circuits.
Benchmarking these algorithms, we find that alternative layering strategies can yield equally shallow ansatz circuits as TETRIS-ADAPT-VQE.
Finally, we investigate whether shallow VQE circuits are more noise resilient.
We do this by benchmarking both standard and layered ADAPT-VQEs in the presence of noise.
For amplitude damping and dephasing noise, which globally affect idling and non-idling qubits alike, we observe an increased noise resilience due to shallower ansatz circuits.
On the other hand, we find that layering is unable to mitigate depolarizing noise, which acts locally on actively manipulated qubits.

The remainder of this paper is structured as follows:\
In Sec.~\ref{sec: background}, we introduce notation and the ADAPT-VQE algorithm.
In Sec.~\ref{sec: algorithm} and Sec.~\ref{sec: noiseless}, subpool exploration and layering for ADAPT-VQE are described and benchmarked, respectively. 
We study the runtime advantage of layering in Sec.~\ref{sec: complexity}.
In Sec.~\ref{sec: noise}, we investigate the effect of noise on layered VQE algorithms.
Finally, we conclude in Sec.~\ref{sec: conclusion}.

\section{Preliminaries: Notation and the ADAPT-VQE}
\label{sec: background}

In what follows, we consider second-quantized Hamiltonians on a finite set of $N$ spin orbitals:
\begin{align}
    H=\sum_{p,q=1}^{N}h_{pq}a_p^\dagger a_q
    \; + \sum_{p,q,r,s=1}^N h_{pqrs}a_p^\dagger a_q^\dagger a_r a_s.
\end{align}
$a_p^\dagger$ and $a_p$ denote fermionic creation and annihilation operators of the $p$th spin-orbital, respectively.
The coefficients $h_{pq}$ and $h_{pqrs}$ can be efficiently computed classically---we use the Psi4 package \cite{https://doi.org/10.1002/wcms.93}.

The \textit{Jordan-Wigner transformation} \cite{QCCReview} is used to represent creation and annihilation operators by
\begin{align}
        a_p^\dagger\mapsto Q_p^\dagger \mathcal Z_p \quad\quad
        a_p        \mapsto Q_p         \mathcal Z_p,
\end{align}
respectively. Here,
\begin{align}
 Q_p^\dagger \coloneqq \frac{1}{2}\left(X_p-iY_p\right)\quad\quad
 Q_p \coloneqq \frac{1}{2}\left(X_p+iY_p\right),
\end{align}
are the qubit creation and annihilation operators and $X_p, Y_p, Z_p$ denote Pauli operators acting on qubit $p$. 
The fermionic phase is represented by
\begin{align}
 \mathcal Z_p \coloneqq \displaystyle\bigotimes_{q<p}Z_q.
\end{align}

Anti-Hermitian operators $T$ generate \textit{ansatz elements} that form Stone's-encoded unitaries parametrized by one real parameter $\theta$:
\begin{align}
 A(\theta)[\rho] \coloneqq \exp(\theta T) \rho 
\exp(-\theta T).
\end{align}
Different ADAPT-VQE algorithms choose $T$ from different types of operator pools.
There are three common types of operator pools.
The fermionic pool $\pool^{\rm{Fermi}}$ \cite{ADAPT-VQE} contains fermionic single and double excitations generated by anti-Hermitian operators:
\begin{align}
    \label{eq:fermionic}
 T^p_q   &\coloneqq a_p^{\dagger} a_q - a_q^{\dagger} a_p,\\
 T^{pq}_{rs} &\coloneqq a_p^{\dagger} a_q^{\dagger} a_r a_s 
                   - a_s^{\dagger} a_r^{\dagger} a_q a_p,
\end{align}
where $p,q,r,s=1,...,N$.
The QEB pool $\pool^{\text{QEB}}$ \cite{QEB-ADAPT-VQE} contains single and double qubit excitations generated by anti-Hermitian operators:
\begin{align}
    \label{eq: QEB}
 T^p_q   &\coloneqq Q_p^{\dagger} Q_q - Q_q^{\dagger} Q_p,\\
 T^{pq}_{rs} &\coloneqq Q_p^{\dagger} Q_q^{\dagger} Q_r Q_s 
                - Q_s^{\dagger} Q_r^{\dagger} Q_q Q_p.\label{eq: QEB double}
\end{align}
The qubit pool $\pool^{\text{qubit}}$ \cite{qubit-ADAPT-VQE} contains parameterized unitaries generated by strings of Pauli-operators $\sigma_p\in\left\{X_p, Y_p, Z_p\right\}$:
\begin{align}
    \label{eq: qubit}
 T_{pq}   &\coloneqq i \sigma_p \sigma_q,\\
 T_{pqrs} &\coloneqq i \sigma_p \sigma_q \sigma_r \sigma_s.\label{eq: qubit double}
\end{align}
Further definitions and discussions of all three pools are given in \cref{appendix: pool definitions}. It is worth noting that all ansatz elements have quantum-circuit representations composed of multiple standard single- and two-qubit gates \cite{EfficientCNOT}.
All three pools contain $\bigO{N^4}$ elements.

ADAPT-VQEs optimize several objective functions.
At iteration step $t$, the energy landscape is defined by
\begin{align}
 E_t(\theta_t,...,\theta_1) \equiv \Trace\left[H 
A_t(\theta_t)\circ\ldots\circ A_1(\theta_1)[\rho_{0}]\right].
\end{align}
A global optimizer may repeatedly evaluate $E_t$ and its partial derivatives at the end of the $t$th iteration to return a set of optimal parameters:
\begin{align}
 \label{eq: GlobalMinimization}
 (\theta_t^{*},\ldots,\theta_{1}^{*}) = 
\argmin_{(\theta_t, \ldots, 
\theta_{1})\in\mathbb{R}^t}{E_t(\theta_t,...,\theta_1)}.
\end{align}
These parameters set the upper energy bound of the $t$th iteration:
\begin{align}
 \bound_t = E_t(\theta_t^{*},\ldots,\theta_{1}^{*}).
\end{align}

A loss function $L_t\colon\pool \to \mathbb R$ is used to pick an ansatz element from the operator pool $\pool$ at each iteration $t$:
\begin{align}
 \label{eq: LocalMinimization}
 A_t = \argmin_{A\in \pool}L_{t}(A).
\end{align}
Throughout this paper, we use the standard gradient loss of ADAPT-VQEs, defined in \cref{eq: loss}.
We denote the state after $t-1$ iterations with optimized parameters 
$\theta_{t-1}^{*},\ldots,\theta_1^{*}$ by
\begin{align}
 \rho_{t-1} = \Lambda_{t-1}(\theta_{t-1}^{*},\ldots,\theta_1^{*}) [\rho_0].
\end{align}
Further, we define the energy expectation after adding the ansatz element $A\in\pool$ as
\begin{align}
 E_{t,A}(\theta) = \Trace\left(H A(\theta)[\rho_{t-1}]\right).
\end{align}
Then, the loss is defined by
\begin{align}
    \label{eq: loss}
 L_t(A) 
 = -\left|\frac{\partial E_{t,A}(\theta)}{\partial \theta}\right|_{\theta=0}
= -\left|\Trace\left([H, T] \rho_{t-1}\right)\right|.
\end{align}
We consider alternative loss functions in \cref{appendix: Largest Energy Reduction Decision}.

The ADAPT-VQE starts by initializing a state $\rho_0$. Often, $\rho_0$ is the Hartree-Fock state $\rho_{\HF}$.
The algorithm then builds the ansatz circuit $\Lambda_t$ by first adding ansatz elements $A_t\in\pool$ of minimal loss $L_t$, according to \cref{eq: LocalMinimization}. Then, the algorithm optimizes the ansatz circuit parameters according to \cref{eq: GlobalMinimization}.
This generates a series of upper bounds,
\begin{align}
 \bound_0 > \bound_1 > ... >  \bound_t ,
\end{align}
until the improvement of consecutive bounds drops below a threshold $\varepsilon$ such that $\bound_{t-1} - \bound_t < \varepsilon$, or the maximum iteration number $\tmax$ is reached.
The final bound ($\bound_{t}$ or $\bound_{\tmax}$) is then returned to approximate $E_0$.
A pseudo-code of the ADAPT-VQE is given in \cref{alg: ADAPTVQE}.

\begin{algorithm}[H]
   \caption{ADAPT-VQE \cite{ADAPT-VQE}}
    \label{alg: ADAPTVQE}
    \begin{algorithmic}[1]
        \State Initialize state $\rho_0\gets\rho_{\HF}$, circuit 
$\Lambda_0\gets1$, and     pool $\mathcal P$.
        \State Initialize energy bound $\bound_0\gets\infty$ and accuracy $\varepsilon$.
        \For{$t=1,...,\tmax$}
            \State Select ansatz element: $A_t \gets \argmin_{A\in\pool}{L_t(A)}$
            \State Set circuit: 
                   $\Lambda_t(\theta_t,...,\theta_1) \gets 
                   A_t(\theta_t)\circ \Lambda_{t-1}(\theta_{t-1},...,\theta_1)$
            \State Optimize circuit:
                   $(\theta_t^{*},...,\theta_1^{*}) \gets 
                   \argmin{E_t(\theta_t,...,\theta_1)}$
            \State Update energy bound:
                $\bound_t \gets E_t(\theta_t^{*},...,\theta_1^{*})$ 
            \State Update state:
                $\rho_t \gets \Lambda_t(\theta_t^{*},...,\theta_1^{*})[\rho_0]$ 
            \If{$\bound_{t-1}-\bound_t<\varepsilon$}
                \State \textbf{return} energy bound $\bound_{t}$
            \EndIf
        \EndFor
        \State \textbf{return} energy bound $\bound_{\tmax}$
    \end{algorithmic}
\end{algorithm}

\section{Subpool exploration and layering for ADAPT-VQEs}
\label{sec: algorithm}

In this section, we present two subroutines to improve ADAPT-VQEs.
The first subroutine optimally layers ansatz elements, as depicted in \cref{fig: circuit structures}.
We call the process of producing dense (right-hand side) ansatz circuits instead of sparse (left-hand side) ansatz circuits ``layering''. 
This subroutine can be used to construct shallower ansatz circuits, which may make ADAPT-VQEs more resilient to noise.
The second subroutine is subpool exploration. 
It searches ansatz-element pools in successions of non-commuting ansatz elements.
Subpool exploration is essential for layering and can reduce the number of calls an ADAPT-VQE makes to a quantum processor.
Combining both subroutines results in algorithms similar to TETRIS-ADAPT-VQE \cite{Tetris-ADAPT-VQE}. Our work focuses on developing and understanding layering algorithms from the perspective of non-commuting sets of ansatz elements.

\begin{figure}[!h]
    \includegraphics[page=1]{./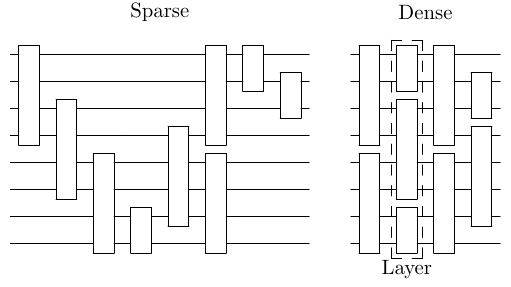}
\caption{Layering: A sparse ansatz circuit (left), as produced by standard ADAPT-VQEs, can be compressed to a dense structure (right) by layering. Boxes denote ansatz elements. Each line represents a single qubit. Note that ansatz circuit elements entangle two or four qubits.}
\label{fig: circuit structures}
\end{figure}

\subsection{Commutativity and Support}

Commutativity of ansatz elements is a central notion underlying our subroutines:
\begin{definition}[Operator Commutativity]
Two ansatz elements $A, B\in\pool$ are said to ``operator commute'' iff $A\left(\theta\right)$ and $B\left(\phi\right)$ commute for all $\theta$ and $\phi$:
\begin{align}
 \commuteO{A}{B}=0 \iff \forall\theta,\phi\in\mathbb{R} \; : \; 
\left[A\left(\theta\right),B\left(\phi\right)\right]=0.
\end{align}
Conversely, two ansatz elements $A, B\in\pool$ do not operator-commute iff there exist parameters for which the corresponding operators do not commute:
\begin{align}
\commuteO{A}{B}\neq0 \iff \exists\theta,\phi\in\mathbb{R} : 
\left[A\left(\theta\right),B\left(\phi\right)\right]\neq 0.
\end{align}
\end{definition}
\begin{definition}[Operator non-commuting set]
Given an ansatz-element pool $\pool$ and an ansatz element $A\in\pool$, we define its operator non-commuting set as follows
\begin{align}
\nset_{\textrm{O}}(\pool,A)\coloneqq\left\{ B\in\pool : \commuteO{A}{B}\neq0\right\}.
\end{align}
\end{definition}
Operator commutativity is central to layering.
The operator non-commuting set is central to subpool exploration.
Structurally similar and more-intuitive notions can be defined using qubit support:
\begin{definition}[Qubit support]
Let $\mathcal B\left(\mathcal H\right)$ denote the set of superoperators on a Hilbert space $\mathcal H$. Let $\mathcal H_{\mathcal Q}\coloneqq\bigotimes_{q\in\mathcal Q}\mathcal H_q$ denote the Hilbert space of a set of all qubits $\mathcal Q\equiv\left\{1,\ldots,N\right\}$ where $\mathcal H_q$ is the Hilbert space corresponding to the $q$th qubit. Consider a superoperator $A\in\mathcal B\left(\mathcal H_{\mathcal Q}\right)$. First, we define the superoperator subset that acts on a qubit subset $\mathcal W\subseteq\mathcal Q$ as
\begin{equation}
    \mathcal B_{\mathcal W}\coloneqq\left\{B\otimes\text{\normalfont id}\colon B\in\mathcal B\left(\mathcal H_{\mathcal W}\right)\right\}\subseteq\mathcal B\left(\mathcal H_{\mathcal Q}\right).
\end{equation}
Then, we define the qubit support of a superoperator $A$ as its minimal qubit subset $\mathcal W$:
\begin{equation}
    \support\left(A\right)\coloneqq\left.\smashoperator[r]{\bigcap_{\substack{{\mathcal W\subseteq\mathcal Q\colon A\in\mathcal B_{\mathcal W}}}}}\mathcal W\right. .
\end{equation}
The notion of support extends to parameterized ansatz elements:
\begin{equation}
    \support\left(B\right)\coloneqq \bigcup_\theta\support\left(B\left(\theta\right)\right),
\end{equation}
where $B$ is a parameterized ansatz element.
\end{definition}
Intuitively, the qubit support of an ansatz element $A$ is the set of all qubits the operator $A$ acts on nontrivially \cref{fig: commutation diagram}.
The concept of qubit support allows one to define support commutativity of ansatz elements as follows.
\begin{definition}[Support commutativity]
Two ansatz elements $A, B\in\pool$ are said to ``support-commute'' iff their qubit support is disjoint.
\begin{align}
 \commuteS{A}{B}=0 \iff \support(A)\cap\support(B)=\emptyset.
\end{align}
Conversely, two ansatz elements $A, B\in\pool$ do not support-commute iff their supports overlap
\begin{align}
 \commuteS{A}{B}\neq0 \iff \support(A)\cap\support(B)\neq\emptyset.
\end{align}
\end{definition}
\begin{figure}
    \resizebox*{\columnwidth}{!}{
        \includegraphics[page=1]{
    ./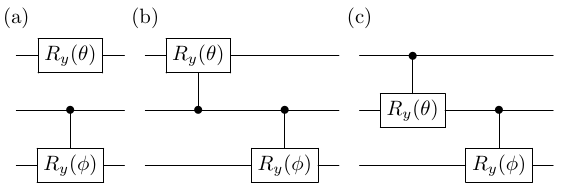}}
    \caption{A diagram comparing support and operator commutativity. (a) The elements both support and operator commute. Note that $R_y(\theta)$ only has qubit support on the top qubit line. (b) The elements operator commute but do not support commute. (c) The elements neither support nor operator commute.}
    \label{fig: commutation diagram}
\end{figure}
\begin{definition}[Support non-commuting set]
Given an ansatz-element pool $\pool$ and an ansatz element $A\in\pool$, we define the set of ansatz elements with overlapping support as
\begin{align}
    \nset_{\textrm{S}}(\pool,A)\coloneqq\left\{ B\in\pool : \{B,A\}_S\neq0\right\}.
\end{align}
\end{definition}

Operator commutativity and support commutativity are not equivalent---see \cref{fig: commutation diagram}.
However, the following properties hold.
Elements supported on disjoint qubit sets operator commute:
\begin{align}
 \forall A,B\in\pool \quad \commuteS{A}{B}=0 \implies 
\commuteO{A}{B}=0.\label{eq:commutation implies operator commutation}
\end{align}
Conversely, operator non-commuting ansatz elements act on, at least,  one common qubit which implies they are support non-commuting:
\begin{align}
 \forall A,B\in\pool \quad \commuteO{A}{B}\neq0 \implies 
\commuteS{A}{B}\neq0.
\end{align}
The last relation also implies that the operator non-commuting set of $A$ is contained in its support non-commuting set
\begin{align}
 \nset_{\textrm{O}}(\pool,A)\subseteq\nset_{\textrm{S}}(\pool,A).
\end{align}
We further generalize the notions of operator and support commutativity in \cref{appendix: generalized commutativity}. Henceforth, we will use generalized commutativity to denote either operator or support commutativity or any other type of commutativity specified in \cref{appendix: generalized commutativity}. Further, $\nset_{\textrm{G}}$ and $\commute{\bullet}{\bullet}$ will be used to denote the generalized non-commuting set and the generalized commutator, respectively.

For later reference, we note that generalized non-commuting sets induce a topology on $\pool$ via the following discrete metric.
\begin{definition}[Pool metric]
\label{def:pool-metric}
Let $d\colon\pool\times\pool\to\left\{0,1,2\right\}$ define a discrete metric such that:
(i) $\forall A\in\pool$, set $d(A,A)=0$.
(ii) $\forall A,B\in\pool$ with $A\neq B$ and $\commute{A}{B}\neq0$, set 
$d(A,B)=1$.
(iii) $\forall A,B\in\pool$ with $A\neq B$ and $\commute{A}{B}=0$, set 
$d(A,B)=2$.
\end{definition}
\begin{figure}
    \centering
    \includegraphics[page=1]{
    ./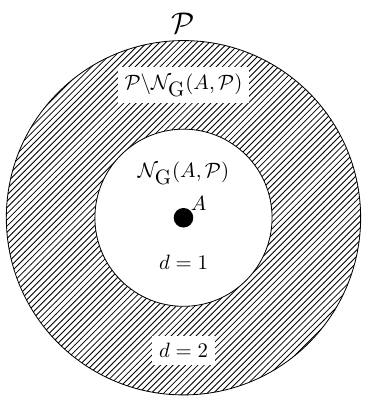}
    \caption{A diagram of the distance $d$ from an element $A\in\mathcal P$ under the pool metric, Definition~\ref{def:pool-metric}. The ansatz element $A$ (black dot) is surrounded by ansatz elements of the non-commuting set $\nset_{\textrm{G}}(\pool, A)$ of distance 1 (white circle). All other elements $\pool\setminus\nset_{\textrm{G}}(\pool, A)$ have distance 2 (gray shaded region).}
    \label{fig: metric diagram}
\end{figure}
With this metric, the generalized non-commuting elements $\nset_{\textrm{G}}(\pool, A)$ form a ball of distance one around each ansatz element $A\in\pool$. The metric is represented diagrammatically in \cref{fig: metric diagram}.
This allows us to identify an element $A\in\pool$ as a local minimum if there is no element with lower loss within $A$'s generalized non-commuting set.
\begin{property}[Local minimum]\label{property: local minimum}
Let $\pool$ be an ansatz-element pool with the pool metric of
\cref{def:pool-metric}, and let $L:\pool\to\mathbb{R}$ denote a loss function.
Then, any element $A\in\pool$ for which
\begin{equation}
 L\left(A\right)=\min_{B\in\nset_{\textrm{G}}\left(\pool,A\right)}L\left(B\right),
\end{equation}
is a local minimum on $\pool$ with respect to $L$.
\end{property}

This property is important as we will later show that subpool exploration always returns local minima.

To gain intuition about the previously defined notions, we consider the ansatz elements of the QEB and the Pauli pools, \cref{eq: QEB,eq: QEB double,eq: qubit,eq: qubit double}.
The ansatz elements of these pools have qubit support on either two or four qubits, as is illustrated in \cref{fig: circuit structures}.
Commuting ansatz elements with disjoint support can be packed into an ansatz-element layer, which can be executed on the quantum processor in parallel.
This is the core idea of layering, which helps to reduce the depths of ansatz circuits.
Moreover, as generalized non-commuting ansatz elements must share at least one qubit, we conclude that the generalized non-commuting set $\nset_{\textrm{G}}(\pool, A)$ has at most $\bigO{N^3}$ ansatz elements. 
This is a core component of subpool exploration.
Analytic expressions for the cardinalities of the generalized non-commuting sets are given in \cref{appendix:non-commuting set cardinalities}. In \cref{appendix: non-commuting sets}, we prove that two different fermionic excitations operator commute iff they act on disjoint or equivalent sets of orbitals. The same is true for qubit excitations.  Pauli excitations operator commute iff the generating Pauli strings differ in an even number of places within their mutual support.

\subsection{Subpool exploration}
\label{sec:subpool exploration}

In this section, we introduce \textit{subpool exploration}, a strategy to explore ansatz-element pools with fewer quantum-processor calls.
Subpool exploration differs from the standard ADAPT-VQE as follows.
Standard ADAPT-VQEs evaluate the loss of every ansatz element in the ansatz-element pool $\pool$ in every iteration of ADAPT-VQE, (\cref{alg: ADAPTVQE}, Line 4).
This leads to $\bigO{|\pool|}$ quantum-processor calls to identify the ansatz element of minimal loss.
Instead, subpool exploration evaluates the loss of a reduced number of ansatz elements by exploring a sequence of generalized non-commuting ansatz-element subpools.
This can lead to a reduced number of quantum-processor calls and returns an ansatz element which is a local minimum of the pool $\pool$.
The details of subpool exploration are as follows.

\textit{Algorithm:---}Let $\pool$ denote a given pool and $L$ a given loss function.
Instead of na\"ively computing the loss of every ansatz element in $\pool$, our algorithm explores $\pool$ iteratively by considering subpools, $\subpool_{m}\subsetneq\pool$, in consecutive steps.
During this process, the algorithm successively determines the ansatz element with minimal loss within subpool $\subpool_m$ as
\begin{align}
 A_m = \argmin_{A\in\subpool_m}{L(A)}.
\end{align}
Meanwhile, the corresponding loss value is stored:
\begin{align}
 \Lvalue_m = L(A_m).
\end{align}
Iterations are halted when loss values stop decreasing.
The key point of subpool exploration is to update the subpools $\subpool_m$ using the generalized non-commuting set generated by $A_m$:
\begin{equation}\label{eq: subpool update}
    \mathcal S_{m+1}=\nset_{\textrm{G}}\left(\pool\backslash\mathcal S_{\le m},A_m\right)\subseteq\nset_{\textrm{G}}\left(\pool,A_m\right),\!\quad\forall m\ge0,
\end{equation}
where $\mathcal S_{\le m}\coloneqq\cup_{l=0}^m\mathcal S_l$.
A pseudo-code summary of subpool exploration is given in \cref{alg: SubpoolExploration}, and a visual summary is displayed in \cref{fig: PoolExploration}.
We now discuss aspects of subpool exploration.
\begin{algorithm}[H]
   \caption{Subpool Exploration}
    \label{alg: SubpoolExploration}
   
\begin{algorithmic}[1]
    \State \textbf{Input:} Pool $\pool$ and loss function $L$.
    \State Initialize subpool $\subpool_0$ and loss value $\Lvalue_{0}\gets\infty$.
    \For{$m=0,...$}
        \State Select ansatz element  $A_m \gets 
\argmin_{A\in\subpool_{m}}{L(A)}$.
        \State Update loss value $\Lvalue_{m}\gets L(A_m)$.
        \If{$\Lvalue_{m} < \Lvalue_{m-1}$}
            \State Update subpool $\mathcal S_{m+1}=\nset_{\textrm{G}}\left(\mathcal P\backslash\mathcal S_{\le m},A_m\right)$.
        \Else
            \State \Return $A_m, \subpool_m$.
        \EndIf
    \EndFor
\end{algorithmic}
\end{algorithm}

\begin{figure}
\includegraphics[page=1]{
./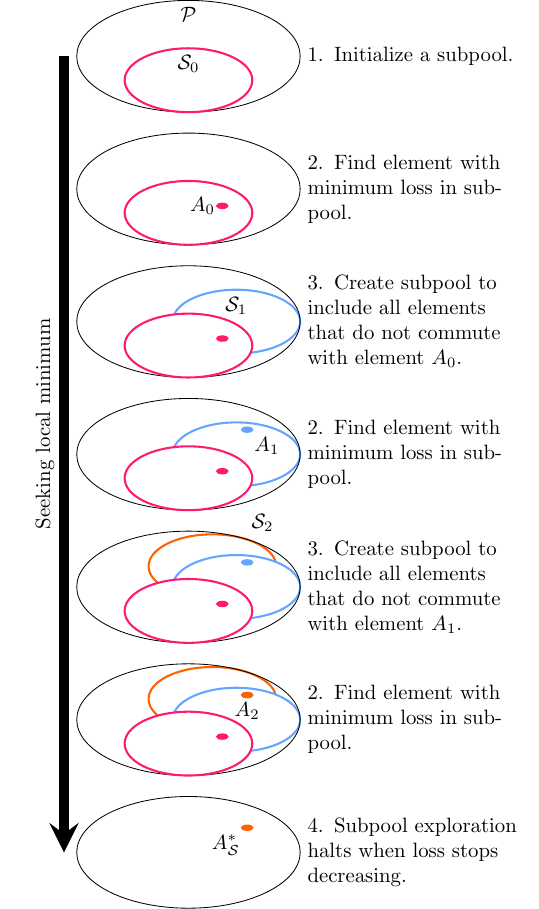}
\caption{Visualization of our strategy for subpool exploration. The pool $\pool$ is successively explored in subpools $\subpool_m$, with ansatz elements $A_m$ of minimal loss generating future subpools through their generalized non-commuting set.
}
\label{fig: PoolExploration}
\end{figure}

\textit{Efficiency:---}
Let $\mmax$ denote the index of the final iteration and define the set of searched ansatz elements as
\begin{align}
 \subpool \coloneqq \cup_{m=0}^{\mmax}\subpool_m.
\end{align}
As loss values of ansatz elements that have been explored are stored in a list, it follows that subpool exploration requires only $|\subpool|$ loss function calls.
On the other hand, exploring the whole pool in ADAPT-VQE requires $|\pool|$ loss-function calls.
Since $\subpool$ is a subset of $\pool$, subpool exploration may reduce the number of quantum-processor calls:
\begin{align}
 \subpool\subseteq\pool \implies |\subpool|\le |\pool|.
\end{align}
To give a specific example, consider the QEB and qubit pools. 
Those pools contain $\bigO{N^4}$ ansatz elements.
On the other hand, generalized non-commuting sets have $\bigO{N^3}$ ansatz elements.
Thus, by choosing an appropriate initial subpool, we can ensure that $|\subpool_m|=\bigO{N^3}$ for all subpools. 
Especially if the number of searched subpools is $\mmax=\bigO{1}$, subpool exploration can return ansatz elements of low loss while exploring only $\bigO{N^3}$ ansatz elements.

We note that this pool-exploration strategy ignores certain ansatz elements.
In particular, it may miss the optimal ansatz element with minimal loss.
Nevertheless, as explained in the following paragraphs, it will always return ansatz elements which are \textit{locally optimal}.
This ensures that the globally optimal ansatz element can always be added to the ansatz circuit later in the algorithm.

\textit{Optimality:---}As the set of explored ansatz elements $\subpool$ is a subset of $\pool$, the ansatz element returned by subpool exploration
\begin{align}
 A_{\subpool}^{*}\coloneqq\argmin_{A\in\subpool} L(A) ,
\end{align}
may be sub-optimal to the ansatz element returned by exploring the whole pool
\begin{align}
 A_{\pool}^{*}\coloneqq\argmin_{A\in\pool} L(A).
\end{align}
That is,
\begin{align}
 \subpool \subset \pool 
 \implies 
 \argmin_{A\in\pool} L(A) \le \argmin_{A\in\subpool} L(A).
\end{align}
Yet, there are a couple of useful properties that pertain to the output $A_{\subpool}^{*}$ of subpool exploration.
At first, the outputs of subpool exploration are local minima.
\begin{property}[Local Optimality]
Any ansatz element $A_{\subpool}^{*}$ returned by subpool exploration is a local minimum. 
\end{property}
The proof of this property is immediate.
Moreover, as subpool exploration constructs subpools from generalized non-commuting sets, the only ansatz elements $B\in\pool$ with $L(B)<L(A_{\subpool}^{*})$ must necessarily generalized commute with $A_{\subpool}^{*}\in\pool$.
\begin{property}
\label{prop: non-blocking}
\textnormal{\textbf{(Better ansatz elements generalized commute)}}
Let $\pool$ denote a pool and $L$ denote a loss function.
Let $A_{\subpool}^{*}\in\pool$ denote the final output of subpool exploration. Then,
\begin{align}
 \forall B\in\pool \text{ with } L(B)<L(A_{\subpool}^{*}) &\implies 
\commute{A_{\subpool}^{*}}{B}=0\\
&\implies 
\commuteO{A_{\subpool}^{*}}{B}=0.\label{eq:second non-blocking implication}
\end{align}
\end{property}
\begin{proof}
We prove this property by contradiction.
Assume that there is an ansatz element $L(B)<L(A_{\subpool}^{*})$ such that $\commute{A_{\subpool}^{*}}{B}\neq0$.
This implies that $B$ is in the generalized non-commuting set $\nset_{\textrm{G}}(\pool, A_{\subpool}^{*})$ and exploring the corresponding subpool would have produced $L(B)<L(A_{\subpool}^{*})$ leading to the exploration of $\nset_{\textrm{G}}(\pool, B)$.
This, in turn, can only return an ansatz element with a loss $L(B)$ or smaller. This would contradict $A_{\subpool}^{*}$ having been the final output of the algorithm. Finally, we use \cref{eq:commutation implies operator commutation} to show \cref{eq:second non-blocking implication}
\end{proof}
Property~\ref{prop: non-blocking} is useful as it ensures that subpool exploration can find better ansatz elements, which first were missed, in subsequent iterations.
To see this, suppose a first run of subpool exploration returns a local minimum $A\in\pool$.
Further, suppose there is another local minimum $B\in\pool$ such that $L(B)<L(A)$.
Property~\ref{prop: non-blocking} ensures that $A$ and $B$ generalized commute.
Hence, by running subpool exploration repeatedly on the remaining pool, we are certain to discover the better local minimum eventually.
Ultimately, this will allow for restoring the global minimum.

\textit{Initial subpool:---}So far, we have not specified any strategy for choosing the initial set $\subpool_{0}$.
This can, for example, be done by taking the subpool of a single random ansatz element $A_0\in\pool$.
Alternatively, one can compose $\subpool_0$ of random ansatz elements enforcing an appropriate pool size, e.g., $|\subpool_0|=\bigO{N^3}$ for QEB and qubit pools.

We will refer to the ADAPT-VQE with subpool exploration as the Explore-ADAPT-VQE. This algorithm is realized by replacing Line 4 in \cref{alg: ADAPTVQE} with subpool exploration, \cref{alg: SubpoolExploration}, with $L \rightarrow L_t$.

\subsection{Layering}

Below, we describe two methods for arranging generalized non-commuting ansatz elements into ansatz-element layers.
\begin{definition}[Ansatz-element layer]
Let $\layer$ be a subset of $\pool$.
We say that $\layer$ is an ansatz-element layer iff
\begin{align}
    \commute{A}{B}=0\quad\forall A, B \in \layer \text{ such that } A\neq B.
\end{align}
We denote the operator corresponding to the action of the mutually generalized-commuting ansatz elements of an ansatz-element layer $\layer$ with
\begin{align}
 \layer^{\text{\normalfont o}}(\vec{\theta}) = \prod_{A\in\layer}A(\theta_A) .
\end{align}
Here, $\vec{\theta}$ is the parameter vector for the layer:
\begin{align}
 \vec{\theta} \equiv \left\{\theta_A\colon A\in\layer\right\}.
\end{align}
We note that for support commutativity, the product can be replaced by the tensor product.
\end{definition}

Since ansatz-element layers depend on parameter vectors, the update rule is
\begin{align}
 \Lambda_t(\vec{\vartheta}_t) = 
    \layer^{\text{o}}(\vec{\theta}_t)\circ\Lambda_{t-1}(\vec{\vartheta}_{t-1}), 
    \quad \vec{\vartheta}_t = (\vec{\theta}_t,\vec{\vartheta}_{t-1}),
\end{align}
As before, the algorithm is initialized with  $\Lambda_0=\text{\normalfont id}$ and $\vec{\vartheta} = ( )$.
To make the dependence on the ansatz circuit explicit, we denote the energy landscape as
\begin{align}
    E_{\Lambda}(\vec{\vartheta}) \coloneqq
        \trace\left[ H \Lambda(\vec\vartheta) \left[\rho_0\right] \right] .
\end{align}
The energy landscape of the $t$th iteration is denoted as
\begin{align}
    E_{t}(\vec{\vartheta}_t) \equiv E_{\Lambda_t}(\vec{\vartheta}_t),
\end{align}
and its optimal parameters are
\begin{align}
    \label{eq:OptimalParameterVector}
    \vec{\vartheta}_t^{*} = \argmin_{\vec{\vartheta}} E_{t}(\vec{\vartheta}).
\end{align}
Further, the gradient loss (c.f. \cref{eq: loss}) is
\begin{align}\label{eq: intermediate loss}
 L_{\Lambda(\vec{\vartheta})}(A)\equiv
            -\left|\trace\left[
                \left[H,T_A\right]\Lambda(\vec{\vartheta})\left[\rho_0\right]
            \right]\right|,
\end{align}
where the definitions in \cref{eq: loss,eq: intermediate loss} satisfy the following relation
\begin{align}
    \label{eq: LayerLoss}
    L_{t}(A) = L_{\Lambda_{t-1}(\vec{\vartheta}_{t-1}^{*})}(A).
\end{align}
With this notation in place, we proceed to describe two methods to construct ansatz-element layers.

\subsubsection{Static layering}

Our algorithm starts by initializing an empty ansatz-element layer and the remaining pool $\pool'$ to be the entire pool $\pool$.
Further,  the loss is set such that $L\gets L_{\Lambda_{t-1}}$ for the $t$th layer.
The algorithm proceeds to fill the ansatz-element layer by successively running subpool exploration to pick an ansatz element $A_n$ in $n=0,\ldots, \nmax$ iterations. This naturally induces an ordering on the layer.
At every step of the iteration, the corresponding generalized non-commuting set $\nset_{\textrm{G}}(\pool', A_n)$ is removed from the remaining pool $\pool'$.
If the loss of the selected ansatz element $A_n$ is smaller than a predefined threshold $L(A)<\Lmax$,  it is added to the ansatz-element layer $\layer$.
The layer is completed once the pool is exhausted ($\pool'=\emptyset$) or the maximal iteration count $\nmax$ is reached.
A pseudocode summary of static layering is given in \cref{alg: BuildStaticLayer}.
\begin{algorithm}[H]
   \caption{Build Static Layer}
    \label{alg: BuildStaticLayer}
\begin{algorithmic}[1]
    \State \textbf{Input:} Pool $\pool$, loss $L$, max. loss $\Lmax$, max. 
iteration $\nmax$
    \State Initialize remaining pool $\pool'\gets\pool$
    \State Initialize ansatz layer $\layer\gets\emptyset$
    \For{$n=0,...,\nmax$}
        \State Set $A \gets \mathrm{SubpoolExploration}(\pool',L)$
        \If{$L(A) < \Lmax$}
            \State Update layer $\layer \gets \layer\cup\{A\}$.
        \EndIf
        \State Reduce pool $\pool'\gets\pool'\setminus\nset_{\textrm{G}}(\pool', A)$
        \If{$\pool'=\emptyset$}
            \textbf{break}
        \EndIf
    \EndFor
    \Return $\layer$
\end{algorithmic}
\end{algorithm}
In Static-ADAPT-VQE, static layering is used to grow an ansatz circuit iteratively.
In each iteration, the layer is appended to the ansatz circuit, and the ansatz-circuit parameters are re-optimized.
Iterations halt once the decrease in energy falls below $\accuracy$, the energy accuracy per ansatz element.
A summary of Static-ADAPT-VQE is given in \cref{alg: StaticLayeredADAPTVQE}.
\begin{algorithm}[H]
    \caption{Static-ADAPT-VQE}
    \label{alg: StaticLayeredADAPTVQE}
    \begin{algorithmic}[1]
        \State Initialize state $\rho_0\gets\rho_{\HF}$, ansatz circuit 
$\Lambda_0\gets1$, pool $\mathcal P$.
        \State Initialize energy bound $\bound_0\gets\infty$ and accuracy $\varepsilon$.
        \State Initialize maximal loss $\Lmax$ and iteration count $\nmax$.
        \For{$t=1,...,\tmax$}
            \State Get layer: $\layer_t \gets \mathrm{BuildStaticLayer}(\pool, 
                                                 L_{\Lambda_{t-1}}, \Lmax, \nmax)$
            \State Set ansatz circuit: 
                   $\Lambda_t(\vec\vartheta_t) \gets 
                   \layer^{\text{o}}_t(\vec\theta_t)\circ \Lambda_{t-1}(\vec\vartheta_{t-1})$
            \State Optimize ansatz circuit:
                   $\vec\vartheta_t^{*} \gets \argmin{E_t(\vec\vartheta_t)}$
            \State Set ansatz circuit:
                   $\Lambda_t \gets \Lambda_t(\vec\vartheta_t*)$
            \State Update energy bound:
                   $\bound_t \gets E_t(\theta_t^{*},...,\theta_1^{*})$ 
            \If{$\bound_{t-1}-\bound_t<\varepsilon\left|\layer_t\right|$}
                \State \textbf{return} energy bound $\bound_{t}$
            \EndIf
        \EndFor
        \State \textbf{return} energy bound $\bound_{\tmax}$
    \end{algorithmic}
    
\end{algorithm}
\begin{figure}[]
    \includegraphics[page=1, width=\columnwidth]{./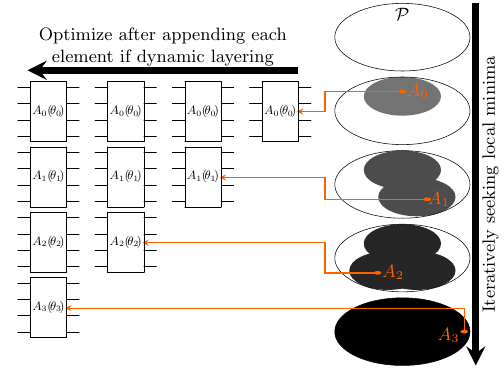}
\caption{Visualization of layer construction and successive pool reduction. Gray areas indicate the removal of generalized non-commuting sets corresponding to ansatz elements $A_n$ added to the layer $\layer$. Parameters can either be optimized once the whole layer is fixed (static layering) or after adding each ansatz element (dynamic layering).}
\label{fig: Layering}
\end{figure}
We establish the close relationship between static layering and TETRIS-ADAPT-VQE in the following property.
\begin{property}
\label{prop: StaticLayering}
Assume that all ansatz elements $A, B\in\pool$ have distinct loss $L(A)\neq L(B)$.
Using support commutativity and provided that $\Lmax=0$ and $\nmax$ are sufficiently large to ensure that the whole layer is filled, Static-ADAPT-VQE and TETRIS-ADAPT-VQE will produce identical ansatz-element layers.
\end{property}
This property is proven by induction.
Assume that the previous iterations of ADAPT-VQE have yielded a specific ansatz circuit $\Lambda_{t-1}(\vec{\vartheta}^{*})$.
The next layer of ansatz elements $\layer_t$ can be constructed either by Static-ADAPT-VQE or TETRIS-ADAPT-VQE.
For both algorithms, the equivalence of $\Lambda_{t-1}(\vec{\vartheta}^{*})$ implies that the loss function, \cref{eq: LayerLoss}, of any ansatz element is identical throughout the construction of the layer $\layer_t$.
First, by picking $\Lmax=0$, we ensure that both TETRIS-ADAPT-VQE and Static-ADAPT-VQE only accept ansatz elements with a non-zero gradient.
Next, we note that if an ansatz element is placed on a qubit by Static-ADAPT-VQE, then by Property \ref{prop: non-blocking}, there exists no ansatz element that acts actively on this qubit and generates a lower loss.
Moreover, there exists no ansatz element with identical loss that acts nontrivially on qubit, as we assume that all ansatz elements have a distinct loss.
Similarly, TETRIS-ADAPT-VQE places ansatz elements from lowest to highest loss and ensures no two ansatz elements have mutual support.  Thus, if an ansatz element is placed on a qubit by TETRIS-ADAPT-VQE, there exists no ansatz element with a lower loss that acts nontrivially on this qubit.
Again, there also exists no ansatz element with identical loss supported by this qubit, as we assume that all ansatz elements have a distinct loss.
Combining these arguments, both Static- and TETRIS-ADAPT-VQE will fill the ansatz-element layer $\layer_t$ with equivalent ansatz elements.
The ansatz elements may be chosen in a different order.
By induction, the equivalence of $\Lambda_{t-1}$ and $\layer_t$ implies the equivalence of the ansatz circuit $\Lambda_t$.

\begin{table*}[]
    \bgroup
    \def\arraystretch{1.5}
    \begin{tabular*}{\textwidth}{@{\extracolsep{\fill}}l|*{5}{c}} \toprule
        Name&\ce{H4}  &\ce{LiH} &\ce{H6} &\ce{BeH2}&\ce{H2O} \\\colrule
        Orbitals $N$&8&12&12&14&14\\\\
        Structure&

        \chemfig{@{a}H-[,,,,<->]@{b}H-[,,,,<->]@{c}H-[,,,,<->]@{d}H}
        \namebond{a}{b}{$d$}
        \namebond{b}{c}{$d$}
        \namebond{c}{d}{$d$}&
        
        \chemfig{@{a}Li-[,,,,<->]@{b}H}
        \namebond{a}{b}{$d$}&

\chemfig{@{a}H-[,,,,<->]@{b}H-[,,,,<->]@{c}H-[,,,,<->]@{d}H-[,,,,<->]@{e}H-[,,,,
<->]@{f}H}
        \namebond{a}{b}{$d$}
        \namebond{b}{c}{$d$}
        \namebond{c}{d}{$d$}
        \namebond{d}{e}{$d$}
        \namebond{e}{f}{$d$}&

        \chemfig{@{a}H-[,,,,<->]@{b}Be-[,,,,<->]@{c}H}
        \namebond{a}{b}{$d$}
        \namebond{b}{c}{$d$}&
    
        \chemfig{-[::-52.25, 
0.5,,,draw=none]@{a}H-[::104.5,,,,<->]@{b}O-[::-104.5,,,,<->]@{c}H}
        \namebond{a}{b}{$d$}
        \namebond{b}{c}{$d$}
        \arclabel{0.5cm}{a}{b}{c}{$\beta$}

\\&$d=\SI{3}{\text{\AA}}$&$d=\SI{1.546}{\text{\AA}}$&$d=\SI{0.735}{\text{\AA}}$&$d=\SI{1.316}
{\text{\AA}}$&$d=\SI{1.0285}{\text{\AA}},$
        \\&&&&&$\beta=\SI{96.84}{\degree}$
        \\\botrule
    \end{tabular*}
    \egroup
    \caption{Table of molecular conformations and the corresponding number of spin-orbitals $N$ used in numerical simulations.}
\label{table: geometries}
\end{table*}

\subsubsection{Dynamic layering}

In static layering, ansatz-circuit parameters are optimized after appending a whole layer with several ansatz elements.
In dynamic layering, on the other hand, ansatz-circuit parameters are re-optimized every time an ansatz element is appended to a layer.
The motivation for doing so is to simplify the optimization process. The price is having to run the global optimization more times. We now describe how to perform dynamic layering.

The starting point is a given ansatz circuit $\Lambda$, a set of optimal parameters $\vec\vartheta^{*}$ [\cref{eq:OptimalParameterVector}] and their corresponding energy bound $\bound\equiv E_{\Lambda}(\vec\vartheta^{*})$.
The remaining pool $\pool'$ is initiated to be the entire pool $\pool$. 
Starting from an empty layer $\layer'$ and a temporary ansatz circuit $\Lambda'=\Lambda$, a layer is constructed dynamically by iteratively adding ansatz elements $A$ to $\layer'$ and $\Lambda'$ while simultaneously re-optimizing the ansatz-circuit parameters $\vec\vartheta^{*}$.
Based on the loss $L'$ induced by the currently optimal ansatz circuit $\Lambda'(\vec\vartheta^{*})$, subpool exploration is used to select ansatz elements $A$.
Simultaneously, the pool of remaining ansatz elements $\pool'$ is shrunk by the successive removal of the generalized non-commuting sets $\nset_{\textrm{G}}(\pool', A)$.
Finally, ansatz elements are only added to the layer $\layer$ if their loss is below a threshold $\Lmax$ and the updated energy bound $\bound'$ exceeds a gain threshold of $\varepsilon$.
A pseudocode summary is given in \cref{alg:BuildDynamicLayer}

Dynamic-ADAPT-VQE iteratively builds dynamic layers $\layer_t$ and appends those to the ansatz circuit $\Lambda_{t-1}$. 
The procedure is repeated until an empty layer is returned; that is, no ansatz element is found that reduces the energy by more than $\varepsilon$.
Alternatively, the algorithm halts when the (user-specified) maximal iteration count $\tmax$ is reached.
A pseudocode summary is given in \cref{alg: DynamicallyLayeredADAPTVQE}.
\begin{algorithm}[H]
   \caption{Build Dynamic Layer}
    \label{alg:BuildDynamicLayer}
    \begin{algorithmic}[1]
        \State \textbf{Input:} Ansatz $\Lambda$, bound $\bound$, opt. 
    params $\vec{\vartheta}^{*}$ 
        \State \textbf{Get:} Pool $\pool$, accuracy $\accuracy$, max. loss $\Lmax$, 
    $\nmax$
        \State \textbf{Initialize:} pool $\pool'\gets\pool$, layer 
                        $\layer'\gets\emptyset$, ansatz circuit 
$\Lambda'\gets\Lambda$
        \For{$n=0,...,\nmax$}
            \State Update loss $L' \gets L_{\Lambda'(\vec{\vartheta}^{*})}$
            \State Set $A \gets \mathrm{SubpoolExploration}(\pool',L')$
            \If{$L'(A) < \Lmax$}
                \State Minimize $(\theta^{*},\vec\vartheta^{*})\gets 
                            \argmin_{(\theta, \vec\vartheta)}
                                    E_{A\circ\Lambda'}(\theta, \vec\vartheta)$
                \State Set bound $\bound'\gets E_{A\circ\Lambda'}(\theta^{*}, 
                                                                \vec\vartheta^{*})$
                \If{$\bound-\bound'\ge\accuracy$}
                    \State Update layer $\layer \gets \layer\cup\{A\}$
                    \State Update ansatz circuit $\Lambda'\gets A\circ\Lambda'$
                    \State Update opt. params 
                        $\vec{\vartheta}^{*}\gets(\theta^{*},\vec{\vartheta}^{*})$
                \EndIf
            \EndIf
            \State Reduce pool $\pool'\gets\pool'\setminus\nset_{\textrm{G}}(\pool', A)$
            \If{$\pool'=\emptyset$} \textbf{break} \EndIf
        \EndFor
        \Return Layer $\layer'$, optimal params $\vec{\vartheta}^{*}$, bound $\bound$.
    \end{algorithmic}
\end{algorithm}
\begin{algorithm}[H]
   \caption{Dynamic-ADAPT-VQE}
    \label{alg: DynamicallyLayeredADAPTVQE}
    \begin{algorithmic}[1]
        \State Initialize state $\rho_0\gets\rho_{\HF}$, ansatz circuit 
$\Lambda_0\gets1$, pool $\mathcal P$.
        \State Initialize accuracy $\varepsilon$ and maximal loss $\Lmax$.
        \State Initialize iteration counts $\tmax, \nmax$.
        \State Initialize energy bound $\bound_0\gets\infty$.
        \State Initialize optimal params $\vartheta_0^{*}$ as empty vector.
        \For{$t=1,...,\tmax$}
            \State $\layer_t, \vec\vartheta_t^{*}, \bound_t \gets 
                        \mathrm{BuildDynamicLayer}(L_{\Lambda_{t-1}}, 
                                          \vec\vartheta_{t-1}^{*}, \bound_{t-1})$
            \If{$\layer_t=\emptyset$}
                \State \textbf{return} energy bound $\bound_{t}$
            \EndIf
            \State Set ansatz circuit: $\Lambda_t \gets \layer^{\text{o}}_t \circ 
\Lambda_{t-1}$
        \EndFor
        \State \textbf{return} energy bound $\bound_{\tmax}$
    \end{algorithmic}    
\end{algorithm}

\section{Benchmarking Noiseless Performance}
\label{sec: noiseless}

In this section, we benchmark various aspects of subpool exploration and layering in noiseless settings.
To this end, we use numerical state-vector simulations to study a wide variety of molecules summarized in \cref{table: geometries}.
While \ce{BeH2} and \ce{H2O} are among the larger molecules to be benchmarked, \ce{H4} and \ce{H6} are prototypical examples of strongly correlated systems.
Our simulations demonstrate the utility of subpool exploration in reducing quantum-processor calls.
Further, we show that when compared to standard ADAPT-VQE, both Static- and Dynamic-ADAPT-VQE reduce the ansatz circuit depths to similar extents.
All simulations use the QEB pool because it gives a higher resilience to noise than the fermionic pool and performs similarly to the qubit pool \cite{KieransNoise}.
Moreover, unless stated otherwise, we use support commutativity to ensure that Static-ADAPT-VQE produces ansatz circuits equivalent to TETRIS-ADAPT-VQE.

\subsection{Efficiency of subpool exploration}\label{sec: explore results}

\begin{figure}[t]
    \includegraphics[width=\columnwidth, 
page=1]{./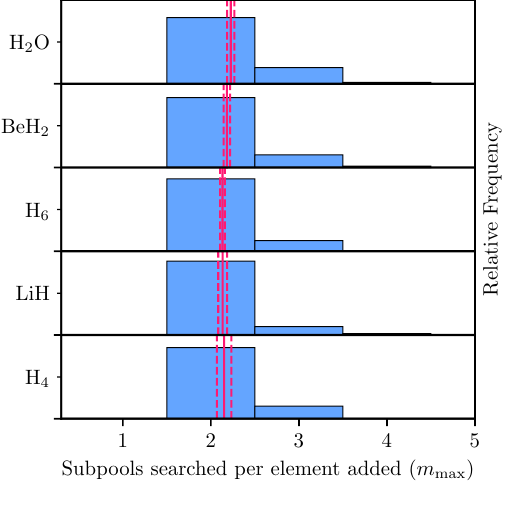}
    \caption{Histograms of the relative frequencies of the number of subpools searched for identifying a suitable ansatz element $\mmax$ with Explore-ADAPT-VQE. The mean and uncertainty in the mean are indicated by solid and dashed lines, respectively.}
\label{fig: PoolExplorationResults}
\end{figure}

We begin by illustrating the ability of subpool exploration to reduce the number of loss function calls when searching for a suitable ansatz element $A$ to append to an ansatz circuit.
To this end, we present Explore-ADAPT-VQE (ADAPT-VQE with subpool exploration) using the QEB pool, \cref{eq: QEB}, and operator commutativity.
We set the initial subpool, $\mathcal S_0$, such that it consists of a single ansatz element selected uniformly at random from the pool.
To provide evidence of a reduction in the number of loss-function calls, we track the number of subpools searched, $\mmax$, to find a local minimum.
The results are depicted in \cref{fig: PoolExplorationResults}.
There is a tendency to terminate subpool exploration after visiting two or three subpools.
This should be compared with the maximum possible QEB-pool values of $\mmax$: $N-2=6, 10, 10, 12, 12$ for \ce{H4}, \ce{LiH}, \ce{H6}, \ce{BeH2}, and \ce{H2O}, respectively. Thus, \cref{fig: PoolExplorationResults} shows that subpool exploration reduces the number of loss-function calls in the cases tested.

\subsection{Reducing ansatz-circuit depth}\label{sec: circuit depth}

\begin{figure}
\includegraphics[width=\columnwidth
]{./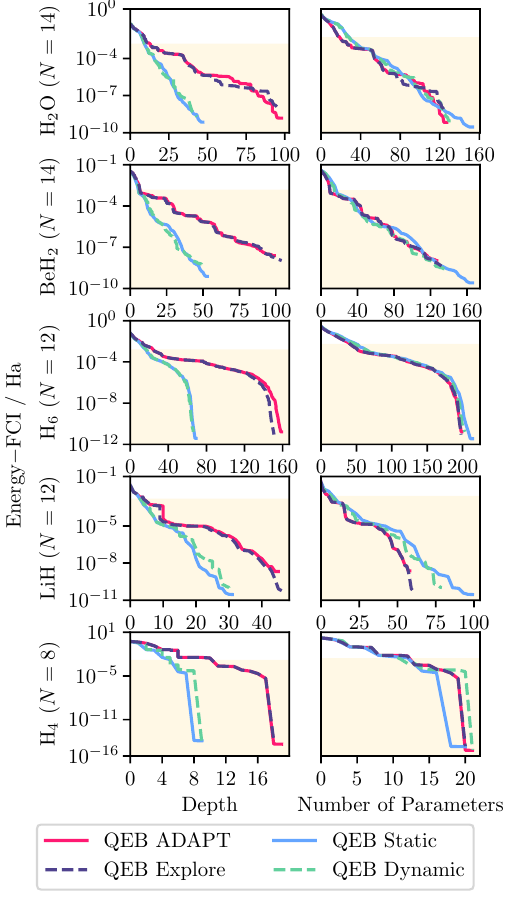}
\caption{The energy accuracy is plotted as a function of ansatz-circuit depth (left) and the number of ansatz-circuit parameters (ansatz elements, right), for QEB standard, Explore-, Static-(TETRIS)-, and Dynamic-ADAPT-VQE. These simulations use support commutation. Each row shows data for a specific molecule. The region of chemical accuracy is shaded.}
    \label{fig: DepthAndParameters}
\end{figure}
Next, we compare the ability of Static-(TETRIS)- and Dynamic-ADAPT-VQE to reduce the depth of the ansatz circuits as compared to standard and Explore-ADAPT-VQE.
The data is depicted in \cref{fig: DepthAndParameters}.
Here, we depict the energy error, 
\begin{align}
    \label{eq: EnergyError}
    \Delta_t = \bound_t - E_{FCI},
\end{align}
given as the distance of the VQE predictions $\bound_t$ from the FCI ground state energy $E_{FCI}$ as a function of (left) the ansatz-circuit depths and (right) the number of ansatz-circuit parameters.
The left column shows that layered ADAPT-VQEs achieve lower energy errors with shallower ansatz circuits.
Meanwhile, the right column demonstrates that all ADAPT-VQEs achieve similar energy accuracy with respect to the number of ansatz-circuit parameters.

\subsection{Reducing runtime}
\label{sec: numerical runtime}

\begin{figure}[t]
    \includegraphics[width=\columnwidth
    ]{./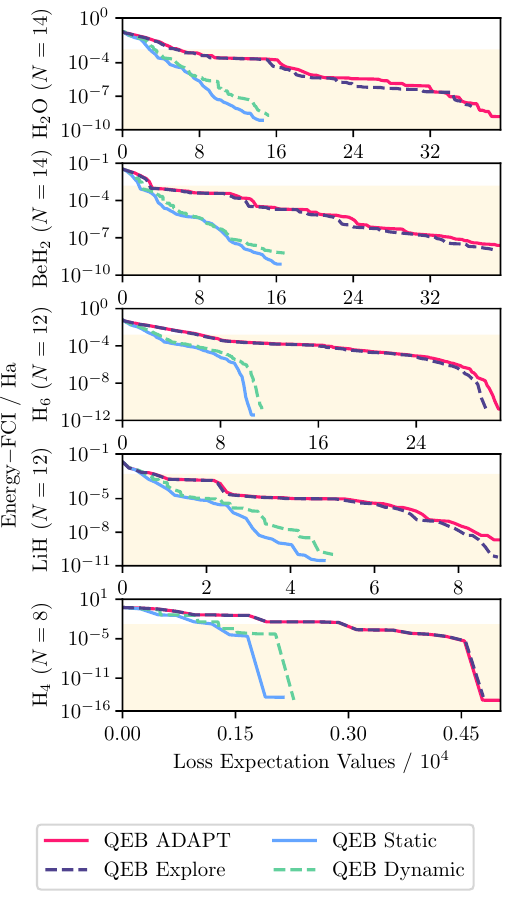}
    \caption{Energy accuracy against the number of loss function calls. Using the QEB pool and support commutation, we compare standard-, Explore-, Static-(TETRIS)-, and Dynamic-ADAPT-VQE. Each row shows data for a specific molecule, with the number of orbitals increasing up the page. Energy accuracies better than chemical accuracy are shaded in cream.}
        \label{fig: ProcessorCallsLoss}
\end{figure}
\begin{figure}[t]
    \includegraphics[width=\columnwidth
    ]{./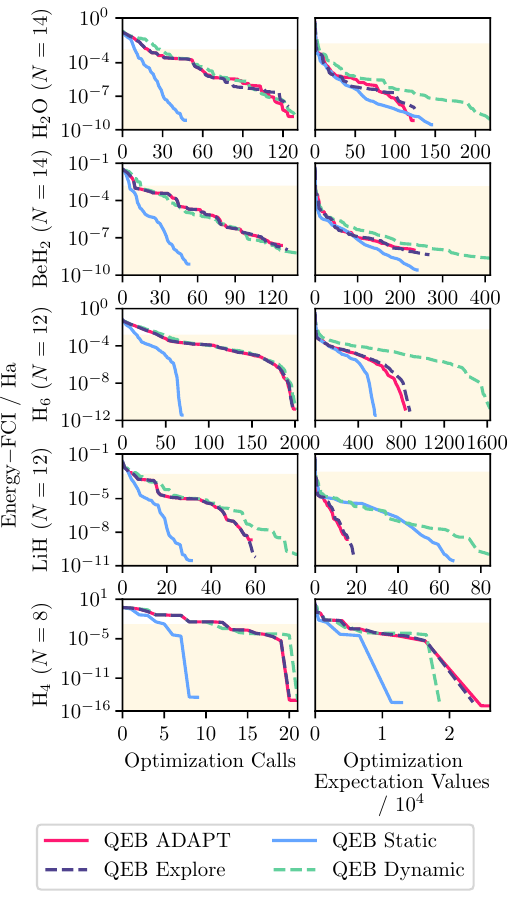}
    \caption{Energy accuracy against the number of times the ansatz is optimized (left column); and the number of expectation values calculated during optimizer calls (right column). Using the QEB pool and support commutation, we compare standard-, Explore-, Static-(TETRIS)-, and Dynamic-ADAPT-VQE. Each row shows data for a specific molecule, with the number of orbitals increasing up the page. Energy accuracies better than chemical accuracy are shaded in cream.}
        \label{fig: ProcessorCallsOpt}
    \end{figure}

In this section, we provide numerical evidence that subpool exploration and layering reduce the runtime of ADAPT-VQE.
A mathematical analysis of asymptotic runtimes will follow in \cref{sec: complexity}.
To provide evidence of a runtime reduction in numerical simulations, we show that layered ADAPT-VQEs require fewer expectation value evaluations (and thus shots and quantum processor runtime) to reach a given accuracy.
Our numerical results are depicted in \cref{fig: ProcessorCallsLoss,fig: ProcessorCallsOpt} for expectation-value evaluations related to calculating losses and parameter optimizations, respectively.
We now discuss our results.

To convert data accessible in numerical simulations (such as loss function and optimizer calls) into runtime data (such as expectation values and shots), we proceed as follows.
For our numerical data, we evaluate runtime in terms of the number of expectation value evaluations rather than processor calls or shots.
This is justified as the number of shots (or processor calls) is directly proportional to the number of expectation values in our simulations, as detailed in \cref{appendix: variance}.
Next, we evaluate the runtime requirements associated with loss-function evaluations by tracking the number of times a loss function is called.
The evaluation of the loss function over a subpool $\mathcal S$ is recorded as $\left|\mathcal S\right|+1$ expectation-value evaluations, assuming the use of a finite difference rule.
Thus we produce the data presented in \cref{fig: ProcessorCallsLoss}.
Finally, we evaluate the runtime requirements of the optimizer by tracking the number of energy expectation values or gradients it requests.
The gradient of $P$ variables is then recorded as $P+1$ energy expectation value evaluations, assuming the use of a finite difference rule.
This gives the data in \cref{fig: ProcessorCallsOpt}.

In \cref{fig: ProcessorCallsLoss}, we show that layered ADAPT-VQEs require fewer loss-related expectation-value evaluations to reach a given energy accuracy. 
We attribute this advantage to subpools gradually shrinking during layer construction.
They thus require fewer loss function evaluations per ansatz element added to the ansatz-element circuit.
We further notice that Explore-ADAPT-VQE does not reduce the loss-related expectation values required for standard ADAPT-VQE.
We attribute this result to our examples' small pool sizes, with only 8 to 14 qubits.
As qubit sizes increase, we expect a more noticeable advantage for Explore-ADAPT-VQE, as discussed in Sec.~\ref{sec: complexity}.

In \cref{fig: ProcessorCallsOpt} (left), we show that Static-ADAPT-VQE reduces the number of optimizer calls needed to reach a given accuracy. 
As expected, the left column shows that Static-ADAPT-VQE calls the optimizer $\bigO{N}$ times less than any other algorithm.
This is expected, as standard, Explore-, and Dynamic-ADAPT-VQE calls the optimizer each time a new ansatz element is added to the ansatz-element circuit.
Meanwhile, Static-ADAPT-VQE calls the optimizer only after adding a whole layer of $\bigO{N}$ ansatz elements to the ansatz-element circuit.
In \cref{fig: ProcessorCallsOpt} (right), we analyze how the reduced number of optimizer calls translates to the number of optimizer-related expectation values required to reach a given accuracy.
The data was obtained using a BFGS optimizer with a gradient norm tolerance of $10^{-12}$ Ha and a relative step tolerance of zero.
Compared to the optimizer calls on the left of the figure, we notice two trends.
Dynamic-ADAPT-VQE, while being on par with standard and Explore-ADAPT-VQE for optimizer calls, tends to use a higher number of expectation value evaluations.
Similarly, Static-ADAPT-VQE, while having a clear advantage over standard and Explore-ADAPT-VQE for optimizer calls, tends to have a reduced advantage (and for LiH, even a disadvantage) when it comes to optimizer-related expectation value evaluations.
These observations hint towards an increased optimization difficulty for layered ADAPT-VQEs.
These observations may be highly optimizer dependant and should be further investigated in the future.

\subsection{Additional bemchmarks}

We close this section by referring the reader to additional benchmarking data presented in the appendices.
In \cref{appendix: commutativity vs. support}, we compare support to operator commutativity for the qubit pool. In \cref{appendix: Largest Energy Reduction Decision}, we compare the steepest-gradient loss to the largest-energy-reduction loss. We also compare the QEB pool to the qubit pool in \cref{appendix: Largest Energy Reduction Decision}.

\section{Runtime analysis}
\label{sec: complexity}

In this section, we analyze the asymptotic runtimes of standard, Explore-, Dynamic-, and Static-ADAPT-VQE.
We find that under reasonable assumptions, Static-ADAPT-VQE can run up to $\bigO{N^2}$ faster than standard ADAPT-VQE.
In what follows, we quantify asymptotic runtimes using $\bigO{x}$, $\bigOmega{x}$, or $\bigTheta{x}$ to state that a quantity scales at most, at least, or exactly with  $x$, respectively.
For definitions, see \cref{appendix: big O}.
We begin our runtime analysis by listing some observations, assumptions, and approximations.
\begin{itemize}
\item[(a)] Each algorithm operates on $N$ qubits.
\item[(b)] Ansatz circuits are improved by successively adding ansatz elements with a single parameter to the ansatz circuit.
This results in  iterations $p=1,...,P$, where the $p$th ansatz circuit has $p$ parameters.
\item[(c)] In each iteration $p$, the algorithm spends runtime on evaluating $N_L(p)$ loss functions.
\item[(d)] In each iteration $p$, the algorithm spends runtime on optimizing $p$ circuit parameters.
\item[(e)] Using the finite difference method, we approximate each of the $N_L(p)$ loss functions in (c) by using two energy-expectation values.
This results in evaluating at most $2 N_L(p)$ energy expectation values on the quantum computer in the $p$th iteration.
\item[(f)] We assume that the optimizer in (d) performs a heuristic optimization of ansatz circuits with $p$ parameters in polynomial time.
Thus, in the $p$th iteration, a quantum computer must conduct $N_O(p) = \bigTheta{p^{\alpha}}$ evaluations of the energy landscape and  $N_O(p)= \bigTheta{p^{\alpha}}$ evaluations of energy expectation values.
\item[(g)] For each energy expectation value in (e) and (f), we assume that a constant number of shots $N_S =\bigTheta{1}$ is needed to reach a given accuracy.
This is a standard assumption in VQE \cite{QCCReview, TILLY20221}, and further details justifying this assumption are discussed in \cref{appendix: variance}.
\item[(h)] For each shot in (g), one must execute an ansatz circuit with $p$ ansatz elements.
Here, we assume that the runtime $C(p)$ of an ansatz circuit with $p$ ansatz elements is proportional to its depth $d(p)$, i.e., $C(p)=\bigTheta{d(p)}$.
\end{itemize}
Combining (e,g,h) and (f,g,h), we can estimate the runtime each algorithm spends on evaluating losses and performing the optimization, respectively:
\begin{subequations}
\begin{align}
 R_L &= \sum_{p=1}^{P} 2 N_L(p) N_S C(p) 
     = \sum_{p=1}^{P} N_L(p) \bigTheta{d(p)}, \\
 R_O &= \sum_{p=1}^{P} N_O(p) N_S C(p)
     = \sum_{p=1}^{P} \bigTheta{p^{\alpha}} \bigTheta{d(p)}.
\end{align}
\end{subequations}
Below we analyze further these runtime estimates for standard, Explore-, Dynamic-, and Static-ADAPT-VQE.

In standard ADAPT-VQE, we re-evaluate the loss of each ansatz element in every iteration. Thus, $N_L(p) = |\pool|$.
Moreover, the circuit depth $d(p)$ is upper bounded by $d(p)=\bigO{p}$.
In the best-case scenario, ADAPT-VQE may arrange ansatz elements into layers accidentally. (An effect more likely for large $N$.)
This can compress the circuit depths down to $d=\bigOmega{p/N}$.
We summarize this range of possible circuit depths using the compact expression $d(p)=\bigTheta{pN^{-\gamma}}$, with $\gamma\in[0,1]$.
In numerical simulations, we typically observe that $\gamma\approx0$, i.e., the depth of an ansatz circuit, is proportional to the number of ansatz elements.
These expressions for $N_L(p)$ and $d(p)$ allow us to estimate the runtime of standard ADAPT-VQE algorithms:
\begin{subequations}
 \label{eq: ADAPT-VQE}
\begin{align}
 R_L^{A} = |\pool|\bigTheta{P^2 N^{-\gamma}}, \\
 R_O^{A} = \bigTheta{P^{2+\alpha_A}  N^{-\gamma}}.
\end{align}
\end{subequations}

Explore-ADAPT-VQE results in circuits of the same depths as ADAPT-VQE, i.e., $d=\bigTheta{pN^{-\gamma}}$.
However, the use of subpool exploration in Explore-ADAPT-VQE may reduce the number of loss-function evaluations $N_L(p)$.
As discussed in \cref{sec:subpool exploration} (paragraph on \textit{Efficiency}), in the best case scenario, the number of loss function evaluations per iteration is lower bounded by $N_L(p) = \bigOmega{|\pool|/N}$.
In the worst case scenario, subpool exploration may explore the whole pool of ansatz elements, such that $N_L(p) = \bigO{|\pool|}$.
Based on these relations, we can estimate the runtime of Explore-ADAPT-VQE:
\begin{subequations}
 \label{eq: Explore-ADAPT-VQE}
\begin{align}
 R_L^{E} &= \begin{cases}
        |\pool| \bigOmega{P^2 N^{-(1+\gamma)}} \\
        |\pool| \bigO{ P^2 N^{-\gamma} }
       \end{cases}, \\
 R_O^{E} &= \bigTheta{P^{2+\alpha_E}  N^{-\gamma}}.
\end{align}
\end{subequations}

Dynamic-ADAPT-VQE has the same scaling of the number of loss function evaluations per iteration, $N_L(p)$, as  Explore-ADAPT-VQE.
Thus, $N_L(p) = \bigOmega{|\pool|/N}$ in the best case and $N_L(p) = \bigO{|\pool|}$ in the worst case.
The circuit depth of  Dynamic-ADAPT-VQE scales as $d(p)=\bigTheta{p/N}$.
One can observe a clear benefit from layering. The  upper bound, $d(p)=\bigO{p}$, in standard and Explore-ADAPT-VQE  becomes $d(p)=\bigO{p/N}$ in Dynamic-ADAPT-VQE.
Using these relations for $N_L(p)$ and $d(p)$, we can estimate the runtime of Dynamic-ADAPT-VQE:
\begin{subequations}
\begin{align}
 R_L^{D} &= \begin{cases}
        |\pool| \bigOmega{P^2 N^{-2}} \\
        |\pool| \bigO{P^2 N^{-1}}
       \end{cases}, \\
 R_O^{D} &= \bigTheta{P^{2+\alpha_D}  N^{-1}}.
\end{align}
\end{subequations}

The analysis of Static-ADAPT-VQE's runtime is more straightforward with respect to the layer count $t$ than to the parameter count $p$. Therefore, we revisit and modify our previous observations, assumptions, and approximations.
\begin{itemize}
\item[(a)] Static-ADAPT-VQE operates on $N$ qubits.
\item[(b)] Static-ADAPT-VQE builds ansatz circuits in  layers indexed by $t=1,...,\tmax$.
The $t$th layer contains $n_{\textrm{tot}}(t)=\bigTheta{N}$ ansatz elements.
Since each ansatz element depends on a single parameter, a layer contains $n_{\textrm{tot}}(t) = \bigTheta{N}$ circuit parameters.
Summing the parameters in each layer gives the total number of parameters in the circuit: $P = \sum_{t=1}^{\tmax}n_{\textrm{tot}}(t)$.
\item[(c)] For each layer $t$, Static-ADAPT-VQE spends runtime on evaluating the loss of $N_L(t)=|\pool|$ ansatz elements.
\item[(d)] For each layer $t$, Static-ADAPT-VQE  spends runtime on optimizing $p(t)=\sum_{t'=1}^{t}n_{\textrm{tot}}(t')=\bigTheta{N}t$ circuit parameters.
\item[(e)] Using the finite difference method, we approximate each of the $N_L(t)$ loss functions in (c) by using two energy expectation values.
This results in evaluating at most $2 N_L(t)=2|\pool|$ energy expectation values on the quantum computer in the $t$th iteration.
\item[(f)] Again, we assume that the optimizer in (d) performs a heuristic optimization of ansatz circuits with $p(t)$ parameters in polynomial time.
Thus, in the $t$th layer a quantum computer must conduct $N_O(t) = \bigTheta{p(t)^{\alpha_S}}$ evaluations of the energy landscape and  $N_O(t)= \bigTheta{p(t)^{\alpha_S}}$ evaluations of energy expectation values.
Using $p(t)=\bigTheta{N}t$ from (d), this implies that $N_O(t) = \bigTheta{N^{\alpha_S}t^{\alpha_S}}$.
\item[(g)] As before, for each energy expectation value in (e) and (f), we assume that a constant number of shots $N_S =\bigTheta{1}$ is needed to reach a given accuracy.
\item[(h)] For each shot in (g), one must execute an ansatz circuit with $p(t)$ ansatz elements.
Again, we assume that the runtime $C(t)$ of an ansatz circuit with $p(t)$ ansatz elements is proportional to its depth $d(p(t))$, i.e., $C(t)= \bigTheta{d(p(t))}$.
Due to layering, the circuit depth of Static-ADAPT-VQE scales as $d(p) = \bigTheta{p/N}$. (This scaling is identical for Dynamic-ADAPT-VQE.)
This results in $C(t)=\bigTheta{p(t)/N}$.
Further, using $p(t)=\bigTheta{N}t$ from (d), we find that each shot in (g) requires a circuit runtime of $C(t) =\bigTheta{t}$.
\end{itemize}
Combining the updated (e,g,h) and (f,g,h), we find the loss- and optimization-related runtimes of Static-ADAPT-VQE, respectively:
\begin{subequations}
\begin{align}
 R_L^S &= \sum_{t=1}^{\tmax}2 N_L(t) N_S C(t)
        = |\pool| \bigTheta{\tmax^2}, \\
 R_O^S &= \sum_{t=1}^{\tmax} N_O(t) N_S C(t) = 
\bigTheta{N^{\alpha_S}\tmax^{\alpha_S+2}}.
\end{align}
\end{subequations}
Since $P=\bigTheta{N}\tmax$ implies $\tmax=P \bigTheta{N^{-1}}$, we can simplify these runtime estimates:
\begin{subequations}
\begin{align}
 R_L^S &= |\pool| \bigTheta{P^2 N^{-2}}, \\
 R_O^S &= \bigTheta{P^{2+\alpha_S}  N^{-2}}.
\end{align}
\end{subequations}

We summarize this section by listing the ratios of asymptotic runtimes for Explore-, Dynamic-, and Static-ADAPT-VQE divided by the asymptotic runtime of standard ADAPT-VQE in \cref{table: ratios}.
Here, we assume equal polynomial scaling ($\alpha_A = \alpha_E = \alpha_D = \alpha_S$) of the optimization runtime for standard, Explore, Dynamic-, and Static-ADAPT-VQE.
As expected from our numerical runtime analysis in \cref{sec: numerical runtime}, for typical ADAPT-VQE circuit depth (where $\gamma=0$), Static-ADAPT-VQE can provide the largest runtime reduction. This reduction is quadratic in the number of qubits: $\bigTheta{N^{-2}}$.
Further improvements to bounding the number of losses in Explore- and Dynamic-ADAPT-VQE are discussed in \cref{appendix: sequences of support sets}.
\begin{table}[h]
    \bgroup
    \def\arraystretch{1.3}
    \centering
    \begin{tabularx}{\columnwidth}{lXlXl}\toprule
        Algorithm && $R_L/R_L^{\text{ADAPT}}$ 
                                       && $R_O/R_O^{\text{ADAPT}}$  
\\\colrule
        Explore   && $\bigOmega{N^{-1}}$, $\bigO{1}$ && $\bigTheta{1}$  
     \\
        Dynamic   && $\bigOmega{N^{-2+\gamma}}$, $\bigO{N^{-1+\gamma}}$ 
                       && $\bigTheta{N^{-1+\gamma}}$ \\
        Static    && $\bigTheta{N^{-2+\gamma}}$  
                       && $\bigTheta{N^{-2+\gamma}}$ \\
    \botrule
    \end{tabularx}
    \egroup
    \caption{The ratio of the runtimes of the listed algorithms to the runtime of standard ADAPT-VQE. $N$ is the qubit number, and $P$ is the number of parameters in the final ansatz circuit. See text for further explanation.
}
\label{table: ratios}
\end{table}

\section{Noise}
\label{sec: noise}

\begin{figure*}[]
    \includegraphics[width=\textwidth, page=1]{./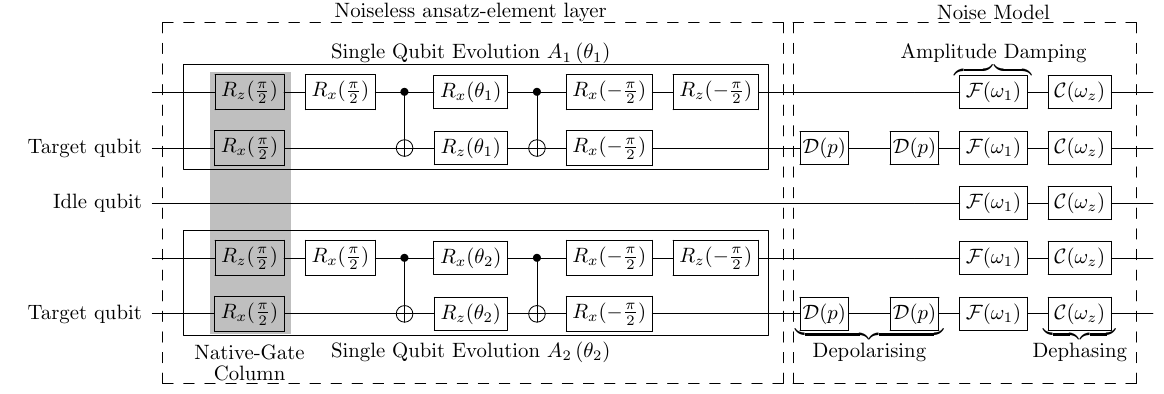}
    \caption{Circuit diagram visualizing the layer-by-layer noise model: on the left, a noiseless ansatz-element layer with two support-commuting ansatz elements is decomposed into columns of native gates. On the right, noise is added to the ansatz-element layer. For global amplitude damping (or dephasing) noise, the channel $\damp$ (or $\dephase$) is applied to each qubit. For local depolarizing noise, the channels $\depolarize$ are applied to the target of the noisy (two-qubit, CNOT) gate.}
    \label{fig: noise model}
\end{figure*}

In this section, we explore the benefits of reducing ADAPT-VQEs' ansatz-circuit depths with respect to noise.
Our main finding is that the use of layering to reduce ansatz-circuit depths mitigates global amplitude-damping and global dephasing noise, where idling and non-idling qubits are affected alike.
However, reduced ansatz-circuit depths do not mitigate the effect of local depolarizing noise, which exclusively affects qubits operated on by noisy (two-qubit, CNOT) gates.
The explanation for this, we show, is that the ansatz-circuit depth is a good predictor for the effect of global amplitude-damping and dephasing noise.
On the other hand, we show that the errors induced by local depolarizing noise are approximately proportional, not to the depth, but to the number of (CNOT) gates.
For this reason, a shallower ansatz circuit with the same number of noisy two-qubit gates will not reduce the sensitivity to depolarizing noise.

\subsection{Noise models}

Our noise models focus on superconducting architectures, where the native gates are arbitrary single-qubit rotations and two-qubit CZ or iSWAP gates \cite{10.1063/1.5089550}.
Further, we assume all-to-all connectivity.
We tune our analysis towards the  \lstinline{ibmq_quito} (IBM Quantum Falcon r4T) processor. For this processor, the quoted two-qubit gate times are for CNOT gates. Thus, we will take CNOT gates to be our native two-qubit gate.
To this end, our simulations use one- and two-qubit gate-execution times of \SI{35.5}{\ns} and \SI{295.1}{\ns}, respectively.\footnote{These values were taken from the IBM Quantum services.}
Similar native gates and execution times apply to silicon quantum processors \cite{Stano2022}.

In our simulations, we model \textit{amplitude damping} of a single qubit by the standard amplitude-damping channel. (For detailed expressions of the amplitude-damping channel and the other noise channels we use, see \cref{appendix: single qubit noise model}.)
Its decay constant is determined by the inverse $T_1$ time: $\omega_1=1/T_1$.
Similarly, we model \textit{dephasing} of a single qubit by the standard dephasing channel.
The $T_1$ and $T_2^*$ times determine its phase-flip probability via the decay constant $\omega_z=2/T_2^*-1/T_1$.
Finally, we model \textit{depolarization} of a single qubit by a symmetric depolarizing channel with depolarization strength $p \in [0,1]$, where $p=0$ leaves a pure qubit pure and $p=1$ brings it to the maximally mixed state.

In our simulations, we model the effects of amplitude damping, dephasing, and depolarizing noise on the ansatz circuits $\Lambda_t$ in a layer-by-layer approach.
This is illustrated in \cref{fig: noise model}.
We decompose the ansatz circuit $\Lambda_t$ into $l = 1, ..., L_t$ layers of support-commuting ansatz-element layers $\left\{\layer_l\right\}$:
\begin{align}
    \label{eq: LayerDecomposition}
 \Lambda_t = \layer^{\text{o}}_{L_t} \circ\cdots\circ \layer^{\text{o}}_l 
\circ\cdots\circ \layer^{\text{o}}_1.
\end{align}
For amplitude damping and dephasing noise, each ansatz-element layer $\layer^{\text{o}}_l$ is transpiled into columns of native gates that can be implemented in parallel (see Ref.~\cite{EfficientCNOT} for more details).
The native gate with the longest execution time of each native-gate column sets the column execution time.
The sum of the column execution times then gives the execution time $\dt_l$ of the ansatz-element layer $\layer^{\text{o}}_l$.
After each ansatz-element layer $\layer^{\text{o}}_l$, amplitude damping is implemented by applying an amplitude-damping channel to every qubit $r=1,..., 
N$ 
in an amplitude-damping layer.
This results in an amplitude-damped ansatz circuit $\Lambda_t(\omega_1)$.
Similarly, after each ansatz-element layer $\layer^{\text{o}}_l$, dephasing is implemented by applying a dephasing channel to every qubit $r=1,..., N$ in a dephasing layer.
This results in a dephased ansatz circuit $\Lambda_t(\omega_z)$.
Finally, for depolarizing noise, we apply the whole ansatz-element layer and then a depolarizing channel to each qubit. The strength of a qubit's depolarizing channel is determined by the exact number of times that the qubit was the target in a CNOT gate in the preceding layer.
This results in a depolarized ansatz circuit $\Lambda_t(p)$.
For a visualization of our layer-by-layer-based noise model, see \cref{fig: noise model}. For detailed mathematical expressions, see \cref{appendix: layer noise model}.

We note that applying the noise channels after each ansatz-element layer could be refined by applying the noise channels after each gate in the ansatz-element layer.
However, as shown in Ref.~\cite{KieransNoise}, such a gate-by-gate noise model, as opposed to our layer-by-layer based noise model, would increase computational costs and has limited effect on the results.
In what follows, we collectively refer to amplitude-damped ansatz circuits $\Lambda_t(\omega_1)$, dephased ansatz circuits $\Lambda_t(\omega_z)$, and depolarized ansatz circuits $\Lambda_t(p)$ as $\Lambda_t(\alpha)$.
Here, $\alpha$ refers to the key noise parameters $\omega_1$, $\omega_z$, or $p$ of each respective noise model.

\subsection{Energy error and noise susceptibility}

\begin{figure}[b]
    \includegraphics[width=\columnwidth, 
page=1]{./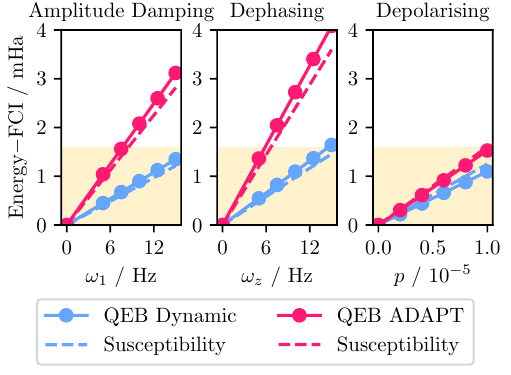}
    \caption{Energy error $\Delta_t(\alpha)$ for an ansatz circuit $\Lambda_t$ of \ce{H4} as a function of noise strength $\alpha$. Connected dots are calculated using full density-matrix simulations. Dashed lines show the corresponding extrapolation using noise susceptibility.}
    \label{fig: NoiseSuscpetibility}
\end{figure}
Going forward, we analyze the effect of noise on the energy error [c.f. \cref{eq: EnergyError}]
\begin{align}
    \label{eq:EnergyErrorNoise}
    \Delta_t(\alpha) = \bound_t(\alpha) - E_{FCI} .
\end{align}
$\Delta_t(\alpha)$ now depends, not only on the iteration step $t$, but also the noise parameter $\alpha$, via the noise-dependent expectation value
\begin{align}
    \label{eq:EnergyErrorNoiseTwo}
    \bound_t(\alpha) = \trace\left[H\Lambda_t(\alpha)\left[\rho_0\right]\right].
\end{align}
To analyze the energy error, we expand the methodology of Ref.~\cite{KieransNoise}.
More specifically, we decompose the energy error into two  contributions:
\begin{align}
  \Delta_t(\alpha) = \Delta_t(0) + \left[\Delta_t(\alpha)-\Delta_t(0)\right].
\end{align}
The first term, $\Delta_t(0)$, is the energy error of the noiseless ansatz circuit. 
The second term, $\Delta_t(\alpha)-\Delta_t(0)$, is the energy error due to noise.
Subsequently, we Taylor expand the energy error due to noise to first order:
\begin{align}
  \left[\Delta_t(\alpha)-\Delta_t(0)\right]=\chi_t\alpha + \bigO{\alpha^2}.
\end{align}
As depicted in \cref{fig: NoiseSuscpetibility}, in the regime of small noise parameters $\alpha$ (where energy errors are below chemical accuracy), the linear approximation is an excellent predictor for the energy error.
Conveniently, this allows us to summarize the effect of noise on the energy error through the noise susceptibility $\chi_t$, defined as
\begin{align}
    \label{eq: NoiseSusceptibilityDef}
  \chi_t \coloneqq
  \left.\frac{\partial\bound_t(\alpha)}{\partial \alpha} \right|_{\alpha=0}
  = \trace\left[H
    \left.\frac{\partial\Lambda_t(\alpha)}{\partial \alpha}\right|_{\alpha=0}
            \left[\rho_0\right]
          \right].
\end{align}
In \cref{appendix: noise susceptibility}, we calculate the noise susceptibility $\chi_t$ of amplitude damping $\damp$, dephasing $\dephase$, and depolarizing $\depolarize$ noise:
\begin{subequations}
\label{eq: NoiseSusceptibilityExplicit}
\begin{align}
 \chi_t^\damp &\coloneqq
 \left.\frac{\partial\bound_t(\omega_1)}{\partial\omega_1}\right|_{\omega_1=0}
 = L_t N \times d\bound(\Lambda_t,\damp),\\
 \chi_t^\dephase &\coloneqq
 \left.\frac{\partial\bound_t(\omega_z)}{\partial\omega_z}\right|_{\omega_z=0}
 = L_t N \times d\bound(\Lambda_t,\dephase),\\
  \chi_t^\depolarize &\coloneqq
 \left.\frac{\partial\bound_t(p)}{\partial p}\right|_{p=0}
 = N_{II} \times d\bound(\Lambda_t,\depolarize),
\end{align}
\end{subequations}
respectively.
Here, $N$ denotes the number of qubits; $L_t$ is the number of ansatz-element layers in the ansatz circuit $\Lambda_t$; $N_{II}$ is the number of noisy (two-qubit, CNOT) gates in the ansatz circuit $\Lambda_t$; and the $d\bound$'s denote the average energy fluctuations, defined in Eqs.~\eqref{eq: AverageEnergyFluctuations} of \cref{appendix: noise susceptibility}.
As discussed further in \cref{appendix: noise susceptibility}, the average energy fluctuations can be calculated from noiseless expectation values.
This allows us to compute the noise susceptibility with faster state-vector simulations rather than computationally demanding density-matrix simulations.

\subsection{Benchmarking layered circuits with noise}

\begin{figure}[]
    \includegraphics[width=\columnwidth, 
page=1]{./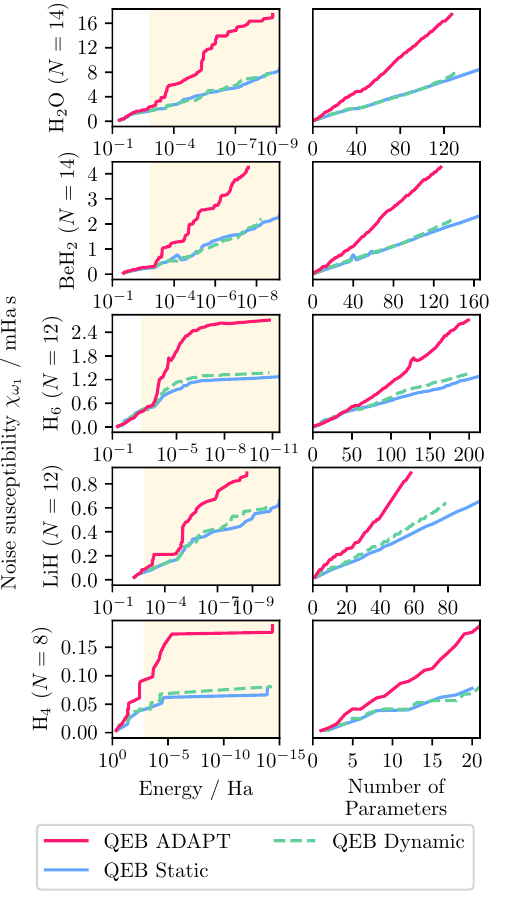}
\caption{Amplitude-damping noise susceptibility as a function of (left) energy accuracy and (right) the number of parameters in ansatz circuits $\Lambda_t$. QEB-ADAPT is compared to support-based Dynamic-ADAPT-VQE. Each row shows data for a specific molecule, with the number of orbitals increasing up the page. Energy accuracies better than chemical accuracy are shaded in cream.}
    \label{fig: amplitude damping}
\end{figure}
\begin{figure}[]
    \includegraphics[width=\columnwidth, 
page=1]{./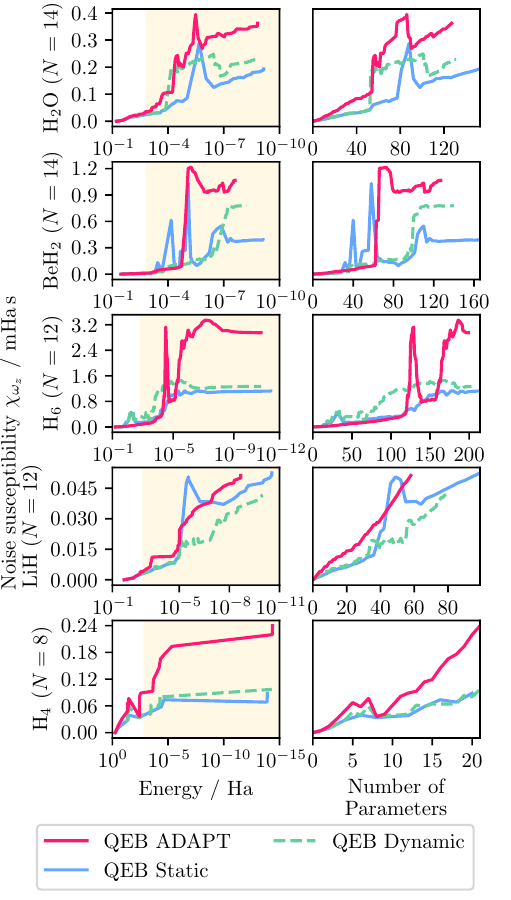}
\caption{Same as \cref{fig: amplitude damping}, but for dephasing noise.}
    \label{fig: dephasing}
\end{figure}
\begin{figure}[]
    \includegraphics[width=\columnwidth, 
page=1]{./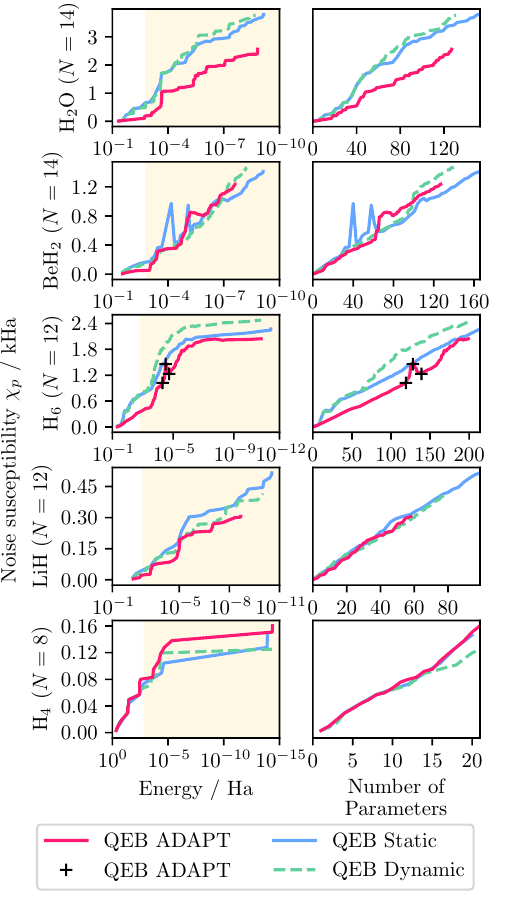}
\caption{Same as \cref{fig: amplitude damping}, but for depolarizing noise. In addition, black crosses correspond to density matrix simulations corroborating noise susceptibility via finite differences. The black crosses are discussed further in \cref{appendix: noise susceptibility peaks}.}
    \label{fig: depolarizing}
\end{figure}

In this section, we compare the noise susceptibility of standard, Static- (TETRIS-), and Dynamic-ADAPT-VQE in the presence of noise.
As before, we showcase these algorithms on a range of molecules (summarized in \cref{table: geometries}) using the QEB pool with support commutativity.
When performing our comparison, we grow the ansatz circuits $\Lambda_t$ and optimize its parameters in noiseless settings, as previously discussed in 
\cite{KieransNoise}.
We then compute the noise susceptibility of $\Lambda_t$ as described in the previous section.
The results for amplitude damping, dephasing, and depolarizing noise are depicted in \cref{fig: amplitude damping}, \cref{fig: dephasing}, and \cref{fig: depolarizing}, respectively.
In all three figures, we plot the noise susceptibility as a function of (left) the noiseless energy accuracy $\Delta_t(0)$ or (right) the number of parameters.
The rows of each plot depict different molecules in order of increasing spin orbitals from bottom to top: \ce{H4}, \ce{H6}, \ce{LiH}, \ce{BeH2}, and \ce{H2O}.

\textit{Layering benefits:---}
From \cref{fig: amplitude damping}, it is evident that layering is successful in mitigating the effect of amplitude-damping noise.
Here, we observe that the noise susceptibility of Static- and Dynamic-ADAPT-VQE is approximately half that of standard ADAPT-VQE.
This is a clear indication that layering can reduce the effect of noise.
In \cref{fig: dephasing}, we observe that layering also tends to reduce the noise susceptibility in the presence of dephasing noise.
However, in this scenario, the advantage is less consistent across different ansatz circuits and molecules.
Finally, in \cref{fig: depolarizing}, we observe that for depolarizing noise, all algorithms tend to produce similar noise susceptibilities.
Sometimes one shows an advantage over the other, and vice versa, depending on the ansatz circuit and molecule.
Our simulations indicate no clear disadvantage of using layering in the presence of depolarizing noise.
In summary, our numerical simulations suggest that layering is useful for mitigating global amplitude damping and dephasing noise.
Moreover, layering seems to have neither a beneficial nor a detrimental effect in the presence of local depolarizing noise.
In order to explain these findings, we further investigate the dependence of noise susceptibility on several circuit parameters in Sec.~\ref{sec: NoiseSusceptibilityScaling}.

\textit{Gate-fidelity requirements:---}
We now use the noise susceptibility data in \cref{fig: amplitude damping}, \cref{fig: dephasing}, and \cref{fig: depolarizing} to estimate the fidelity requirements for operating ADAPT-VQEs.
For this estimation, recall that quantum chemistry simulations of energy eigenvalues target an accuracy of $1.6$ $\mathrm{mHa}$.
To achieve this chemical accuracy, we require the energy error due to noise to be smaller than $\approx 1$ milli-Hartree: $\chi_t \alpha \lesssim 1 \mathrm{mHa}$.
Applying this condition to amplitude damping (where $\alpha=1/T_1$), dephasing (where $\alpha\approx1/T_2^*$), and depolarizing noise (where $\alpha=p$), we find  a set of gate fidelity requirements:
\begin{align}
 T_1 \gtrsim \frac{\chi_t^{\damp}}{1\mathrm{mHa}}, \quad
 T_2^* \gtrsim \frac{\chi_t^{\dephase}}{1\mathrm{mHa}}, \quad
 p \lesssim \frac{1\mathrm{mHa}}{\chi_t^{\depolarize}}.
\end{align}
The data presented in Figs.~\ref{fig: amplitude damping}, \ref{fig: dephasing}, and \ref{fig: depolarizing}, suggests the following requirements for the gate operations to enable chemically accurate simulations:
\begin{align}
 T_1 \gtrsim 1      \mathrm{s},\quad
 T_2^* \gtrsim 100      \mathrm{ms},\quad
 p   \lesssim 10^{-6}.
\end{align}
A more detailed breakdown of the maximal $p$ and minimal $T_1$ and $T_2^*$ times for each algorithm and molecule is presented in \cref{fig: best case parameters}.
These requirements are beyond the current state-of-the-art quantum processors \cite{ding2023highfidelity,Stano2022}.
How much these requirements can be improved by error-mitigation techniques \cite{QEM} remains an open question for future research.

\begin{figure}
    \includegraphics[width=\columnwidth, page=1]{./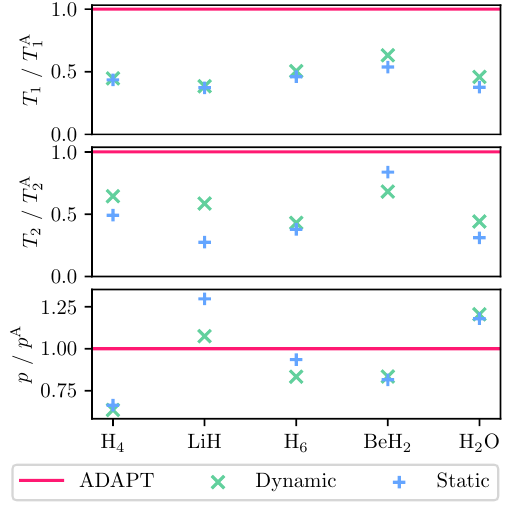}
    \caption{The ratio of the minimal $T_1$ and $T_2^*$ times and maximal depolarizing probability $p$ for Dynamic- and Static-ADAPT-VQE to standard ADAPT-VQE required to reach an accuracy of $\Delta=10^{-7}$ $\mathrm{mHa}$ for each molecule.}
\label{fig: best case parameters}
\end{figure}

\begin{figure*}[]
    \includegraphics[width=\textwidth, 
page=1]{./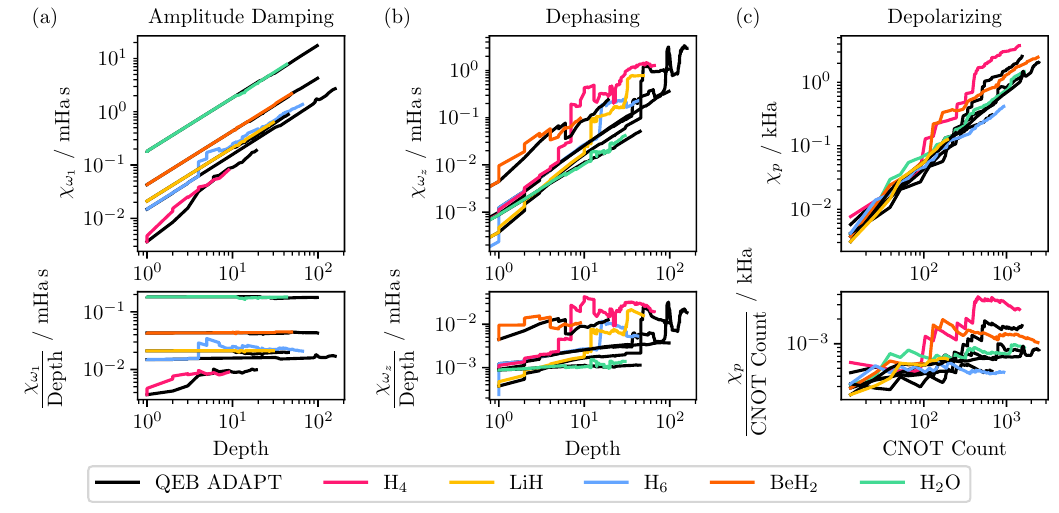}
\caption{Noise susceptibility for (a) amplitude-damping, (b) dephasing, and (c) depolarizing noise as a function of (a, b) ansatz-circuit depths $d$ and (c) the number of noisy CNOT-gates $N_{II}$. The top panels show noise susceptibility as a function of $d$ (a, b) and $N_{II}$ (c). The bottom panels of (a) and (b) show noise susceptibility divided by $d$, as a function $d$. The bottom panel of (c) shows noise susceptibility divided by $N_{II}$ as a function of $N_{II}$. The colored lines correspond to the simulation of Dynamic-ADAPT-VQE using support commutation for a range of molecules. The black lines represent simulations based on QEB-ADAPT-VQE.}
    \label{fig: predictors}
\end{figure*}

\subsection{Noise-susceptibility scalings}
\label{sec: NoiseSusceptibilityScaling}

In this section, we investigate the dependence of noise susceptibility on basic circuit parameters, such as the number of qubits $N$, circuit depth $d\propto L_t$, or the number of noisy (two-qubit, CNOT) gates $N_{II}$.
Our analysis will help in understanding why layering can mitigate global amplitude damping and dephasing noise but not local depolarizing noise.

We study numerically how noise susceptibility scales with circuit depth and the number of noisy (two-qubit, CNOT) gates $N_{II}$. The data is presented \cref{fig: predictors}.
The top panels show the noise susceptibility in the presence of amplitude damping (left), dephasing (center), and depolarizing noise (right) for various algorithms and molecules.
The noise-susceptibility data is presented on a log-log plot as a function of circuit depths $d$ (left and center) as well as $N_{II}$ (right), respectively.
From \cref{fig: predictors}, we find that the noise susceptibility scales roughly linearly with the plotted parameters. 
To further analyze this rough linearity, we produce a log-log plot in the bottom panels of $\chi_t^{\damp}/d$ (left), $\chi_t^{\dephase}/d$ (center), and $\chi_t^{\depolarize}/N_{II}$ (right) as a function of $d$, $d$, and $N_{II}$, respectively. Had the scalings of interest been linear, the bottom panels would have depicted constant curves. This is not entirely the case. But, the curves' deviations from constants are sufficiently sublinear to support our claim that the curves in the upper plots are roughly linear. 

The scalings observed in \cref{fig: predictors} confirm our previous intuition. Based on \cref{eq: NoiseSusceptibilityExplicit}, and using the assumption that $d\bound$ is roughly constant, we would expect that the noise susceptibility in the presence of amplitude damping or dephasing noise is proportional to  the circuit depth and the number of qubits:
\begin{align}
 \label{eq: ScalingADND}
\chi_t^{\damp} \appropto Nd \quad \text{and} \quad \chi_t^{\dephase} \appropto N d.
\end{align}
This claim is supported by \cref{fig: predictors}. Moreover, previous studies \cite{KieransNoise} have found that the noise susceptibility scales linearly with the number of depolarizing two-qubit gates:
\begin{align}
 \label{eq: ScalingDepol}
 \chi_t^{\depolarize} \appropto N_{II}. 
\end{align}
Also, this claim is supported by \cref{fig: predictors}.

Thus, for global (amplitude damping and dephasing) noise, which affects idling and non-idling qubits alike, our analysis indicates that circuit depth is a good predictor of noise susceptibility.
On the other hand, for local (depolarizing) noise, which affects only the qubits which are nontrivially operated on,  $N_{II}$ is a good predictor of the noise susceptibility. Consequently, we expect that compressing the depth of an ansatz circuit by layering can mitigate noise in the former, but not the latter, of these settings.

\section{Summary and Conclusion}
\label{sec: conclusion}

In this paper, we introduced layering and subpool-exploration strategies for ADAPT-VQEs that reduced circuit depth, runtime, and susceptibility to noise.
In noiseless numerical simulations, we demonstrate that layering reduces the depths of an ansatz circuit when compared to standard ADAPT-VQE.
We further showed that our layering algorithms achieve circuits that are as shallow as TETRIS-ADAPT-VQE.
The reduction in ansatz circuit depth is achieved without increasing the number of ansatz elements, circuit parameters, or CNOT gates in the ansatz circuit.
The noiseless numerical simulations further provide evidence that layering and subpool-exploration can reduce the runtime of ADAPT-VQE by up to $\bigO{N^2}$, where $N$ is the number of qubits in the simulation.
Finally, we benchmarked the effect of reducing the depth of ADAPT-VQEs on the algorithms' noise susceptibility.
For global noise models, which affect idling and non-idling qubits alike (such as our amplitude-damping and dephasing model), we show that the noise susceptibility is approximately proportional to the ansatz-circuit depth. 
For these noise models, reduced circuit depth due to layering is beneficial in reducing the noise susceptibility of ADAPT-VQEs.
For local noise models, where only non-idling qubits are affected by noise (as with our depolarizing noise model), we show that the noise susceptibility is approximately proportional to the number of noisy (two-qubit, CNOT) gates.
For these noise models, layering strategies are neither useful nor harmful, as they hardly change the CNOT count of ADAPT-VQEs.
We finish our paper by stating three conclusions from our work.

\textit{To layer or not to layer?:---}Depending on the dominant noise source of a quantum processor, layering may or may not lead to improved noise resilience. For processors where global noise dominates, we recommend layering.

\textit{Static or dynamic layering?:---}Our paper considered static and dynamic layering. Which of the two should be used? Static layering optimizes each layer once, while dynamic layering optimizes the ansatz after adding each ansatz element. Both layering strategies lead to ansatz circuits of similar depths and require a similar number of parameters and CNOT gates to reach a certain energy accuracy. However, static layering calculates significantly fewer energy expectation values on the quantum processor. Therefore,  we recommend static layering for the small molecules studied in this work. For larger molecules, dynamic layering could be preferable.

\textit{How useful is subpool exploration?:---}Our paper introduced a new pool-exploration strategy, that reduces the number of loss-function evaluations and, thereby, the number of calls to the quantum processor.
However, in the examples studied in this work, the number of loss-function evaluations was exceeded by the energy-expectation-value calls. Thus, subpool exploration had little impact on the algorithms. Again, this could change when larger molecules are studied.

\subsection*{Acknowledgements}
The authors thank Yordan S. Yordanov for the use of his codebase for VQE protocols and Wilfred Salmon for insightful discussions.
We further thank Sophia E Economou, Nicholas J Mayhall, Edwin Barnes, Panagiotis G Anastasiou and the Virginia Tech group for fruitful discussions.
We acknowledge the use of IBM Quantum services for this work. The views expressed are those of the authors, and do not reflect the official policy or position of IBM or the IBM Quantum team.

\bibliography{Layering_and_subpool_exploration_for_adaptive_Variational_Quantum_Eigensolvers}

\newlength{\doublecolumnwidth}
\setlength{\doublecolumnwidth}{\columnwidth}

\onecolumngrid

\appendix
\renewcommand{\thesubsection}{\Alph{section}.\arabic{subsection}}

\section{Big O, Omega, and Theta notations}
\label{appendix: big O}
This appendix defines big O, big Omega (Knuth definition), and big Theta notations. For our purposes, the notations can, respectively, be defined as:
\begin{align}
    f\left(x\right)=\bigO{g\left(x\right)}&\iff\lim_{x\to\infty}\frac{f\left(x\right)}{g\left(x\right)}<\infty,\\
    f\left(x\right)=\bigOmega{g\left(x\right)}&\iff\lim_{x\to\infty}\frac{f\left(x\right)}{g\left(x\right)}>0,\\
    f\left(x\right)=\bigTheta{g\left(x\right)}&\iff 0<\lim_{x\to\infty}\frac{f\left(x\right)}{g\left(x\right)}<\infty.
\end{align}

\section{Variance, Required Shots, and Runtime}
\label{appendix: variance}

\begin{figure*}[htp]
    \begin{minipage}[t]{\doublecolumnwidth}
    \begin{figure}[H]
        \includegraphics[width=\columnwidth, page=1]{./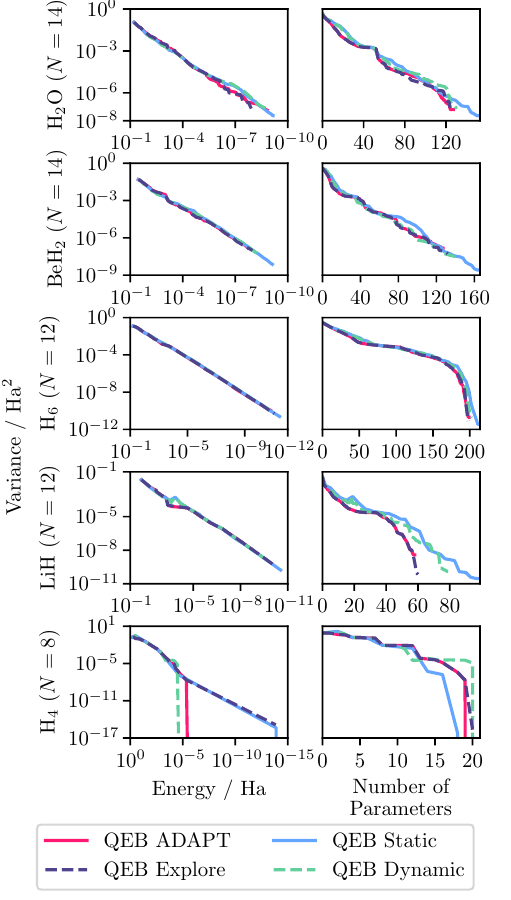}
        \caption{Variance against energy accuracy (left) and the number of ansatz-circuit parameters (ansatz elements, right), for standard, Explore-, Static-, and Dynamic-ADAPT-VQE using the QEB pool and support commutation. Each row shows data for a specific molecule, with the number of orbitals increasing up the page. Energy accuracies better than chemical accuracy are shaded in cream.}
        \label{fig: variances}
    \end{figure}
    \end{minipage}\hfill\begin{minipage}[t]{\doublecolumnwidth}
    \begin{figure}[H]
        \includegraphics[width=\columnwidth, page=1]{./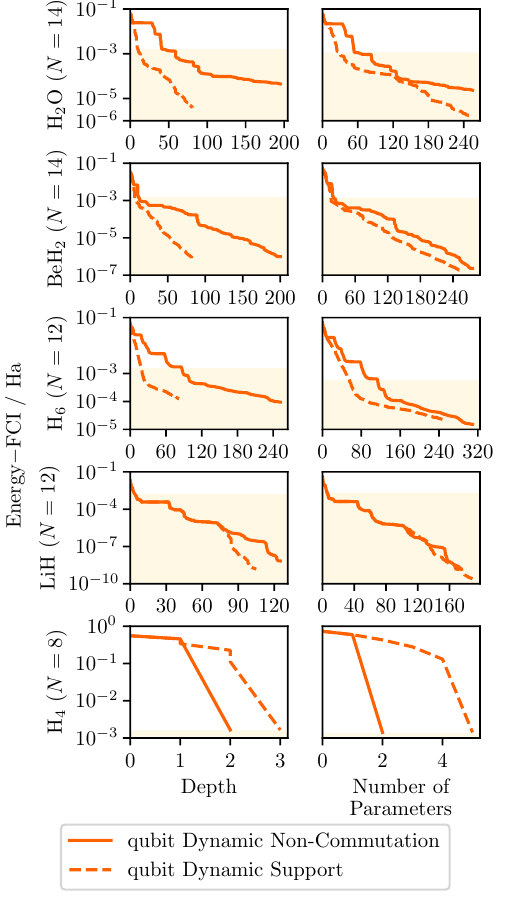}
        \caption{Energy accuracy against ansatz-circuit depths (left) and the number of ansatz-circuit parameters (ansatz elements, right), for qubit-Dynamic-ADAPT-VQE using support and operator commutation---dashed and solid lines, respectively. Each row shows data for a specific molecule, with the number of orbitals increasing up the page. Energy accuracies better than chemical accuracy are shaded in cream.}
        \label{fig: support vs. commutation}
    \end{figure}
\end{minipage}
\end{figure*}

\begin{figure*}[htp]
    \includegraphics[width=\textwidth, page=1]{./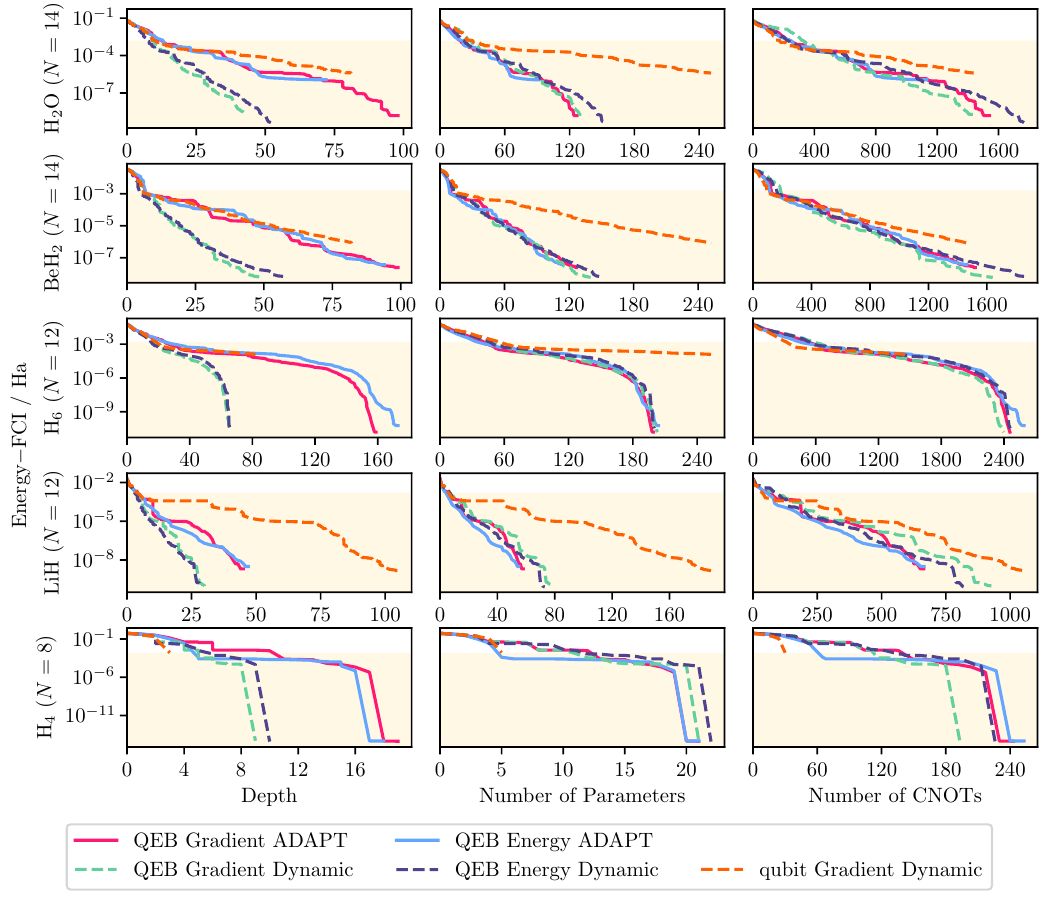}
    \caption{Energy accuracy against ansatz-circuit depths (left), the number of ansatz-circuit parameters (ansatz elements, center), and the number of CNOT gates in the ansatz circuit (right), for qubit-Dynamic-ADAPT-VQE using support commutation as well as QEB standard and Dynamic-ADAPT-VQE using support commutation with both steepest gradient and largest energy reduction decision rules, are compared. Each row shows data for a specific molecule, with the number of orbitals increasing up the page. Energy accuracies better than chemical accuracy are shaded in cream.}
    \label{fig: energy vs. gradient}
\end{figure*}

\begin{figure*}[htp]
    \includegraphics[width=\textwidth, page=1]{./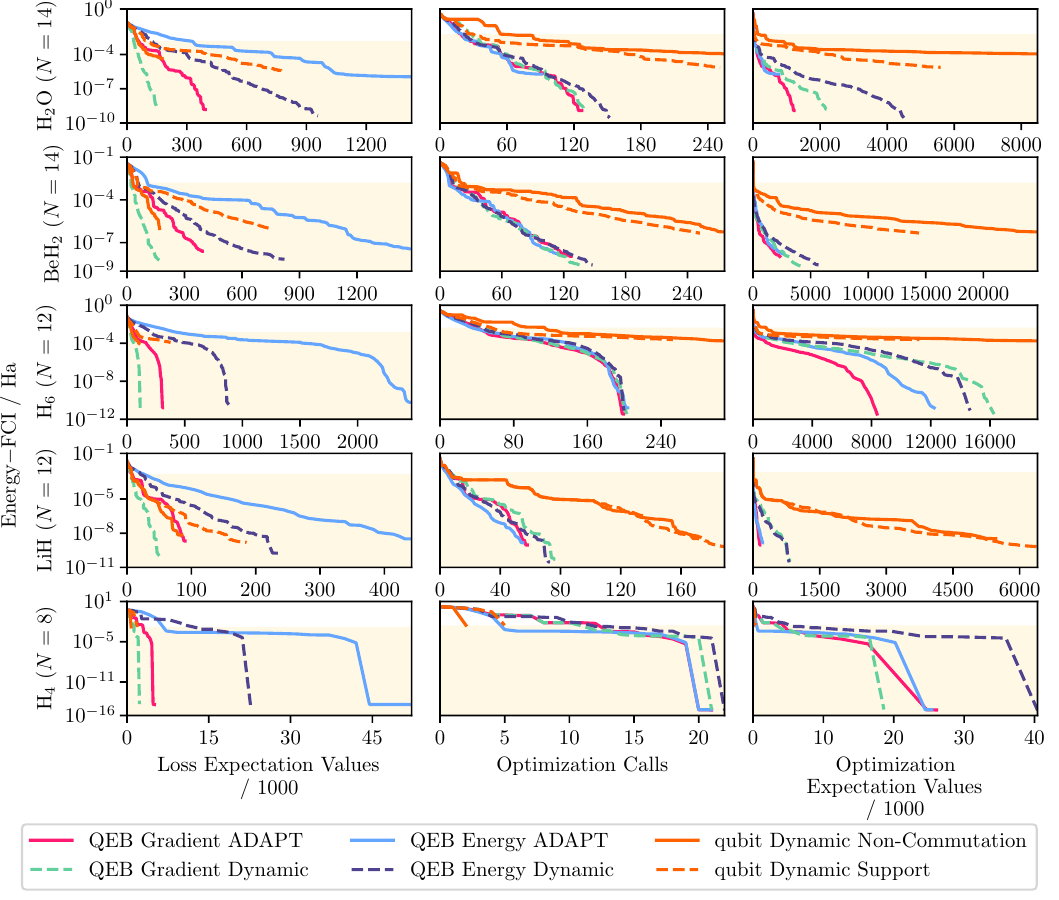}
    \caption{Energy accuracy against the number of loss function calls (left), the number of times the ansatz is optimized (center), and the number of expectation values calculated during optimizer calls (right), for qubit-Dynamic-ADAPT-VQE using support and operator commutation, as well as QEB standard and Dynamic-ADAPT-VQE using support commutation with both the steepest gradient and largest energy reduction decision rules, are compared. Each row shows data for a specific molecule, with the number of orbitals increasing up the page. Energy accuracies better than chemical accuracy are shaded in cream.}
    \label{fig: energy vs. gradient calls}
\end{figure*}

This appendix analyzes the relationship between quantum-processor calls and expectation value evaluations. As in Sec.~\ref{sec: numerical runtime} and Sec.~\ref{sec: complexity}, here we will assume gradients are calculated through the method of finite differences. Thus, we need only consider expectation values of the Hamiltonian $H$. Note that the runtimes to calculate two different expectation values will not generally be the same. The standard error $\epsilon$ in the estimate of the expectation value of the Hamiltonian $H$ can be bounded using Chebyshev's inequality to give
\begin{equation}
\mathbb P\left(\left|\widetilde{\expval{H}}-\expval{H}\right|\ge\epsilon\right)\le\frac{\operatorname{Var}H}{s\epsilon^2}
\end{equation}
where $s$ is the number of independent samples of the observable $H$ and $\widetilde{\expval{H}}$ is the sample mean estimator for the expectation value $\expval{H}$. However, our Hamiltonian $H$ is represented by a sum of $g$ Pauli strings, so the total number of shots is $S=gs$. Thus, we find the number of shots required is bounded:
\begin{equation}\label{eq: required shots}
    S\le\frac{g\operatorname{Var}H}{q\epsilon^2}.
\end{equation}
where we have picked some desired confidence $q=\mathbb P\left(\left|\widetilde{\expval{H}}-\expval{H}\right|\ge\epsilon\right)$.

Two factors contribute to variation in expectation value runtimes. Assuming equality in \cref{eq: required shots} allows us to read off the first factor. That is, the runtime is directly proportional to
\begin{equation}\label{eq: variance}
    \operatorname{Var}H\equiv\Trace\left[H^2\rho_t\right]-\left(\Trace\left[H\rho_t\right]\right)^2,
\end{equation}
which will generally vary throughout the algorithm through the dependence on $\rho_t$. The evolutions of the variance $\operatorname{Var}H$ throughout the four algorithms benchmarked in Sec.~\ref{sec: circuit depth} and \ref{sec: numerical runtime} are presented in \cref{fig: variances}. We observe that the variance is well predicted by the energy accuracy independent of the algorithm (left column). This leads to the variance, like the energy accuracy, being well predicted by the number of ansatz parameters independent of the algorithm (right column).

The second factor is that each shot will have a runtime directly proportional to the ansatz-circuit depth---this assumes that initialization and readout are negligible.

Thus, we treat each expectation value evaluation as an oracle. The cost of each oracle call is approximately the same for each algorithm for equal energy accuracies or equal numbers of ansatz parameters---providing the algorithms produce ansatz circuits with approximately equivalent depths. This is not true when comparing Static- and Dynamic-ADAPT-VQE to standard and Explore-ADAPT-VQE. That is the number of shots required per expectation value $S\left(x\left(P\right)\right)$ (see Sec.~\ref{sec: complexity}) as a function of the number of parameters that is approximately the same for the four algorithms considered.

Further, comparisons of total expectation value evaluations (see Sec.~\ref{sec: numerical runtime}) for a given energy convergence are a good proxy for directly comparing the runtimes providing the algorithms produce ansatz circuits with approximately equivalent ansatz-circuit depths. In the layered \textit{versus} non-layered comparison, the layering-based algorithms will have a faster runtime given by the ratio of the depths---at best $\bigO{N}$.

\section{Commutativity vs Support}
\label{appendix: commutativity vs. support}

In this article, we introduced two notions of commutation that can be leveraged in constructing ADAPT-VQE algorithms. They were operator commutation and support commutation, and we will compare them here.

As noted in \cref{appendix: non-commuting sets}, the operator and support non-commuting sets of the ansatz elements in the QEB pool differ by, at most, two ansatz elements. Additionally, both notions of commutation result in layers of constant depth with respect to the number of qubits. On the other hand, the operator and support non-commuting sets of the qubit pool differ by $\bigTheta{N^3}$ ansatz elements. Thus, the two types of commutation could construct vastly different ansätze. The ansätze constructed based on support commutation will have layers of constant depth, while those constructed based on operator commutation will have layers of depth $\bigO{N^3}$. Thus, we consider both the support and operator commutation variants of Dynamic-ADAPT-VQE using the qubit pool to highlight the differences between the commutation types. \cref{fig: support vs. commutation} shows the energy error (\cref{eq: EnergyError}) as a function of depth and the number of parameters. For \ce{LiH}, \ce{H6}, \ce{BeH2}, and \ce{H2O} we observe that support commutation outperforms operator commutation. However, for \ce{H4}, we find operator commutation outperforms support commutation.

\section{Pool and Decision Rule Comparisons}\label{appendix: Largest Energy Reduction Decision}

This appendix includes additional comparisons of Dynamic-ADAPT-VQE with two different loss functions and with the QEB and Pauli pools. We refer to the loss function in \cref{eq: loss} as gradient selection. An alternative loss function is
\begin{equation}
    L_t(A) = \min_\theta E_{t,A}(\theta),
\end{equation}
which we will refer to as energy selection. That is, we optimize the ansatz with respect to the last parameter for each ansatz element in the subpool and pick the ansatz element that reduces the energy by the most.

In \cref{fig: energy vs. gradient}, we see that both the energy and gradient selection rules perform similarly in energy accuracy for a given depth, number of parameters, and number of CNOT gates. Standard ADAPT-VQE with both gradient and energy selection are included as reference points.

The QEB-Dynamic-ADAPT-VQE algorithms require fewer parameters and shallower ansatz circuits than the qubit-Dynamic-ADAPT-VQE algorithms for a given energy accuracy. However, the energy accuracy of QEB-Dynamic-ADAPT-VQE and qubit-Dynamic-ADAPT-VQE for a given number of CNOT gates is similar, suggesting that the number of two-qubit gates in the ansatz could be a good pool-independent predictor of energy convergence.

Additionally, in \cref{fig: energy vs. gradient calls}, we see that evaluating the energy selection rule is more expensive than the gradient selection rule. However, as the optimization dominates the total number of expectation values we see for Dynamic-ADAPT-VQE, the selection rule makes little difference for \ce{LiH}, \ce{H6}, and \ce{BeH2}. That said, the energy selection rule never significantly outperforms the gradient selection rule, which justifies the use of the gradient selection rule throughout this article.

\section{Noise Susceptibility Peaks}
\label{appendix: noise susceptibility peaks}

In this appendix, we comment on the peaks which appear in the noise susceptibility data of \cref{fig: dephasing} and \cref{fig: depolarizing}.
To verify that these apparent features are not numerical errors, we computed not only the noise susceptibility but also performed full density-matrix simulations with depolarizing noise for a few points on the \ce{H6} plots.
We compute $\bound_t(p=10^{-6})$ and $\bound_t(p=0)$ at the values of $t$ located, before the peak, at the peak, and after the peak in \cref{fig: depolarizing}, respectively. 
Then, we used the finite-differences method to estimate $\chi_t^{\depolarize}$.
The corresponding data points are depicted with black crosses in \cref{fig: depolarizing}, row 3, for \ce{H6}.
These points are in perfect agreement with the noise-susceptibility data.

\section{Pool Definitions}
\label{appendix: pool definitions}

In this appendix, we make more rigorous definitions of the pools referred to throughout this article. These definitions will prove useful in analyzing the runtime scalings of the algorithms.

Let $\mathcal V$ be some set of even integers. In the body of this paper, we take $\mathcal V=\left\{2, 4\right\}$ for all three pool definitions. Additionally, let
\begin{equation}
    \mathcal I\left(\mathcal V\right)\coloneqq\left\{\left(\vec k,\vec l\:\right)\colon\vec k, \vec l\in\mathbb Z^{q/2}_n\text{ such that } k_0\le k_i,l_i\text{ and all }k_i,l_i\text{ are distinct}\quad\forall i\in\left[q/2\right]\quad\forall q\in\mathcal V\right\}
\end{equation}

\begin{definition}[Generalized Fermionic Pool]
    All the distinct fermionic excitations that act on $q\in\mathcal V$ distinct qubits are given by:
    \begin{equation}
        \pool^{f}\left(N\right)\coloneqq\left\{T_{\vec l}^{\vec k}\colon\left(\vec k,\vec l\:\right)\in\mathcal I\left(\mathcal V\right)\right\}\quad\text{where}\quad T_{\vec l}^{\vec k}\coloneqq\left[\prod_{i\in \vec k}a^\dagger_i\right]\left[\prod_{j\in \vec l}a_j\right]-h.c.
    \end{equation}
    This is the fermionic pool over $N$ qubits.
\end{definition}

\begin{definition}[Generalized QEB Pool]
    All the distinct qubit excitations that act on $q\in\mathcal V$ distinct qubits are given by:
    \begin{equation}
        \pool^{\text{\normalfont QEB}}\left(N\right)\coloneqq\left\{T_{\vec l}^{\vec k}\colon\left(\vec k,\vec l\:\right)\in\mathcal I\left(\mathcal V\right)\right\}\quad\text{where}\quad T_{\vec l}^{\vec k}\coloneqq\left[\prod_{i\in \vec k}Q^\dagger_i\right]\left[\prod_{j\in \vec l}Q_j\right]-h.c.
    \end{equation}
    This is the QEB pool over $N$ qubits.
\end{definition}

The cardinality of both the fermionic and QEB pools is:
\begin{equation}\label{eq:f QEB cardinality}
    \left|\pool^{f,\text{QEB}}\left(N\right)\right|=\sum_{\substack{q\in\mathcal V\\\colon q\le N}}\begin{pmatrix}q-1\\\frac{1}{2}q\end{pmatrix}\begin{pmatrix}N\\q\end{pmatrix}=\bigTheta{N^{\max\mathcal V}},
\end{equation}
where the second choice is the number of sets of $q$ distinct qubits of $N$, and the first is the number of permutations of these qubits that give distinct excitations.

\begin{definition}[Generalized qubit Pool]
    The qubit pool is the set of all the Pauli excitations with an odd number of $Y$ gates in the generator that act on $q\in\mathcal V$ distinct qubits.
\end{definition}

The cardinality of the qubit pool is:
\begin{equation}\label{eq: qubit cardinality}
    \left|\pool^{\text{qubit}}\left(N\right)\right|=\sum_{\substack{q\in\mathcal V\\\colon q\le N}}2^{q-1}\begin{pmatrix}N\\q\end{pmatrix}=\Theta\left(N^{\max\mathcal V}\right),
\end{equation}
where the combinatorics follow as for the fermionic and QEB pools, but the first choice is replaced with the factor $2^{q-1}$.

\begin{lemma}\label{lemma: logarithmically concave}
    For each $\pool\in\left\{\pool^{\rm{Fermi}}, \pool^{\text{\normalfont 
QEB}}, \pool^{\text{\normalfont qubit}}\right\}$, there exists a finite constant 
$V$, that will depend on the pool definition, such that 
$\left|\pool\left(N\right)\right|$ is a logarithmically concave function of $N$ 
for $N\ge V$.
\end{lemma}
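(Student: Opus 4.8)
The plan is to reduce the statement to an elementary fact about polynomials and then to a single leading-order sign computation. First I would observe that, once $N$ is large enough that every $q\in\mathcal V$ satisfies $q\le N$ (in particular for all $N\ge\max\mathcal V$), the condition ``$q\le N$'' in \cref{eq:f QEB cardinality} and \cref{eq: qubit cardinality} is vacuous, so the cardinality of each of the three pools is one and the same \emph{fixed} polynomial in $N$,
\begin{equation}
    \left|\pool\left(N\right)\right| = p\left(N\right) \coloneqq \sum_{q\in\mathcal V} c_q\binom{N}{q},
\end{equation}
with strictly positive coefficients $c_q$ (namely $c_q=\binom{q-1}{q/2}$ for the fermionic and QEB pools and $c_q=2^{q-1}$ for the qubit pool). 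Since $\binom{N}{q}=N\left(N-1\right)\cdots\left(N-q+1\right)/q!$ is a degree-$q$ polynomial with positive leading coefficient, $p$ is a polynomial of degree $d\coloneqq\max\mathcal V$ with leading coefficient $a_d=c_d/d!>0$, and $p\left(N\right)>0$ for every real $N\ge\max\mathcal V$ (each term $\binom{N}{q}$ is then positive). I would extend $\left|\pool\left(N\right)\right|$ to a real argument via this polynomial, so ``logarithmically concave'' may be read in the standard continuous sense; the discrete midpoint inequality $p\left(N\right)^2\ge p\left(N-1\right)p\left(N+1\right)$ then follows automatically from concavity of $\log p$ on $\left[N-1,N+1\right]$.

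Next I would use that, on any interval where $p>0$, log-concavity of $p$ is equivalent to the sign condition $\left(\log p\right)''=\left(p''p-\left(p'\right)^2\right)/p^2\le 0$, i.e. to $g\left(N\right)\coloneqq p''\left(N\right)p\left(N\right)-p'\left(N\right)^2\le 0$. The key step is the leading-order expansion of $g$: writing $p\left(N\right)=a_dN^d+O\left(N^{d-1}\right)$, one gets $p'\left(N\right)=d\,a_dN^{d-1}+O\left(N^{d-2}\right)$ and $p''\left(N\right)=d\left(d-1\right)a_dN^{d-2}+O\left(N^{d-3}\right)$, hence
\begin{equation}
    g\left(N\right)=\left[d\left(d-1\right)-d^2\right]a_d^2N^{2d-2}+O\left(N^{2d-3}\right)=-d\,a_d^2\,N^{2d-2}+O\left(N^{2d-3}\right).
\end{equation}
Because $d=\max\mathcal V\ge 2$ and $a_d>0$, the leading coefficient $-d\,a_d^2$ is strictly negative, so $g\left(N\right)<0$ for all sufficiently large $N$. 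Choosing $V\ge\max\mathcal V$ large enough that simultaneously $g\left(N\right)<0$ and $p\left(N\right)>0$ for all $N\ge V$ then gives that $\log\left|\pool\left(N\right)\right|$ is (strictly) concave on $\left[V,\infty\right)$, which is the claim; $V$ depends only on the coefficients $c_q$ and on $\mathcal V$, i.e. on the pool definition.

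I do not expect a serious obstacle here. The only two points that need care are: (i) making sure the closed-form cardinality is genuinely a \emph{single} polynomial rather than a piecewise one, which is precisely why $V$ must be taken at least $\max\mathcal V$; and (ii) the bookkeeping in the expansion of $g$, where the $N^{2d-2}$ terms of $p''p$ and $\left(p'\right)^2$ must be shown to cancel down to the strictly negative coefficient $-d\,a_d^2$ — this cancellation is the mechanism behind the whole result. One could instead stay entirely in the discrete setting and verify $p\left(N\right)^2-p\left(N-1\right)p\left(N+1\right)>0$ for large $N$ by the same leading-order estimate, but the continuous version above is cleaner and immediately implies the discrete one.
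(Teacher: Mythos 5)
Your proof is correct, but it takes a genuinely different route from the paper's. The paper first establishes an exact, non-asymptotic fact about each summand: writing $\binom{N}{q}=f_q(N)/q!$ with $f_q(x)=\prod_{n=0}^{q-1}(x-n)$, it derives the identity
\begin{equation}
    \left[f_q'(x)\right]^2-f_q''(x)f_q(x)=\sum_{\alpha=0}^{q-1}\Bigl[\prod_{n\neq\alpha}(x-n)\Bigr]^2\ge 0,
\end{equation}
so each binomial coefficient is log-concave on its entire positive range $N\ge q$; only then does it argue that the non-negative combination $\sum_{q\in\mathcal V}c_q\binom{N}{q}$ is eventually log-concave because the top-degree term dominates. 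You skip the term-by-term identity entirely and apply the leading-order sign computation $p''p-(p')^2=-d\,a_d^2N^{2d-2}+O(N^{2d-3})$ directly to the full polynomial. This is shorter and arguably more transparent: a sum of log-concave functions is not log-concave in general, so the paper's closing ``dominance'' step is itself only an asymptotic assertion, justified by exactly the expansion you carry out explicitly — your version makes that single asymptotic step the whole proof instead of appending it to an exact lemma that does not by itself close the argument. What the paper's identity buys is exact log-concavity of each summand (which would give $V=\max\mathcal V$ outright if $\mathcal V$ were a singleton, and is a natural starting point for extracting a concrete $V$); neither argument as written produces an explicit constant for the general sum, and none is needed for the lemma. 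Your two flagged points of care — that the cardinality is one fixed polynomial for $N\ge\max\mathcal V$, and the cancellation of the $N^{2d-2}$ terms down to $-d\,a_d^2<0$ — are both handled correctly.
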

\begin{proof}
    Consider the polynomial
    \begin{equation}
        f_q\left(x\right)\coloneqq\prod_{n=0}^{q-1}\left(x-n\right),
    \end{equation}
    which has $q$ integer roots: $\left[0,q-1\right]$. We note that the $k$th derivative can be expressed as
    \begin{equation}
        \dv[k]{f_q}{x}=\sum_{\substack{\vec\alpha\in\left[0,q-1\right]^k\\\colon\alpha_i\ne\alpha_j\forall i,j}}f_{q,\vec\alpha}\left(x\right)\quad\text{where}\quad f_{q,\vec\alpha}\left(x\right)\equiv\prod_{\substack{n\in\left[0,q-1\right]\\\colon n\not\in\vec\alpha}}\left(x-n\right).
    \end{equation}

    Further, let
    \begin{equation}
        g_{q,\vec\alpha}\left(x\right)\coloneqq\frac{f_q^2\left(x\right)}{\displaystyle\prod_{l\in\vec\alpha}\left(x-l\right)}.
    \end{equation}
    Using this notation, we will consider the following products
    \begin{equation}
        \dv[2]{f_q}{x}f_{q}\left(x\right)=\sum_{\substack{\vec\alpha\in\left[0,q-1\right]^2\\\colon\alpha_1\ne\alpha_2}}f_{q,\vec\alpha}\left(x\right)f_q\left(x\right)=\sum_{\substack{\vec\alpha\in\left[0,q-1\right]^2\\\colon\alpha_1\ne\alpha_2}}g_{q,\vec\alpha}\left(x\right),
    \end{equation}
    and
    \begin{equation}
        \dv{f_q}{x}\dv{f_q}{x}=
        \sum_{\vec\alpha\in\left[0,q-1\right]^2}f_{q,\alpha_1}\left(x\right)f_{q,\alpha_2}\left(x\right)=
        \sum_{\substack{\vec\alpha\in\left[0,q-1\right]^2\\\colon\alpha_1\ne\alpha_2}}g_{q,\vec\alpha}\left(x\right)+\sum_{\alpha_1=0}^{q-1}g_{q,\left(\alpha_1,\alpha_1\right)}\left(x\right).
    \end{equation}

    Now note the first terms from each cancel in the difference of these products:
    \begin{equation}
        \dv{f_q}{x}\dv{f_{q'}}{x}-\dv[2]{f_q}{x}f_{q'}\left(x\right)=\sum_{\alpha_1=0}^{q-1}g_{q,\left(\alpha_1,\alpha_1\right)}\left(x\right),
    \end{equation}
    which is non-negative for $x\ge q$.

    Note $f_q\left(x\right)$ is logarithmically concave within some convex domain iff $\dv[2]{f_q}{x}f_q\left(x\right)\le\left[\dv{f_q}{x}\right]^2$ within the convex domain. Thus, $f_q\left(x\right)$ is logarithmically concave for $x\ge q$. As the discrete function $\begin{pmatrix}N\\q\end{pmatrix}=\frac{1}{q!}f_q\left(N\right)$ it must also be logarithmically concave for $N\ge q$. However, $\left|\pool\left(N\right)\right|$ is a linear combination with non-negative coefficients of such functions with $q\in\mathcal V$. Therefore there will be a finite constant depending on these coefficients for which the function $\begin{pmatrix}N\\\max\mathcal V\end{pmatrix}$ dominates sufficiently that $\left|\pool\left(N\right)\right|$ is logarithmically concave.
\end{proof}

\section{Generalized Non-Commutation Sets}
\label{appendix: non-commuting sets}

In this appendix, we derive rules to determine whether two ansatz elements from the same pool operator commute. We consider the qubit pool (\cref{appendix: qubit pool}), QEB pool (\cref{appendix: QEB pool}), and Fermionic pool (\cref{appendix: Fermionic pool}). All these ansatz elements are Stones encoded unitaries, $e^{\theta T}$ for some skew-hermitian generator $T$. Thus, two ansatz elements operator commute iff the corresponding generators commute. Below we derive the conditions under which the generators commute and hence the ansatz elements operator commute.

\subsection{Pauli Excitations}
\label{appendix: qubit pool}

For Pauli excitations, the generators are simply the Pauli strings of length two and four with an odd number of $Y$ operators. Thus, the Pauli strings operator commute iff the tensor factors in the strings with mutual support differ in an even number of places.

\subsection{Qubit Excitations}
\label{appendix: QEB pool}

First, we will consider a generalization of qubit excitation generators which will later prove useful for Fermionic excitation generators. Consider the following definitions:

\begin{definition}[Singleton Matrix]\label{def:M matrix}
    Let $\left[M_{ij}\right]_{ab}\coloneqq \delta_{ai}\delta_{bj}$ be a singleton matrix. Note that these matrices have the following properties:
    \begin{enumerate}
        \item $M_{kl}M_{\gamma\delta}\equiv\delta_{l\gamma}M_{k\delta}$,
        \item $\left[M_{kl}, M_{\gamma\delta}\right]\equiv\delta_{l\gamma}M_{k\delta}-\delta_{k\delta}M_{\gamma l}$.\label{property:M commutator}
    \end{enumerate}
\end{definition}
\begin{definition}[Set of Singleton Matrices]\label{def:M matrix set}
    Let $\mathcal M_l\coloneqq\left\{M_{ij}\colon i,j\in\left[2^l\right]\right\}\subset\mathbb R^{2^l\times 2^l}$ for all $l\in\mathbb N_0$. Note that this family of sets has the following properties:
    \begin{enumerate}
        \item $\mathcal M_0\equiv\left\{1\right\}$,
        \item $\mathcal M_{l+m}\equiv\mathcal M_l\otimes\mathcal M_m\quad\forall l\ge 1$.\label{property:factorisation}
    \end{enumerate}
\end{definition}

Representing the set of operators $\mathcal Q_l\coloneqq\left\{Q,Q^\dagger\right\}^{\otimes l}$ in the computational basis is an injection $\mathcal Q_l\to\left\{M_{ij}\in\mathcal M_l\colon i\ne j\right\}$ for $l\ge 1$. Thus, we consider the following skew-Hermitian operator $T=M_{ij}+a M_{ji}$, which could represent a qubit excitation generator in the computational basis when $a=-1$.

\begin{theorem}[Singleton Matrix Excitation Commutation]\label{theorem: singleton matrix excitation commutation}
    Consider the following two operators acting on a tripartite vector space:
    \begin{align}
        T_1&=\mathds{1}\otimes M_{ij}\otimes M_{kl}+a_1\mathds{1}\otimes M_{ji}\otimes M_{lk},\\
        T_2&=M_{\alpha\beta}\otimes \mathds{1}\otimes M_{\gamma\delta}+a_2M_{\beta\alpha}\otimes \mathds{1}\otimes M_{\delta\gamma},
    \end{align}
    such that $i\ne j$, $k\ne l$, $\alpha\ne \beta$, and $\gamma\ne \delta$.
    
    $T_1$ and $T_2$ commute if and only if any of the following three conditions hold:
    \begin{enumerate}
        \item $T_1$ and $T_2$ have disjoint support (i.e. $M_{kl}=M_{\gamma\delta}=1$),\label{condition: first}
        \item $T_1\propto T_2$,\label{condition: second}
        \item $T_1$ and $T_2$ have equivalent support and $k$, $l$, $\gamma$ and 
$\delta$ are all distinct.\label{condition: third}
    \end{enumerate}
\end{theorem}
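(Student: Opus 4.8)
The plan is to compute the commutator $[T_1,T_2]$ explicitly from the singleton-matrix algebra of Definitions~\ref{def:M matrix} and~\ref{def:M matrix set}, and then read off when it vanishes. First I would expand $[T_1,T_2]=T_1T_2-T_2T_1$ by bilinearity into sixteen terms. The crucial simplification is that $T_1$ carries $\mathds{1}$ in the first tensor slot and $T_2$ carries $\mathds{1}$ in the second, so by the product rule $M_{ab}M_{cd}=\delta_{bc}M_{ad}$ the first two tensor factors of each of the four ``cross'' terms are identical in $T_1T_2$ and in $T_2T_1$; only the third factor is reordered. Hence all the structure collapses onto the shared slot:
\begin{equation}
\begin{aligned}
 [T_1,T_2]
 &= M_{\alpha\beta}\otimes M_{ij}\otimes[M_{kl},M_{\gamma\delta}]
  + a_1\, M_{\alpha\beta}\otimes M_{ji}\otimes[M_{lk},M_{\gamma\delta}]\\
 &\quad {}+ a_2\, M_{\beta\alpha}\otimes M_{ij}\otimes[M_{kl},M_{\delta\gamma}]
  + a_1a_2\, M_{\beta\alpha}\otimes M_{ji}\otimes[M_{lk},M_{\delta\gamma}],
\end{aligned}
\end{equation}
and each inner commutator is evaluated by Property~\ref{property:M commutator} of Definition~\ref{def:M matrix}, e.g.\ $[M_{kl},M_{\gamma\delta}]=\delta_{l\gamma}M_{k\delta}-\delta_{k\delta}M_{\gamma l}$, which (for off-diagonal singletons $k\neq l$, $\gamma\neq\delta$) vanishes precisely when $l\neq\gamma$ and $k\neq\delta$.

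Next I would split into the trivial and the substantive cases. If $T_1$ and $T_2$ have no common tensor slot (condition~\ref{condition: first}, ``$M_{kl}=M_{\gamma\delta}=\mathds{1}$''), all the inner factors are identities and $[T_1,T_2]=0$ trivially. Otherwise both act nontrivially on the shared slot. When in addition the two private slots are genuinely present, the factorisation property~\ref{property:factorisation} of Definition~\ref{def:M matrix set} shows the four prefactors $M_{\alpha\beta}\otimes M_{ij}$, $M_{\alpha\beta}\otimes M_{ji}$, $M_{\beta\alpha}\otimes M_{ij}$, $M_{\beta\alpha}\otimes M_{ji}$ are four \emph{distinct} singleton matrices (using $\alpha\neq\beta$ and $i\neq j$), hence linearly independent; since the relevant generators have $a_1,a_2\neq0$, the commutator vanishes iff all four inner commutators do. A short bookkeeping of index coincidences, via the vanishing criterion above applied to $[M_{kl},M_{\gamma\delta}]$, $[M_{lk},M_{\gamma\delta}]$, $[M_{kl},M_{\delta\gamma}]$, $[M_{lk},M_{\delta\gamma}]$, forces $\{k,l\}\cap\{\gamma,\delta\}=\emptyset$, which together with $k\neq l$, $\gamma\neq\delta$ is exactly condition~\ref{condition: third}. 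In the remaining case the private slots are absent ($T_1,T_2$ acting on one common space), so the four prefactors coincide and the four inner commutators may now cancel against one another: carrying out the same index bookkeeping on the four-term sum shows it vanishes iff either all of $k,l,\gamma,\delta$ are distinct (condition~\ref{condition: third} again) or $(\gamma,\delta)\in\{(k,l),(l,k)\}$ with coefficients matched so that $T_2\propto T_1$ (condition~\ref{condition: second}). Finally, the converse — that each of conditions~\ref{condition: first}--\ref{condition: third} makes $[T_1,T_2]=0$ — is read off immediately from the displayed formula.

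I expect the main obstacle to be the careful, exhaustive enumeration of index-coincidence patterns among $k,l,\gamma,\delta$: one must distinguish the ordered pairs $(k,l)$, $(\gamma,\delta)$ from the unordered sets, track which partial overlaps still leave a surviving singleton term, and correctly isolate the two sub-patterns that collapse to $T_1\propto T_2$ rather than to genuinely overlapping-but-commuting support. Relatedly, one must be precise about the edge case in which the ``private'' tensor slots of $T_1$ and $T_2$ are trivial, since that is the only regime where condition~\ref{condition: second} can occur at all. Once the reduction to the four inner commutators is in place, the remaining computations are routine applications of Property~\ref{property:M commutator}.
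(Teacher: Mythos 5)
Your strategy is essentially the paper's: expand $[T_1,T_2]$ with the singleton product rule $M_{ab}M_{cd}=\delta_{bc}M_{ad}$, observe that because each operator carries $\mathds{1}$ in the other's private slot the commutator collapses onto the shared factor, and then decide when the four surviving terms $M_{\alpha\beta}\otimes M_{ij}\otimes[M_{kl},M_{\gamma\delta}]+\dots$ cancel. Your displayed formula agrees with the paper's, and your handling of the disjoint-support case and of the equivalent-support case (pairing terms to extract Conditions 2 and 3) matches the paper's bookkeeping. However, your case analysis has two genuine problems. First, the dichotomy ``both private slots genuinely present'' versus ``private slots absent'' is not exhaustive: it omits the mixed case in which exactly one of $M_{ij},M_{\alpha\beta}$ is trivial (one support strictly contains the other). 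There the four prefactors collapse only pairwise ($M_{ij}$ fronts terms 1 and 3, $M_{ji}$ fronts terms 2 and 4), so your linear-independence argument does not apply and you must separately exclude cancellation between, e.g., $[M_{kl},M_{\gamma\delta}]$ and $a_2[M_{kl},M_{\delta\gamma}]$; the paper does this by noting that, when nonzero, these brackets are supported on distinct singletons because $\gamma\ne\delta$ and $k\ne l$.

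Second, in the branch where both private slots are present you correctly conclude that the commutator vanishes iff $k,l,\gamma,\delta$ are all distinct, but you then identify this with Condition 3. It is not Condition 3: that condition additionally demands \emph{equivalent} support, which is impossible when both private factors are nontrivial. So your own computation exhibits commuting pairs (overlapping, non-equivalent support, all shared-slot indices distinct) that fall under none of the three listed conditions; for example $T_1=\mathds{1}\otimes M_{12}\otimes M_{12}+a_1\mathds{1}\otimes M_{21}\otimes M_{21}$ and $T_2=M_{12}\otimes\mathds{1}\otimes M_{34}+a_2 M_{21}\otimes\mathds{1}\otimes M_{43}$ commute. The paper sidesteps this by asserting that commutation with overlapping support forces $M_{ij}=M_{\alpha\beta}=\mathds{1}$, a step that tacitly presupposes some bracket is nonzero, so the tension is with the statement rather than with your algebra. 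You must either prove this configuration cannot arise for the generators actually drawn from the pools, or record explicitly that branch (b) yields a commuting case not literally covered by Conditions 1--3; as written, calling it ``exactly Condition 3'' is incorrect.
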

\begin{proof}
    Condition \ref{condition: first} follows trivially, as operators with disjoint support always commute. Thus, henceforth, we will assume $M_{kl},M_{\gamma\delta}\in\mathcal M_l$ for $l\ge 1$ (the compliment of Condition \ref{condition: first}). Now consider the product
    \begin{align}
        T_1T_2=
        &M_{ij}\otimes M_{\alpha\beta}\otimes \left(M_{kl}M_{\gamma\delta}\right)\\
        &+a_1M_{ji}\otimes M_{\alpha\beta}\otimes\left(M_{lk}M_{\gamma\delta}\right)\\
        &+a_2M_{ij}\otimes M_{\beta\alpha}\otimes \left(M_{kl}M_{\delta\gamma}\right)\\
        &+a_1a_2M_{ji}\otimes M_{\beta\alpha}\otimes\left(M_{lk}M_{\delta\gamma}\right).
    \end{align}
    
    Thus, we find the commutator using Property \ref{property:M commutator} of \cref{def:M matrix}:
    \begin{align}
        \left[T_1,T_2\right]=
        &M_{ij}\otimes M_{\alpha\beta}\otimes \left(\delta_{l\gamma}M_{k\delta}-\delta_{k\delta}M_{\gamma l}\right)\label{eq: commutator 1}\\
        &+a_1M_{ji}\otimes M_{\alpha\beta}\otimes\left(\delta_{k\gamma}M_{l\delta}-\delta_{l\delta}M_{\gamma k}\right)\label{eq: commutator 2}\\
        &+a_2M_{ij}\otimes M_{\beta\alpha}\otimes \left(\delta_{l\delta}M_{k\gamma}-\delta_{k\gamma}M_{\delta l}\right)\label{eq:commutator 3}\\
        &+a_1a_2M_{ji}\otimes M_{\beta\alpha}\otimes\left(\delta_{k\delta}M_{l\gamma}-\delta_{l\gamma}M_{\delta k}\right).\label{eq: commutator 4}
    \end{align}
    
    First, suppose the tensor factors in parentheses are non-zero and $M_{ij}\ne 1$. Line \ref{eq: commutator 1} cannot cancel with Line \ref{eq: commutator 2} or \ref{eq: commutator 4} due to the first tensor factor: $M_{ji}\ne M_{ij}$. 
Further, Line \ref{eq: commutator 1} cannot cancel with Line \ref{eq:commutator 
3} due to the bracketed tensor factor as $\gamma\ne\delta$. Thus, $T_1$ and 
$T_2$ do not commute if $M_{ij}\ne 1$ and the first condition is not met. By symmetry, the same argument can be applied if we suppose the tensor factors in parentheses are non-zero and $M_{\alpha\beta}\ne 1$. Therefore, if the first condition is not met, then we require $M_{ij}=M_{\alpha\beta}=1$ in order for $T_1$ and $T_2$ to commute---that is, we require equivalent support.

    Suppose now that $T_1$ and $T_2$ do have equivalent support. We can simplify the commutator to:
    \begin{align}
        \left[T_1,T_2\right]=
        &\delta_{l\gamma}M_{k\delta}-\delta_{k\delta}M_{\gamma l}\label{eq:commutator simple 1}\\
        &+a_1\left(\delta_{k\gamma}M_{l\delta}-\delta_{l\delta}M_{\gamma k}\right)\label{eq:commutator simple 2}\\
        &+a_2\left(\delta_{l\delta}M_{k\gamma}-\delta_{k\gamma}M_{\delta l}\right)\label{eq:commutator simple 3}\\
        &+a_1a_2\left(\delta_{k\delta}M_{l\gamma}-\delta_{l\gamma}M_{\delta k}\right).\label{eq:commutator simple 4}
    \end{align}

    By noting if $\delta_{ab}=1$ and $a\ne c$ then $\delta_{cb}=0$ and that $M_{ab}$ is linearly independent from $M_{cd}$ if $a\ne c$ or $b\ne d$ we can pair the lines as follows: $A\coloneqq\left\{\eqref{eq:commutator simple 1},\eqref{eq:commutator simple 4}\right\}$ and $B\coloneqq\left\{\eqref{eq:commutator simple 2},\eqref{eq:commutator simple 3}\right\}$ where no term in $A$ can cancel with a term in $B$. For the terms in $A$ to cancel we require $k=\delta$ and $l=\gamma$ and $a_1a_2=1$ as $k\ne l$ and $\gamma\ne\delta$, or $k\ne\delta$ and $l\ne\gamma$. Similarly, for terms in $B$ to cancel we require $k=\gamma$ and $l=\delta$ and $a_1=a_2$ as $k\ne l$ and $\gamma\ne\delta$ , or $k\ne\gamma$ and $l\ne\delta$.

    Now we can try to combine these conditions. First, consider combining the conditions:
\begin{equation}\begin{cases}
k=\delta\text{ and } l=\gamma \text{ and } \alpha_1\alpha_2=1,\\
k=\gamma\text{ and } l=\delta\text{ and } \alpha_1=\alpha_2
\end{cases}\implies k=l=\gamma=\delta\text{ and }\alpha_1=\alpha_2=\pm1.\end{equation}
However, we know $k\ne l$ and $\gamma\ne\delta$, so these conditions cannot apply simultaneously. Next, we try the combination:
\begin{equation}\begin{cases}
k=\delta\text{ and } l=\gamma \text{ and } \alpha_1\alpha_2=1,\\
k\ne\gamma\text{ and } l\ne\delta
\end{cases}\implies k=\delta\ne l=\gamma \text{ and } \alpha_1\alpha_2=1,\end{equation}
which is possible and corresponds to $T_1\propto T_2$ (Condition \ref{condition: second}). Similarly:
\begin{equation}\begin{cases}
k\ne\delta\text{ and } l\ne\gamma,\\
k=\gamma\text{ and } l=\delta\text{ and } \alpha_1=\alpha_2
\end{cases}\implies k=\gamma\ne l=\delta\text{ and } \alpha_1=\alpha_2,\end{equation}
is possible and also corresponds to $T_1\propto T_2$ (Condition \ref{condition: second}). Finally, consider the combination:
\begin{equation}\begin{cases}
k\ne\delta\text{ and } l\ne\gamma,\\
k\ne\gamma\text{ and } l\ne\delta
\end{cases}\implies \left\{k,l,\gamma,\delta\right\} \text{ are all distinct,}\end{equation}
which is Condition \ref{condition: third}; where we have used the $k\ne l$ and $\gamma\ne\delta$.
\end{proof}
\begin{corollary}[Qubit Excitation Commutation]
    Now consider qubit excitation generators defined as follows:
    \begin{equation}
        T=G+aG^\dagger\quad\text{where } G\in\mathcal Q_l.
    \end{equation}
    If two such generators have equivalent support, then either Condition \ref{condition: second} or \ref{condition: third} must hold. Thus, two-qubit excitation generators commute iff they have disjoint or equivalent support.
\end{corollary}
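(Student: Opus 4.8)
The plan is to deduce the corollary directly from Theorem~\ref{theorem: singleton matrix excitation commutation}. The preliminary step is to place a qubit excitation generator $T=G+aG^\dagger$, $G\in\mathcal{Q}_l$, inside the singleton-matrix picture: representing $G$ in the computational basis is precisely the injection $\mathcal{Q}_l\to\{M_{ij}\in\mathcal{M}_l:i\neq j\}$ noted above the theorem, so $T=M_{ij}+aM_{ji}$ with $i\neq j$, and the whole question is whether $[T_1,T_2]=0$. The structural fact I would single out is that, because $G$ is a tensor product of single-qubit factors $Q$ and $Q^\dagger$, its row index $i$ and column index $j$ are bitwise-complementary bitstrings; consequently the restriction of $G$ to any nonempty part of its support is again a singleton matrix $M_{kl}$ with $k$ and $l$ complementary, so $k\neq l$. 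This is exactly the off-diagonality the theorem demands of its tensor factors.

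Given two such generators $T_1$ and $T_2$, I would partition the qubits they jointly act on into $A=\support(T_2)\setminus\support(T_1)$, $B=\support(T_1)\setminus\support(T_2)$, and $C=\support(T_1)\cap\support(T_2)$. Writing $T_1$ and $T_2$ over $\mathcal{H}_A\otimes\mathcal{H}_B\otimes\mathcal{H}_C$ reproduces verbatim the tripartite form of Theorem~\ref{theorem: singleton matrix excitation commutation}, the first block acting trivially for $T_1$ and the second for $T_2$; by the complementarity observation each nonempty block contributes genuinely off-diagonal singletons, so the hypotheses hold, and when $A$ or $B$ is empty the corresponding factor is read as the $1\times1$ identity, a degenerate case the theorem's proof already dispatches (it derives non-commutation whenever one of the private blocks $A$, $B$ is nonempty, forcing $M_{ij}=M_{\alpha\beta}=1$, i.e.\ equivalent support, before treating that case further). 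The theorem then yields that $[T_1,T_2]=0$ exactly when $C=\emptyset$ (disjoint support), or $T_1\propto T_2$ (Condition~\ref{condition: second}), or $A=B=\emptyset$ with the four $C$-indices $k,l,\gamma,\delta$ all distinct (Condition~\ref{condition: third}).

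From here the corollary is immediate. If $T_1$ and $T_2$ have equivalent support then $C=\emptyset$ is impossible, so commutation can proceed only through Condition~\ref{condition: second} or~\ref{condition: third} — the first assertion. For the stated equivalence: disjoint support trivially gives $[T_1,T_2]=0$; equivalent support also forces commutation, because complementarity of the index bitstrings leaves only two options for the common pairs $\{k,l\}$ and $\{\gamma,\delta\}$, namely they coincide — in which case, all QEB generators having $a=-1$ so that equal index pairs really do give proportional operators, one has $T_1\propto T_2$, Condition~\ref{condition: second} — or they are disjoint and hence comprise four distinct indices, Condition~\ref{condition: third}; either way the theorem returns $[T_1,T_2]=0$. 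Conversely, if $[T_1,T_2]=0$ then one of the three conditions holds, and Conditions~\ref{condition: second} and~\ref{condition: third} both entail equivalent support, so the supports are disjoint or equivalent. I expect the only real friction to be bookkeeping: aligning the ``empty private block'' situations with a theorem whose clean statement presumes strictly off-diagonal factors, so one must lean on the appropriate branch of its proof rather than on the statement itself; the sole additional mathematical ingredient, the complementary-bitstring property of $Q$/$Q^\dagger$ strings, is immediate.
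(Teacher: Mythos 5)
Your reconstruction follows the paper's intended route exactly: the paper gives no explicit proof of this corollary, and the details you supply --- the complementary-bitstring property of $Q/Q^\dagger$ strings, the tripartite split into private and shared supports, and the observation that equivalent support forces the common index pairs $\{k,\bar k\}$ and $\{\gamma,\bar\gamma\}$ either to coincide (giving proportionality when $a=-1$) or to be disjoint (giving four distinct indices) --- are precisely what the paper leaves implicit. The forward direction and the first sentence of the corollary are handled correctly, and your treatment of the proportionality case is more careful than the paper's.

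There is, however, a genuine gap in the ``only if'' direction, at the exact spot you flag but do not close. You note that one must ``lean on the appropriate branch of [the theorem's] proof rather than on the statement itself,'' but that branch opens with ``suppose the tensor factors in parentheses are non-zero,'' and for qubit excitations this supposition can fail: all four bracketed terms in the commutator expansion vanish precisely when $\{k,l\}\cap\{\gamma,\delta\}=\emptyset$ on the common support, which by your own complementarity observation is possible whenever the shared support contains at least two qubits. In that case $\left[T_1,T_2\right]=0$ even though the supports overlap only partially, so neither the theorem's ``only if'' nor the corollary's ``iff'' goes through. Concretely, take $T_1=Q_1^\dagger Q_2^\dagger Q_3 Q_4-\mathrm{h.c.}$ and $T_2=Q_3^\dagger Q_4-\mathrm{h.c.}$; every cross product of their monomials contains a factor $Q_r^2=0$ or $\left(Q_r^\dagger\right)^2=0$ on qubit $3$ or $4$, so $T_1T_2=T_2T_1=0$, yet $\support\left(T_1\right)=\left\{1,2,3,4\right\}$ and $\support\left(T_2\right)=\left\{3,4\right\}$ are neither disjoint nor equivalent. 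So your final sentence (``if $\left[T_1,T_2\right]=0$ then one of the three conditions holds'') is not warranted by the theorem's proof, and the ``iff'' as stated would need to be weakened (e.g., to allow commutation whenever the restricted index pairs on the overlap are disjoint) or restricted to overlaps of a single qubit.
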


\subsection{Fermionic Excitations}
\label{appendix: Fermionic pool}

A Fermionic excitation generator generalizes to an operator of the form:
\begin{definition}[Generalized Fermionic Excitation]
    \begin{equation}
        T=K+bK^\dagger\quad\text{where }K\coloneqq\prod_{i\in\vec i}a_i^{\left(x_i\right)},\quad a^{\left(x_i\right)}_i\coloneqq\begin{cases}
            a^\dagger_i,&x_i=0\\
            a_i,&x_i=1
        \end{cases},
    \end{equation}
    and $x$ is a bit-string and $\vec i$ is a tuple of unique orbital indices.
\end{definition}

In the Jordan-Wigner encoding $a_i=Q_i\otimes\mathcal Z_i$ where $\mathcal Z_i=\bigotimes_{j=1}^{i-1}Z_j$ is a Pauli string of $Z$ operators. Now we will define the commutation product and use it to express the generalized Fermionic excitation generators as the tensor product of a singleton matrix excitation generator and a Pauli string of $Z$ operators:

\begin{definition}[Commutation Product]
    Let $\left(\bullet,\bullet\right):\mathcal H^{\times 2}\times \mathcal H^{\times 2}\to\left\{0, \pm1\right\}$ be a mapping from a pair of operators on the Hilbert space $\mathcal H$ to $\left\{0, \pm1\right\}$ such that
    \begin{equation}
        \left(A,B\right)\coloneqq\begin{cases}
            1,&\left[A,B\right]=0,\\
            -1,&AB+BA=0,\\
            0,&\text{otherwise},
        \end{cases}
    \end{equation}
    which satisfies:
    \begin{enumerate}
        \item $\left(A,A\right)=1$,
        \item $\left(A,B\right)=\left(B,A\right)$,
        \item $\left(A,BC\right)=\left(A,B\right)\left(A,C\right)=\left(A,CB\right)$,\label{property:ordering}
        \item $AB=\left(A,B\right)BA$,
        \item $AB=\left(A,B\right)BA$ if $\left(A,B\right)=\pm1$.
    \end{enumerate}
    Using \ref{property:ordering} let
    \begin{equation}
        \left(A,\prod_iB_i\right)\equiv\left(A,\left\{B_i\right\}_i\right).
    \end{equation}
\end{definition}

Note that $Q$ and $Q^\dagger$ anti-commute with $Z$ (i.e. $\left(Q^{\left(\dagger\right)},Z\right)=-1$).

\begin{lemma}[Generalized Fermionic Excitation Tensor Factorisation]Generalized Fermionic excitation generators can be expressed as follows:
    \begin{equation}
        T=\pm T_s\otimes\tilde{\mathcal Z},
    \end{equation}
    where $T_s$ is a singleton matrix excitation generator and $\tilde{\mathcal Z}\in\left\{\mathds{1}, Z\right\}^{\otimes m}$ for some $m\in\mathbb N$.
\end{lemma}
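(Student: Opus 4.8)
The plan is to substitute the Jordan--Wigner representation directly into $K$ and then regroup the factors so that all qubit creation/annihilation operators sit on the active qubits $\vec i$ and all Pauli-$Z$ operators sit on the remaining qubits, tracking every reordering sign with the commutation product $(\cdot,\cdot)$. Concretely, write $K=\prod_{i\in\vec i}\bigl(Q_i^{(x_i)}\otimes\mathcal Z_i\bigr)$ with $Q_i^{(0)}=Q_i^\dagger$, $Q_i^{(1)}=Q_i$, and $\mathcal Z_i=\bigotimes_{j<i}Z_j$. The $Q$-type factors act on the pairwise-distinct qubits of $\vec i$ and so commute among themselves, and the $\mathcal Z_i$ are products of mutually commuting $Z$'s. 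Using $(Q^{(\dagger)},Z)=-1$ together with the identity $AB=(A,B)BA$, I would move all the $Q$-type operators to the left of all the $Z$-strings; because every transposition of a $Q^{(\dagger)}$ past a $Z$ on the same qubit contributes $-1$, this costs only an overall factor $\pm1$ fixed by the parity of the number of such transpositions. The result is $K=\pm\bigl(\bigotimes_{i\in\vec i}Q_i^{(x_i)}\bigr)\cdot\bigl(\prod_{i\in\vec i}\mathcal Z_i\bigr)$, and the second factor, a product of $Z$'s with $Z^2=\mathds{1}$, collapses to an element of $\{\mathds{1},Z\}^{\otimes N}$.

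The next step is to split this residual $Z$-string into the part supported on the active qubits $\vec i$ and the part supported on their complement. On an active qubit $m\in\vec i$ the surviving $Z_m$-factors (one for each $\mathcal Z_i$ with $i>m$) combine to some power of $Z_m$, which I absorb into $Q_m^{(x_m)}$ using $Q^\dagger Z=Q^\dagger$ and $QZ=-Q$, again at the cost of only a sign $\pm1$; what remains on qubit $m$ is exactly $Q_m^{(x_m)}$. Writing $Q^\dagger=M_{10}$ and $Q=M_{01}$ in the computational basis, the tensor product $\bigotimes_{i\in\vec i}Q_i^{(x_i)}$ is a single singleton matrix $M_{AB}\in\mathcal M_{|\vec i|}$ with $A\neq B$ (indeed $A$ and $B$ differ in every tensor slot). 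On the complementary qubits the leftover $Z$'s assemble into a string $\tilde{\mathcal Z}\in\{\mathds{1},Z\}^{\otimes m}$ for some $m$. Reordering the tensor factors so that the active qubits come first, we obtain $K=\pm\,M_{AB}\otimes\tilde{\mathcal Z}$.

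Finally, since the accumulated sign $s=\pm1$ is real and $Z$ is Hermitian, $K^\dagger=s\,M_{AB}^\dagger\otimes\tilde{\mathcal Z}^\dagger=s\,M_{BA}\otimes\tilde{\mathcal Z}$, whence
\begin{equation}
 T=K+bK^\dagger=s\,(M_{AB}+b\,M_{BA})\otimes\tilde{\mathcal Z}=\pm\,T_s\otimes\tilde{\mathcal Z},
\end{equation}
with $T_s=M_{AB}+b\,M_{BA}$ a singleton matrix excitation generator since $A\neq B$. I expect the only genuine obstacle to be the sign bookkeeping in the two regrouping steps---commuting the $Q$-type operators through the Jordan--Wigner $Z$-strings, and then absorbing the residual on-site $Z$'s into the $Q$'s---but each step multiplies $K$ only by $\pm1$, so no explicit sign ever needs to be computed, which is precisely why the statement carries an undetermined overall $\pm$.
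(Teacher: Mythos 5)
Your proposal is correct and follows essentially the same route as the paper's proof: substitute the Jordan--Wigner form, commute the $Q^{(x_i)}_i$ factors to the left of the $\mathcal Z_i$ strings at the cost of a sign tracked by the commutation product, then eliminate the residual $Z$'s on the active qubits via $Q^\dagger Z=Q^\dagger$ and $QZ=-Q$ to leave $\tilde{\mathcal Z}$ on the complement, and finally observe that the common real sign factors out of $K+bK^\dagger$. Your explicit identification of $\bigotimes_{i\in\vec i}Q_i^{(x_i)}$ with an off-diagonal singleton matrix $M_{AB}$, $A\neq B$, is a slightly more detailed rendering of the paper's closing remark but changes nothing substantive.
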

\begin{proof}
    \begin{align}
        K&\coloneqq\prod_{i\in\vec i}a_i^{\left(x_i\right)}\\
        &=\prod_{i\in\vec i}\left[Q^{\left(x_i\right)}_i\mathcal Z_i\right],\quad\text{where }Q^{\left(x_i\right)}_i\coloneqq\begin{cases}
            Q^\dagger_i,&x_i=0\\
            Q_i,&x_i=1
        \end{cases}\\
        &=\left[\bigotimes_{i\in\vec i}Q^{\left(x_i\right)}_i\right]\cdot\left[\prod_{n=1}^{\dim\vec i}\left(\left\{Q_{i_m}^{\left(x_{i_m}\right)}\right\}_{m=1}^{n-1}, \mathcal Z_{i_n}\right)\mathcal Z_{i_n}\right]\\
        &=\left[\prod_{n=1}^{\dim\vec i}\left(\left\{Q_{i_m}^{\left(x_{i_m}\right)}\right\}_{m=1}^{n-1}, \mathcal Z_{i_n}\right)\right]\cdot\left[\bigotimes_{i\in\vec i}Q^{\left(x_i\right)}_i\right]\cdot\left[\prod_{n=1}^{\dim\vec i}\mathcal Z_{i_n}\right].
    \end{align}

    Now note that
    \begin{align}
        &QZ=-Q;&
        Q^\dagger Z=Q^\dagger.&
    \end{align}
    Thus, the mutual support of $\left[\bigotimes_{i\in\vec i}Q^{\left(x_i\right)}_i\right]$ and $\left[\prod_{n=1}^{\dim\vec i}\mathcal Z_{i_n}\right]$ can be removed by replacing the $Z$ operators within the mutual support with $\mathds{1}$ and obtaining a factor of $1$ or $-1$. In this process the factor $\left[\bigotimes_{i\in\vec i}Q^{\left(x_i\right)}_i\right]$ is unchanged and $\left[\prod_{n=1}^{\dim\vec i}\mathcal Z_{i_n}\right]\mapsto\tilde{\mathcal Z}$. Additionally,
    \begin{equation}
        \left[\prod_{n=1}^{\dim\vec i}\left(\left\{Q_{i_m}^{\left(x_{i_m}\right)}\right\}_{m=1}^{n-1}, \mathcal Z_{i_n}\right)\right]\in\left\{1,-1\right\},
    \end{equation}
    which can be combined with the factor of $1$ or $-1$ from earlier to produce a factor $\pm1$. Therefore we find:
    \begin{align}
        &K\equiv\pm\left[\bigotimes_{i\in\vec i}Q^{\left(x_i\right)}_i\right]\otimes\tilde{\mathcal Z};&
        K^\dagger\equiv\pm\left[\bigotimes_{i\in\vec i}Q^{\left(x_i\right)}_i\right]^\dagger\otimes\tilde{\mathcal Z}.&
    \end{align}
    Finally, we can substitute this into the form of $T$:
    \begin{equation}
        T=\pm\left(\left[\bigotimes_{i\in\vec i}Q^{\left(x_i\right)}_i\right]+b\left[\bigotimes_{i\in\vec i}Q^{\left(x_i\right)}_i\right]^\dagger\right)\otimes\tilde{\mathcal Z},
    \end{equation}
    and note that the tensor factor in brackets is a singleton matrix excitation generator.
\end{proof}

Moving forward with this form of generalized fermionic excitation generators allows us to use \cref{theorem: singleton matrix excitation commutation} to show the following commutation relations hold for generalized fermionic excitations:

\begin{theorem}[Generalized Fermionic Excitation Commutation]\label{theorem: generalized fermionic excitation commutation}
    Two generalized fermionic excitations $T_1$ and $T_2$ commute iff any of the following conditions are satisfied:
    \begin{enumerate}
        \item $T_{s1}$ and $T_{s2}$ have disjoint support and $\left|\support T_{s1}\cap\support \tilde{\mathcal Z}_2\right|+\left|\support T_{s2}\cap\support \tilde{\mathcal Z}_1\right|$ is even,\label{condition: fermionic first}
        \item $T_{s1}\propto T_{s2}$,\label{condition: fermionic second}
        \item $T_{s1}$ and $T_{s2}$ have equivalent support and can be written as
        \begin{align}
            &T_{s1}\equiv M_{rs}+b_1M_{sr};
            &T_{s2}\equiv M_{tu}+b_2M_{ut}&.
        \end{align}
        where $r$, $s$, $t$ and $u$ are all distinct.\label{condition: fermionic third}
    \end{enumerate}
\end{theorem}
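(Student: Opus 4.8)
The plan is to reduce the statement to the already-proven Singleton Matrix Excitation Commutation theorem (\cref{theorem: singleton matrix excitation commutation}) by peeling off the Jordan--Wigner $Z$-strings, and then to run a case analysis governed by a single parity sign. First I would apply the tensor-factorisation lemma for generalized fermionic excitations to write $T_1=\pm T_{s1}\otimes\tilde{\mathcal Z}_1$ and $T_2=\pm T_{s2}\otimes\tilde{\mathcal Z}_2$, where $T_{sk}$ is a singleton-matrix excitation generator with support $\support(T_{sk})$, $\tilde{\mathcal Z}_k$ is a $Z$-string with support $\support(\tilde{\mathcal Z}_k)$, and $\support(T_{sk})\cap\support(\tilde{\mathcal Z}_k)=\emptyset$. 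Since every $Q$ and $Q^\dagger$ anticommutes with $Z$ while $Z$-strings commute pairwise, one has $\tilde{\mathcal Z}_1 T_{s2}=(-1)^{n_B}T_{s2}\tilde{\mathcal Z}_1$ and $\tilde{\mathcal Z}_2 T_{s1}=(-1)^{n_A}T_{s1}\tilde{\mathcal Z}_2$, with $n_A\coloneqq|\support(T_{s1})\cap\support(\tilde{\mathcal Z}_2)|$ and $n_B\coloneqq|\support(T_{s2})\cap\support(\tilde{\mathcal Z}_1)|$ --- precisely the two overlap counts that appear in Condition~\ref{condition: fermionic first}.

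Using $[T_{sk},\tilde{\mathcal Z}_k]=0$ and $[\tilde{\mathcal Z}_1,\tilde{\mathcal Z}_2]=0$, moving the $Z$-strings to the right in both orderings of the product gives the identity
\begin{equation}
 \left[T_1,T_2\right]=\left((-1)^{n_B}T_{s1}T_{s2}-(-1)^{n_A}T_{s2}T_{s1}\right)\tilde{\mathcal Z}_1\tilde{\mathcal Z}_2 .
\end{equation}
As $\tilde{\mathcal Z}_1\tilde{\mathcal Z}_2$ is unitary, $T_1$ and $T_2$ commute if and only if $T_{s1}T_{s2}=(-1)^{n_A+n_B}T_{s2}T_{s1}$. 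This is the key reduction: it converts a question about Jordan--Wigner-encoded fermionic generators into a question about bare singleton-matrix excitation generators, modulated only by the parity of $n_A+n_B$.

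The argument then branches on that parity. I would first note that if $T_{s1}$ and $T_{s2}$ have equal support then $n_A=n_B=0$, because $\support(T_{sk})$ is disjoint from $\support(\tilde{\mathcal Z}_k)$; this explains why Conditions~\ref{condition: fermionic second} and \ref{condition: fermionic third} carry no explicit parity clause. When $n_A+n_B$ is even, commutation is equivalent to $\left[T_{s1},T_{s2}\right]=0$, so \cref{theorem: singleton matrix excitation commutation} immediately delivers the three alternatives: disjoint support (together with the even-parity clause this is Condition~\ref{condition: fermionic first}), $T_{s1}\propto T_{s2}$ (Condition~\ref{condition: fermionic second}), and equal support with all four singleton indices distinct (Condition~\ref{condition: fermionic third}). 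When $n_A+n_B$ is odd, at least one of $n_A,n_B$ is nonzero, which forces the supports of $T_{s1}$ and $T_{s2}$ to be unequal; here commutation requires the anticommutation-type relation $T_{s1}T_{s2}=-T_{s2}T_{s1}$, which I would analyze by mirroring the four-term bracket computation of \cref{theorem: singleton matrix excitation commutation} restricted to the mutual support (noting that disjoint supports give $T_{s1}T_{s2}=T_{s2}T_{s1}\ne0$, so the required sign cannot occur there).

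The step I expect to be the main obstacle is this odd-parity, overlapping-support case. The reduction leaves us deciding when $T_{s1}T_{s2}=\pm T_{s2}T_{s1}$ for singleton-matrix excitation generators whose supports overlap without coinciding, and the sign must be threaded through the same $M_{ab}M_{cd}=\delta_{bc}M_{ad}$ bookkeeping as in \cref{theorem: singleton matrix excitation commutation}. The delicate point is the sub-case in which every relevant singleton product on the shared support vanishes: there both signs of the relation hold trivially, so one must argue separately --- or sharpen the wording of Condition~\ref{condition: fermionic third} --- that such configurations are either impossible or already captured by the listed conditions. By contrast, the tensor factorisation, the $Z$-string commutations, and the parity accounting of the first two paragraphs are routine.
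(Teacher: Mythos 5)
Your strategy is the same as the paper's: factor each excitation as $\pm T_{s}\otimes\tilde{\mathcal Z}$ via the tensor-factorisation lemma, push the $Z$-strings through at the cost of the parity signs, and reduce to \cref{theorem: singleton matrix excitation commutation}. Your single identity $[T_1,T_2]\propto\bigl((-1)^{n_B}T_{s1}T_{s2}-(-1)^{n_A}T_{s2}T_{s1}\bigr)\tilde{\mathcal Z}_1\tilde{\mathcal Z}_2$ is an equivalent repackaging of the paper's two-term expression $[T_1,T_2]=[(\tilde{\mathcal Z}_1,T_{s2})(\tilde{\mathcal Z}_2,T_{s1})-1]T_2T_1+c_1c_2(\tilde{\mathcal Z}_1,T_{s2})[T_{s1},T_{s2}]\tilde{\mathcal Z}_1\tilde{\mathcal Z}_2$, and your observation that equal support forces $n_A=n_B=0$ is the same bookkeeping the paper uses to exempt Conditions \ref{condition: fermionic second} and \ref{condition: fermionic third} from a parity clause.

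The proposal is nonetheless incomplete, and the step you defer is exactly where it cannot be closed as stated. The sub-case you flag --- every singleton product on the mutual support vanishing --- is not vacuous and is not captured by Conditions \ref{condition: fermionic first}--\ref{condition: fermionic third}. Take $K_1=a_1^\dagger a_2^\dagger a_3 a_4$ and $K_2=a_3^\dagger a_5^\dagger a_4 a_6$, two ordinary electron-conserving double excitations. Every one of the eight products $K_1^{(\dagger)}K_2^{(\dagger)}$, $K_2^{(\dagger)}K_1^{(\dagger)}$ contains a repeated $a_3^{(\dagger)}$ or $a_4^{(\dagger)}$ and vanishes, so $T_1=K_1-K_1^\dagger$ and $T_2=K_2-K_2^\dagger$ commute; equivalently, on the shared qubits the restrictions of $T_{s1}$ and $T_{s2}$ are $M_{k\bar k}$ and $M_{\gamma\bar\gamma}$ with $k,\bar k,\gamma,\bar\gamma$ all distinct, so $T_{s1}T_{s2}=T_{s2}T_{s1}=0$ and your reduced condition holds for either parity. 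Yet $\support T_{s1}=\{1,2,3,4\}$ and $\support T_{s2}=\{3,4,5,6\}$ are neither disjoint nor equivalent, and $T_{s1}\not\propto T_{s2}$: the ``only if'' direction of the theorem fails here. So the missing case in your plan is not a technicality you can discharge; it is a counterexample.

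For what it is worth, the paper's own proof inherits the same hole from \cref{theorem: singleton matrix excitation commutation}: there, the deduction that non-disjoint support plus commutation forces \emph{equivalent} support is made under the unstated assumption that the bracketed factors $\delta_{l\gamma}M_{k\delta}-\delta_{k\delta}M_{\gamma l}$, etc., are nonzero, which fails precisely when $k,l,\gamma,\delta$ are all distinct on a partial overlap. The repair is to weaken Condition \ref{condition: fermionic third} (and its singleton-matrix counterpart) to require only that the four indices on the mutual support be distinct, dropping the equivalent-support hypothesis; with that restatement both your route and the paper's go through, and your parity analysis of the odd case then correctly rules out commutation whenever the products do not vanish.
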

\begin{proof}
    Consider the product
    \begin{align}
        T_1T_2&=c_1T_{s1}\otimes\tilde{\mathcal Z}_1\cdot c_2T_{s2}\otimes\tilde{\mathcal Z}_2\\
        &=c_1c_2T_{s1}\otimes\tilde{\mathcal Z}_1\cdot T_{s2}\otimes\tilde{\mathcal Z}_2\\
        &=c_1c_2\left(\tilde{\mathcal Z}_1, T_{s2}\right)T_{s1}T_{s2}\tilde{\mathcal Z}_1\tilde{\mathcal Z}_2,\quad\text{where tensor products with the identity are implied}\\
        &=c_1c_2\left(\tilde{\mathcal Z}_1, T_{s2}\right)T_{s2}T_{s1}\tilde{\mathcal Z}_1\tilde{\mathcal Z}_2+c_1c_2\left(\tilde{\mathcal Z}_1, T_{s2}\right)\left[T_{s1},T_{s2}\right]\tilde{\mathcal Z}_1\tilde{\mathcal Z}_2\\
        &=\left(\tilde{\mathcal Z}_1, T_{s2}\right)\left(\tilde{\mathcal Z}_2, T_{s1}\right)c_2T_{s2}\otimes\tilde{\mathcal Z}_2\cdot c_1T_{s1}\otimes\tilde{\mathcal Z}_1+c_1c_2\left(\tilde{\mathcal Z}_1, T_{s2}\right)\left[T_{s1},T_{s2}\right]\tilde{\mathcal Z}_1\tilde{\mathcal Z}_2\\
        &=\left(\tilde{\mathcal Z}_1, T_{s2}\right)\left(\tilde{\mathcal Z}_2, T_{s1}\right)T_2T_1+c_1c_2\left(\tilde{\mathcal Z}_1, T_{s2}\right)\left[T_{s1},T_{s2}\right]\tilde{\mathcal Z}_1\tilde{\mathcal Z}_2,
    \end{align}
    where $c_1,c_2\in\left\{1,-1\right\}$ and we have used the fact that $\left(\tilde{\mathcal Z}_m, T_{sn}\right)\in\left\{1,-1\right\}$ for all $m, n\in\left\{1,2\right\}$.

    Therefore, we can write the commutator as
    \begin{equation}
        \left[T_1,T_2\right]=\left[\left(\tilde{\mathcal Z}_1, T_{s2}\right)\left(\tilde{\mathcal Z}_2, T_{s1}\right)-1\right]T_2T_1+c_1c_2\left(\tilde{\mathcal Z}_1, T_{s2}\right)\left[T_{s1},T_{s2}\right]\tilde{\mathcal Z}_1\tilde{\mathcal Z}_2.
    \end{equation}

    The entries of $T_{sn}$ are drawn from the set $\left\{0,1\right\}$. Thus, if $T_{s1}$ and $T_{s2}$ do not commute, then $T_{s2}T_{s1}$ and $\left[T_{s1},T_{s2}\right]$ are linearly independent, so for $\left[T_1,T_2\right]=0$ we require both both terms to vanish:
    \begin{numcases}{}
        \left[\left(\tilde{\mathcal Z}_1, T_{s2}\right)\left(\tilde{\mathcal Z}_2, T_{s1}\right)-1\right]T_2T_1=0,&and\\
        c_1c_2\left(\tilde{\mathcal Z}_1, T_{s2}\right)\left[T_{s1},T_{s2}\right]\tilde{\mathcal Z}_1\tilde{\mathcal Z}_2=0.
    \end{numcases}
    Which simplifies to
    \begin{numcases}{}
        \left[\left(\tilde{\mathcal Z}_1, T_{s2}\right)\left(\tilde{\mathcal Z}_2, T_{s1}\right)-1\right]=0\quad\text{or}\quad T_2T_1=0,&and\label{condition: even Zs}\\
        \left[T_{s1},T_{s2}\right]=0.\label{condition: M matrices commute}
    \end{numcases}
    
    Thus, Condition \ref{condition: M matrices commute} requires we satisfy at least one of the three conditions from \cref{theorem: singleton matrix excitation commutation}. Therefore, each condition in \cref{theorem: generalized fermionic excitation commutation} corresponds to the condition with the same number in \cref{theorem: singleton matrix excitation commutation}. However, Condition \ref{condition: even Zs} requires us to strengthen the conditions in \cref{theorem: generalized fermionic excitation commutation}.

    First, consider when $T_2T_1=0$. The $\tilde{\mathcal Z_n}$ factors will only produce phase factors and so $T_2T_1=0\iff T_{s2}T_{s1}=0$. Using the identities
    \begin{align}
        &T_{s1}\equiv M_{rs}+b_1M_{sr},
        &T_{s2}\equiv M_{tu}+b_2M_{ut}&,
    \end{align}
    where $r\ne s$ and $t\ne u$ we find
    \begin{equation}
        0=T_{s2}T_{s1}=\delta_{ur}M_{ts}+b_2\delta_{tr}M_{us}+b_1\delta_{us}M_{tr}+b_2b_1\delta_{ts}M_{ur}.
    \end{equation}
    We note $M_{ts}$, $M_{us}$, $M_{tr}$, and $M_{ur}$ are linearly independent as $r\ne s$ and $t\ne u$. Thus, $T_2T_1=0$ iff:
    \begin{numcases}{}
        u\ne r,\\
        t\ne r,\\
        u\ne s,\\
        t\ne s.
    \end{numcases}
    Combining these conditions with $r\ne s$ and $t\ne u$ implies $T_2T_1=0$ iff $r$, $s$, $t$ and $u$ are all distinct. We already obtained this condition from Condition \ref{condition: M matrices commute}, giving us Condition \ref{condition: fermionic third}.

    Finally, as $Q$ and $Q^\dagger$ anti-commute with $Z$, then $\left[\left(\tilde{\mathcal Z}_1, T_{s2}\right)\left(\tilde{\mathcal Z}_2, T_{s1}\right)-1\right]=0$ occurs iff $\left|\support T_{s1}\cap\support \tilde{\mathcal Z}_2\right|+\left|\support T_{s2}\cap\support \tilde{\mathcal Z}_1\right|$ is even. As $T_{s1}\propto T_{s2}$ implies $\left|\support T_{s1}\cap\support \tilde{\mathcal Z}_2\right|+\left|\support T_{s2}\cap\support \tilde{\mathcal Z}_1\right|$ is even we need only modify Condition \ref{condition: fermionic first}.
\end{proof}
\begin{corollary}[Fermionic Excitation Commutation]
    Now consider fermionic excitation generators defined as follows:
    \begin{equation}
        T=K-K^\dagger\quad\text{where }K\coloneqq\prod_{i\in\vec i}a_i^{\left(x_i\right)},\quad a^{\left(x_i\right)}_i\coloneqq\begin{cases}
            a^\dagger_i,&x_i=0\\
            a_i,&x_i=1
        \end{cases},
    \end{equation}
    and $x$ is a bit-string and $\vec i$ is a tuple of unique orbital indices.
    
    If, for two such generators, the sets of orbitals acted upon are equivalent, then either conditions \ref{condition: fermionic second} or \ref{condition: fermionic third} must hold. Thus, one can simplify the conditions for commutation to:
    \begin{enumerate}
        \item The sets of orbitals acted upon are disjoint and $\left|\support T_{s1}\cap\support \tilde{\mathcal Z}_2\right|+\left|\support T_{s2}\cap\support \tilde{\mathcal Z}_1\right|$ is even, \label{condition: corollary Fermionic excitation commutation condition 1}
        \item The sets of orbitals acted upon are equivalent.
    \end{enumerate}
\end{corollary}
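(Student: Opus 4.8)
The plan is to obtain this as an immediate corollary of \cref{theorem: generalized fermionic excitation commutation}, specialised to $b_1=b_2=-1$. The only point that really needs checking is that, whenever the two excitations act on the same set of orbitals, one of conditions \ref{condition: fermionic second} and \ref{condition: fermionic third} is automatically met; conversely each of conditions \ref{condition: fermionic second} and \ref{condition: fermionic third} forces the two $T_s$-supports to coincide, while condition \ref{condition: fermionic first} forces them to be disjoint. Granting this, the three alternatives of the theorem collapse to exactly the two lines in the statement. I would open by recording, via the Generalized Fermionic Excitation Tensor Factorisation lemma, that $T=\pm T_s\otimes\tilde{\mathcal Z}$ with $T_s$ acting on the orbital tuple $\vec i$ and $\tilde{\mathcal Z}$ acting only on qubits strictly below $\max\vec i$ and outside $\vec i$; hence $\support T_s=\vec i$, so ``the set of orbitals acted upon'' is precisely $\support T_s$, and ``equivalent orbital sets'' means ``equivalent $T_s$-support''.

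The core step is a short computational-basis description of $T_s$. Let $x$ be the bit-string of the excitation restricted to $\vec i$, with $x_i=1$ on annihilators and $x_i=0$ on creators. Then $\bigotimes_{i\in\vec i}Q_i^{(x_i)}$ annihilates every computational-basis vector on the qubits $\vec i$ except the one indexed by $x$, which it maps to the one indexed by the bitwise complement $\overline{x}$; so it equals the singleton matrix $M_{\overline{x}x}$ and $T_s=\pm\left(M_{\overline{x}x}-M_{x\overline{x}}\right)$. In the notation of condition \ref{condition: fermionic third} this reads $T_{s1}=M_{rs}-M_{sr}$ with $(r,s)=(\overline{x^{(1)}},x^{(1)})$ and $T_{s2}=M_{tu}-M_{ut}$ with $(t,u)=(\overline{x^{(2)}},x^{(2)})$. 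Now take two such generators on a common orbital set. If $x^{(1)}\in\{x^{(2)},\overline{x^{(2)}}\}$ then $T_1=\pm T_2$ (if $x^{(1)}=x^{(2)}$ then $K_1=K_2$; if $x^{(1)}=\overline{x^{(2)}}$ then $K_1=\pm K_2^{\dagger}$), so $T_{s1}\propto T_{s2}$ and condition \ref{condition: fermionic second} holds. Otherwise $x^{(1)}\notin\{x^{(2)},\overline{x^{(2)}}\}$; since complementation has no fixed point, the four indices $\overline{x^{(1)}},x^{(1)},\overline{x^{(2)}},x^{(2)}$ are then pairwise distinct---for instance $\overline{x^{(1)}}=x^{(2)}$ would force $x^{(1)}=\overline{x^{(2)}}$, a contradiction, and the remaining coincidences are excluded the same way---so condition \ref{condition: fermionic third} holds with $b_1=b_2=-1$.

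Assembling the pieces: if the orbital sets coincide then condition \ref{condition: fermionic second} or \ref{condition: fermionic third} holds, hence by \cref{theorem: generalized fermionic excitation commutation} the generators commute; if the orbital sets are disjoint then conditions \ref{condition: fermionic second} and \ref{condition: fermionic third} are impossible, and only condition \ref{condition: fermionic first}---the parity of $\left|\support T_{s1}\cap\support\tilde{\mathcal Z}_2\right|+\left|\support T_{s2}\cap\support\tilde{\mathcal Z}_1\right|$---can decide commutation; and if the orbital sets overlap but are neither equal nor disjoint, none of the three conditions can apply, so the generators do not commute. This is precisely the claimed simplification. I expect the main obstacle to be the bookkeeping in the core step: correctly identifying $\bigotimes_{i\in\vec i}Q_i^{(x_i)}$ with the single singleton matrix $M_{\overline{x}x}$, and then verifying pairwise distinctness of $\overline{x^{(1)}},x^{(1)},\overline{x^{(2)}},x^{(2)}$; everything else is a routine reduction to the already-established \cref{theorem: generalized fermionic excitation commutation}.
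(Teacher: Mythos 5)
Your proposal is correct and follows the same route the paper intends: the corollary is obtained by reducing to Theorem~\ref{theorem: generalized fermionic excitation commutation} and observing that equivalent orbital sets force condition~\ref{condition: fermionic second} or~\ref{condition: fermionic third}, while disjoint sets leave only condition~\ref{condition: fermionic first} and partial overlap excludes all three. The paper asserts the key observation without detail, and your identification of $\bigotimes_{i\in\vec i}Q_i^{(x_i)}$ with the singleton matrix $M_{\overline{x}x}$, together with the fixed-point-free complementation argument for pairwise distinctness of $\overline{x^{(1)}},x^{(1)},\overline{x^{(2)}},x^{(2)}$, supplies it correctly.
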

\begin{corollary}[Electron Conserving Fermionic Excitation Commutation]
    Fermionic Excitations that conserve electron number have even $\dim\vec i$, so if the sets of orbitals acted upon by two different fermionic excitations are disjoint, then $\left|\support T_{s1}\cap\support \tilde{\mathcal Z}_2\right|+\left|\support T_{s2}\cap\support \tilde{\mathcal Z}_1\right|$ must be even. Hence, we can simplify Condition \ref{condition: corollary Fermionic excitation commutation condition 1} to: the set of orbitals acted upon are disjoint.
\end{corollary}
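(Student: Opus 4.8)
The plan is to reduce the statement to a one-line parity count over orbital indices. First I would recall the Generalized Fermionic Excitation Tensor Factorisation lemma: each fermionic excitation generator factors as $T_n=\pm T_{sn}\otimes\tilde{\mathcal Z}_n$, where $T_{sn}$ is a singleton-matrix excitation generator with $\support T_{sn}=\vec i_n$ (the orbital set acted upon) and $\tilde{\mathcal Z}_n\in\{\mathds{1},Z\}^{\otimes m_n}$. The first real step is to pin down $\support\tilde{\mathcal Z}_n$ explicitly. Since the Jordan--Wigner tail is $\mathcal Z_i=\bigotimes_{j<i}Z_j$ and, in the proof of the factorisation lemma, every $Z$ sitting on a qubit of $\vec i_n$ (i.e.\ on the support of a $Q^{(\dagger)}$ factor) is absorbed into a $\pm1$ phase, a qubit $j$ survives in $\tilde{\mathcal Z}_n$ if and only if $j\notin\vec i_n$ and $j$ lies below an odd number of the orbitals of $\vec i_n$, i.e.\ $|\{k\in\vec i_n:k>j\}|$ is odd. (This is just the symmetric difference of the intervals $\{1,\dots,i-1\}$ over $i\in\vec i_n$, with $\vec i_n$ itself removed.)

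Next I would assume $\vec i_1\cap\vec i_2=\emptyset$ and evaluate $S\coloneqq|\support T_{s1}\cap\support\tilde{\mathcal Z}_2|+|\support T_{s2}\cap\support\tilde{\mathcal Z}_1|$. Using the description above, the first summand counts the orbitals $a\in\vec i_1$ lying below an odd number of orbitals of $\vec i_2$, and the second counts the orbitals $b\in\vec i_2$ lying below an odd number of orbitals of $\vec i_1$. Now use the elementary fact that, for any integer-valued $g$, the number of points where $g$ is odd has the same parity as $\sum g$: applying it to each summand gives $S\equiv|\{(a,b)\in\vec i_1\times\vec i_2:a<b\}|+|\{(a,b)\in\vec i_1\times\vec i_2:a>b\}|\pmod 2$. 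Because $\vec i_1$ and $\vec i_2$ are disjoint, every cross pair is strictly ordered and lands in exactly one of these two sets, so the right-hand side equals $|\vec i_1|\,|\vec i_2|$.

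The last step invokes electron-number conservation, which forces equal numbers of creation and annihilation operators and hence $\dim\vec i_n$ even; therefore $|\vec i_1|\,|\vec i_2|$ is even and $S$ is even. Comparing with the Fermionic Excitation Commutation corollary, Condition \ref{condition: corollary Fermionic excitation commutation condition 1} required both that the orbital sets be disjoint and that $S$ be even; we have shown the second clause is automatic here, so the condition collapses to ``the orbital sets are disjoint,'' as claimed. The step I expect to be the main obstacle is the first one---tracking exactly which $Z$'s survive the cancellation in the factorisation lemma and checking that $\support\tilde{\mathcal Z}_n$ is precisely the parity-of-overlap set described above; once that is nailed down, the remainder is the two-line counting argument, and the electron-conservation hypothesis enters only at the very end to kill the parity term.
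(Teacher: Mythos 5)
Your proof is correct and takes the route the paper intends: the paper offers only the one-line justification that even $\dim\vec i$ plus disjointness of the orbital sets forces $\left|\support T_{s1}\cap\support \tilde{\mathcal Z}_2\right|+\left|\support T_{s2}\cap\support \tilde{\mathcal Z}_1\right|$ to be even, and your explicit description of $\support\tilde{\mathcal Z}_n$ as the parity-of-overlap set, combined with the double-counting identity showing this sum is congruent to $\left|\vec i_1\right|\,\left|\vec i_2\right|$ modulo $2$, supplies exactly the details that justify that assertion. As a minor bonus, your computation shows the conclusion already holds when just one of the two excitations has even $\dim\vec i$.
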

\section{Generalized Non-Commuting Set Cardinalities}
\label{appendix:non-commuting set cardinalities}
In this appendix, we derive the cardinalities of the generalized non-commuting sets for both the QEB pool (\cref{appendix: size QEB}) and the qubit pool (\cref{appendix: size qubit}) using operator and support commutation. Further, we will derive their asymptotic scalings.

To proceed, we will define
\begin{equation}
    \pool_q\left(N\right)\coloneqq\left\{A\in\pool\left(N\right)\colon \left|\support{A}=q\right|\right\},
\end{equation}
as the subpool of elements supported only by $q$ qubits.

\subsection{Generalized QEB Pool}
\label{appendix: size QEB}

The cardinality of the support non-commuting set of $A\in\pool_q\left(N\right)$ is independent of $A$ and is given by
\begin{align}
    C^S_q\left(N\right)\equiv\left|\nset_{\textrm{S}}\left[\pool\left(N\right),A\in\pool_q\left(N\right)\right]\right|&=\sum_{\substack{p\in\mathcal V\\\colon p\le N}}\sum_{a=1}^{\min\left\{p,q\right\}}\begin{pmatrix}p-1\\\frac{1}{2}p\end{pmatrix}\begin{pmatrix}q\\a\end{pmatrix}\begin{pmatrix}N-q\\p-a\end{pmatrix}-1\\
    &=\bigTheta{N^{\max\mathcal V-1}}.
\end{align}
Here we sum over the possible numbers of overlapping qubits, $a$. The third choice is the number of sets of $p-a$ distinct qubits of $N-q$, the second is the number of sets of $q$ distinct qubits of $a$, and the first is the number of permutations of these qubits that give distinct qubit excitations. The $-1$ removes the operator inducing the support non-commuting set.

Similarly, we can use the operator commutation rules for qubit excitations to show that
\begin{align}
    C^O_q\left(N\right)\equiv\left|\nset_{\textrm{O}}\left[\pool\left(N\right),A\in\pool_q\left(N\right)\right]\right|&=\sum_{\substack{p\in\mathcal V\\\colon p\le N}}\sum_{a=1}^{\alpha\left(p,q\right)}\begin{pmatrix}p-1\\\frac{1}{2}p\end{pmatrix}\begin{pmatrix}q\\a\end{pmatrix}\begin{pmatrix}N-q\\p-a\end{pmatrix}\\&=\bigTheta{N^{\max\mathcal V-1}},
\end{align}
where
\begin{equation}
    \alpha\left(p,q\right)\coloneqq\begin{cases}
    \min\left\{p,q\right\}, & p\ne q\\
    q-1, & p=q.
    \end{cases}
\end{equation}
The choices are the same as for the cardinality of the support non-commuting set. However, the summation excludes equivalent support.
We note that the operator and support non-commuting sets are almost equivalent:
\begin{equation}
    C^S_q\left(N\right)-C^O_q\left(N\right)=\begin{pmatrix}q-1\\\frac{1}{2}q\end{pmatrix}-1=\Theta\left(1\right).
\end{equation}
Because $\nset_{\textrm{O}}\left(\pool,A\right)\subseteq\nset_{\textrm{S}}\left(\pool,A\right)$, then $C^S_q\left(N\right)-C^O_q\left(N\right)$ is number of elements by which the sets differ. For single excitations, this difference vanishes, and for double excitations is 
two.

Note that because the support and operator non-commuting sets differ by $\Theta\left(1\right)$ elements, then an ansatz circuit of mutually 
operator commuting ansatz elements will be, at most, exactly 
$\displaystyle\max_{q\in\mathcal 
V}\begin{pmatrix}q-1\\\frac{1}{2}q\end{pmatrix}=\Theta\left(1\right)$ times 
deeper than an ansatz circuit of mutually support-commuting ansatz elements.

\subsection{Generalized qubit Pool}
\label{appendix: size qubit}

Similarly, the support non-commuting sets for the qubit pool have cardinalities
\begin{align}
    C^S_q\left(N\right)\equiv\left|\nset_{\textrm{S}}\left[\pool\left(N\right),A\in\pool_q\left(N\right)\right]\right|&=\sum_{\substack{p\in\mathcal V\\\colon p\le N}}\sum_{a=1}^{\min\left\{p,q\right\}}2^{p-1}\begin{pmatrix}q\\a\end{pmatrix}\begin{pmatrix}N-q\\p-a\end{pmatrix}-1\\&=\bigTheta{N^{\max\mathcal V-1}}.
\end{align}

Further, half of the ansatz elements with partial support with $A\in\pool_q\left(N\right)$ operator commute with $A$. Additionally, those with equivalent support with $A$ operator commute with $A$. Thus,
\begin{align}
    C^O_q\left(N\right)\equiv\left|\nset_{\textrm{O}}\left[\pool\left(N\right),A\in\pool_q\left(N\right)\right]\right|&=\sum_{\substack{p\in\mathcal V\\\colon p\le N}}\sum_{a=1}^{\alpha\left(p,q\right)}2^{p-2}\begin{pmatrix}q\\a\end{pmatrix}\begin{pmatrix}N-q\\p-a\end{pmatrix}\\&=\bigTheta{N^{\max\mathcal V-1}}.
\end{align}
Therefore, the support and operator non-commuting sets differ asymptotically in size by a perfector of two:
\begin{equation}
    C^S_q\left(N\right)-C^O_q\left(N\right)=\frac{1}{2}C^S_q\left(N\right)+2^{q-1}
    -1=\bigTheta{N^{\max\mathcal V-1}}.
\end{equation}
This is the cardinality of the set of ansatz elements that do not support commute but do operator commute with $A\in\pool_q\left(N\right)$. Thus, an ansatz circuit of mutually operator-commuting ansatz elements can be, at most, $\Omega\left(N^{\max\mathcal V-1}\right)$ times deeper than an ansatz circuit of mutually support-commuting ansatz elements. But as we can construct any ansatz circuit of mutually operator-commuting ansatz elements by splitting the pool into $\bigTheta{N^{\max\mathcal V-1}}$ disjoint subsets of mutually support non-commuting ansatz elements and using each subset to construct a layer of the ansatz circuit. Then we can remove all the ansatz elements that do not appear in the ansatz circuit we originally wished to construct---this ansatz circuit is $\mathcal O\left(N^{\max\mathcal V-1}\right)$ layers deep. Thus, an ansatz circuit of mutually operator-commuting ansatz elements is at most $\bigTheta{N^{\max\mathcal V-1}}$ times deeper than an ansatz circuit of mutually support-commuting ansatz elements.

\section{Statistical Analysis of Sequences of Non-Commuting Sets}
\label{appendix: sequences of support sets}

In this appendix, we derive expressions for the expected cardinality of the subpool searched to find a local minimum by subpool exploration. This will allow us to derive the expected number of loss function evaluations per element appended for Explore- and Dynamic-ADAPT-VQE. First, we will consider how the subpools at each step of subpool exploration vary for support commutation in \cref{appendix: support commutation sequences}. Next, we will extend this result to operator commutation for the QEB pool only in \cref{appendix: operator commutation sequences}. Using these results in \cref{appendix: statistical support analysis} and \cref{appendix: explore}, we will derive an upper bound for the expected number of loss function evaluations per element for Explore-ADAPT-VQE in terms of the number of local minima (see property \ref{property: local minimum}) in the pool
\begin{equation}
    M\coloneqq\left|\left\{A\in\pool\colon L\left(A\right)=\smashoperator[r]{\min_{B\in\nset_{\textrm{G}}\left(\pool,A\right)}} L\left(B\right)\right\}\right|.
\end{equation}
Finally, we will use this upper bound to also upper bound the expected number of loss function evaluations per element for Dynamic-ADAPT-VQE in \cref{appendix: dynamic}.

\subsection{Support-Based-Commutation Sequences}
\label{appendix: support commutation sequences}

Within this subsection, we do not specify a pool as support commutation allows for a pool agnostic analysis. First, we will consider the case in which $\mathcal S_0$ is a singleton. Next, we will use this to upper bound the general case.

If $\mathcal S_0$ is a singleton, using the recursion relation in \cref{eq: subpool update}, then $\mathcal S_{m+1}$ has no operators that support any of $\left\{A_i\right\}_{i=0}^{m-1}$. Thus, at step $m$ the remaining pool $\pool\backslash\mathcal S_{\le m}$ will only have support on $N_m\coloneqq N-\sum_{i=0}^{m-1}\nu_m$ qubits, where
\begin{equation}
    \nu_m\coloneqq\begin{cases}
        \left|\support A_m\cap\left(\cup_{l=0}^{m-1}\support A_l\right)\right|\in\left[0,q_m-1\right],&m\ge1\\
        q_0,&m=0
    \end{cases}
\end{equation}
is the cardinality of the mutual support of the previous ansatz elements and $A_m$, with $q_m\coloneqq\left|\support A_m\right|$. Therefore,
\begin{align}
    \left|\mathcal S_{m+1}\right|\equiv\left|\nset_{\textrm{S}}\left[\pool\backslash\mathcal S_{\le m},A_m\in\pool_{q_m}\left(N\right)\right]\right|&=C^S_{{q_m}-\nu_m}\left(N_m\right)+\min\left\{1,\nu_m\right\}\\&=\Theta\left(N_m^{\max\mathcal V-1}\right).
\end{align}

If $\left|\mathcal S_0\right|\ge 1$, then we will have pre-evaluated some ansatz 
elements in the pool, and so the equivalence is demoted to an inequality:
\begin{align}\label{eq:cardinality bound support}
    \left|\mathcal S_{m+1}\right|\equiv\left|\nset_{\textrm{S}}\left[\pool\backslash\mathcal S_{\le m},A_m\in\pool_{q_m}\left(N\right)\right]\right|&\le C^S_{q_m-\nu_m}\left(N_m\right)+\min\left\{1,\nu_m\right\}\\&=\mathcal O\left(N_m^{\max\mathcal V-1}\right)
\end{align}

\subsection{Operator-Based-Commutation Sequences}
\label{appendix: operator commutation sequences}

Here we consider the more complex case of operator commutation. We can no longer remain pool agnostic; we will only consider the QEB pool for ease. First, we note a property of the QEB pool that allows us to establish an approximate equivalence between operator and support commutation sequences for the QEB pool. Using this, we will modify the results for support commutation sequences to QEB operator-commutation sequences.

First, note that the QEB pool has the following property. Consider an element $A$ in the operator non-commuting set of $B\in\pool$. All the ansatz elements that do not support commute but do operator commute with $A$ form a subset of the operator non-commuting set of $B\in\pool$. That is
\begin{equation}
    A\in \nset_{\textrm{O}}\left[\pool,B\right]\iff\nset_{\textrm{S}}\left[\pool,A\right]\backslash \nset_{\textrm{O}}\left[\pool,A\right]\subset\nset_{\textrm{O}}\left[\pool,B\right].
\end{equation}
Thus, all ansatz elements that support and operator commute with $A_m$ will be in the set $\mathcal S_{\le m}$ for $m\ge1$ if $A_m\in\mathcal S_{\le m}$. The condition of $A_m\in\mathcal S_{\le m}$ is true in subpool exploration. If, however, the ansatz elements that support and operator commute with $A_0$ are in $\mathcal S_0$, then this is extended to $m\ge 0$. Thus:
\begin{align}
    \left|\mathcal S_{m+1}\right|&\equiv\left|\nset_{\textrm{O}}\left[\pool\backslash\mathcal S_{\le m},A_m\in\pool_{q_m}\left(N\right)\cap\left(\mathcal S_{\le m}\right)\right]\right|\\
&\le \begin{cases}
C^O_{q_m}\left(N_0\right), &m=0\\
C^S_{q_m-\nu_m}\left(N_1\right)+\begin{pmatrix}\operatorname{Supp}A_0-1\\\frac{1}{2}\operatorname{Supp}A_0\end{pmatrix}, &\text{for $m=1$ if }\nset_{\textrm{S}}\left[\pool,A_0\right]\backslash \nset_{\textrm{O}}\left[\pool,A_0\right]\not\subseteq\mathcal S_0\\
C^S_{q_m-\nu_m}\left(N_m\right)+1, &\text{otherwise},
\end{cases}\label{eq:cardinality bound commutation}
\end{align}
with equality if $\mathcal S_0\equiv\left\{A_0\right\}$ or 
$\left\{A_0\right\}\cup\nset_{\textrm{S}}\left[\pool,A_0\right]\backslash \mathcal 
N_O\left[\pool,A_0\right]$. This means that after the first two steps, subpool exploration is the same, when using the QEB pool, independent of whether support and operator commutativity is used.

\subsection{Statistical Analysis of Support-Based-Commutation Sequences}
\label{appendix: statistical support analysis}

This subsection outlines an upper bound for the probability of having terminated during subpool exploration after $m$ steps under a reasonable assumption: subpool exploration is more effective than random sampling. With this framework, we will bound the expected number of steps and elements searched for Explore- and Dynamic-ADAPT-VQE in the subsequent subsections.

Suppose we are seeking a local minimum, as in subpool exploration, but instead, we generate the sequence
\begin{equation}
    \mathcal S_{m+1}=\nset_{\textrm{G}}\left(\pool\backslash\mathcal S_{\le m},A_m\right)\subseteq\nset_{\textrm{G}}\left(\pool,A_m\right)\quad\forall m\ge0,
\end{equation}
using $A_m$ drawn from a distribution independent of  $\left\{A_l\right\}_{l=0}^{m-1}$. Suppose further that the set $\mathcal M$ of $M$ local minima is distributed randomly in the pool. Let $\tilde m$ be the random variable for the number of steps taken such that the minimum ansatz element in $\mathcal S_{\le\tilde m-1}$ is a local minimum. Let the probability of $\tilde m\le m$ given a pool $\pool$ over $N$ qubits and the sequences $\vec q$ and $\vec\nu$, be denoted by $\mathbb P\left(\tilde m\le m\middle|N,\vec q,\vec\nu\right)$. As $\mathcal M\cap\mathcal S_{\le m-1}=\emptyset\implies\tilde m>m$ then $\mathbb P\left(\tilde m> m\middle|N,\vec q,\vec\nu\right)\le\mathbb P\left(\mathcal M\cap\mathcal S_{\le m-1}=\emptyset\middle|N,\vec q,\vec\nu\right)$. That is $\mathbb P\left(\tilde m\le m\middle|N,\vec q,\vec\nu\right)\ge 1-\mathbb P\left(\mathcal M\cap\mathcal S_{\le m-1}=\emptyset\middle|N,\vec q,\vec\nu\right)$.

We can calculate $\mathbb P\left(\mathcal M\cap\mathcal S_{\le m-1}=\emptyset\middle|N,\vec q,\vec\nu\right)$ by fixing the subset $\mathcal S_{\le m-1}\subset\pool$ and then sequentially placing the local minima randomly in the pool. The probability that the first local minimum is not placed in $\mathcal S_{\le m-1}$ is $\left(\left|\pool\right|-\left|\mathcal S_{\le m-1}\right|\right)/\left|\pool\right|$. However, note that a local minimum cannot lie within the generalized non-commuting set of another local minimum. Thus, we must remove the ansatz elements of the generalized non-commuting set from both the $\pool$ and $\mathcal S_{\le m-1}$ before the next step. Thus, the probability of not placing the $i$th local minimum in $\mathcal S_{\le m-1}$ conditional on all previous local minima not being placed in $\mathcal S_{\le m-1}$ and the cardinality of the support of the $j$th local minimum being $\mu_j$ is given by
\begin{equation}
    \frac{\left|\pool\left(N_{m-1}-\sum_{j=1}^{i-1}\mu_j\right)\right|}{\left|\pool\left(N-\sum_{j=1}^{i-1}\mu_j\right)\right|},
\end{equation}
for support commutation. Here we have assumed our initial subpool is a singleton. If the initial subpool is not a singleton, then we will have pre-evaluated some ansatz elements in future subpools, and so the numerator will decrease. Therefore, this probability constitutes an upper bound. Further, we can bound this probability as follows:

\begin{lemma}
    There exists a finite constant $V$, such that if $N_{m-1}-\sum_{j=1}^{i-1}\mu_j\ge V$, then
    \begin{equation}
        \frac{\left|\pool\left(N_m-\sum_{j=1}^{i-1}\mu_j\right)\right|}{\left|\pool\left(N-\sum_{j=1}^{i-1}\mu_j\right)\right|}\le\frac{\left|\pool\left(N_m\right)\right|}{\left|\pool\left(N\right)\right|}.
    \end{equation}
\end{lemma}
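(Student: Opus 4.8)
The plan is to deduce the claim from the logarithmic concavity of $N\mapsto\left|\pool\left(N\right)\right|$ established in \cref{lemma: logarithmically concave}. Abbreviate $a\coloneqq N_m$, $b\coloneqq N$ and $k\coloneqq\sum_{j=1}^{i-1}\mu_j\ge0$. Since $N_m=N-\sum_{l=0}^{m-1}\nu_l$ with every $\nu_l\ge0$, we have $a\le b$, hence also $b-k\ge a-k$. If $a=b$ the two sides of the asserted inequality are literally equal, so assume $a<b$. Enlarging the constant $V$ of \cref{lemma: logarithmically concave} if necessary so that $\left|\pool\left(N\right)\right|>0$ for all $N\ge V$, every argument $a-k,\,a,\,b-k,\,b$ that will appear satisfies $\ \ge a-k\ge V$ under the hypothesis, so all the quantities $\left|\pool\left(\cdot\right)\right|$ below are strictly positive and, after clearing denominators, the inequality to be shown is
\begin{equation}
    \left|\pool\left(a-k\right)\right|\,\left|\pool\left(b\right)\right|\ \le\ \left|\pool\left(a\right)\right|\,\left|\pool\left(b-k\right)\right|.
\end{equation}

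The main step is to telescope. Put $r\left(n\right)\coloneqq\left|\pool\left(n\right)\right|/\left|\pool\left(n-1\right)\right|$ for $n\ge V+1$. Then $\left|\pool\left(b\right)\right|/\left|\pool\left(a\right)\right|=\prod_{n=a+1}^{b}r\left(n\right)$ and $\left|\pool\left(b-k\right)\right|/\left|\pool\left(a-k\right)\right|=\prod_{n=a+1}^{b}r\left(n-k\right)$, both being products of the same number $b-a$ of factors, so the displayed inequality is equivalent to
\begin{equation}
    \prod_{n=a+1}^{b}r\left(n\right)\ \le\ \prod_{n=a+1}^{b}r\left(n-k\right).
\end{equation}
Now invoke \cref{lemma: logarithmically concave}: logarithmic concavity of $\left|\pool\left(N\right)\right|$ on $\{V,V+1,\dots\}$ is exactly the statement that $r$ is non-increasing on $\{V+1,V+2,\dots\}$. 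For each $n$ in the product range we have $n-k\le n$, and the smallest index occurring is $\left(a-k\right)+1$, which is $\ge V+1$ precisely because the hypothesis gives $a-k\ge V$; hence $r\left(n-k\right)\ge r\left(n\right)$ term by term, and multiplying these inequalities yields the claim.

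This is the standard ``monotone consecutive ratio'' consequence of log-concavity, so the points that need care are purely bookkeeping ones: (i) keeping every argument inside the region $\ge V$ where \cref{lemma: logarithmically concave} applies, which is what the hypothesis together with $a\le b$ ensures; (ii) the off-by-one between ``log-concave for $N\ge V$'' and ``$r$ non-increasing for $n\ge V+1$'', which is immaterial since $V$ is only asked to be \emph{some} finite constant and may be replaced by $V+1$; and (iii) the degenerate case $a=b$ (no qubits removed along the sequence), dispatched at the outset. I would also note that the same argument goes through verbatim with $N_{m-1}$ in place of $N_m$ throughout, since all it uses is that the reduced qubit count is $\le N$ and $\ge V+k$.
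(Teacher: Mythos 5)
Your proof is correct and follows essentially the same route as the paper's: both reduce the claim to monotonicity in $x$ of the ratio $\left|\pool\left(x-a\right)\right|/\left|\pool\left(x\right)\right|$, deduced from the logarithmic concavity established in the preceding lemma. The only difference is cosmetic---you telescope discrete consecutive ratios where the paper differentiates the continuous ratio---and your bookkeeping of the domain on which log-concavity applies (and of the $N_m$ versus $N_{m-1}$ indexing in the hypothesis) is, if anything, more careful than the original.
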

\begin{proof}
    Consider the ratio $r\coloneqq\frac{f\left(x-a\right)}{f\left(x\right)}$ where $a$ is non-negative. Now consider when the derivative is non-negative:
    \begin{align}
        \dv{r}{x}\equiv\frac{f'\left(x-a\right)}{f\left(x\right)}-\frac{f\left(x-a\right)f'\left(x\right)}{f^2\left(x\right)}\ge0\\
        \iff\dv{x}\ln\left[f\left(x-a\right)\right]\ge\dv{x}\ln\left[f\left(x\right)\right].
    \end{align}
    This is true iff $f\left(x\right)$ is logarithmically concave. Note by \cref{lemma: logarithmically concave} there exists some finite constant $V$ such that $\left|\pool\left(N\right)\right|$ is logarithmically concave.
\end{proof}

Putting these results together, we can bound $\mathbb P\left(\mathcal M\cap\mathcal S_{\le m}=\emptyset\middle|N,\vec q,\vec\nu,\vec\mu\right)$ for $m$ small enough to satisfy $N_m-\sum_{j=1}^{M-1}\mu_j\ge V$ and then use a looser bound for the remaining terms:
\begin{align}
    \mathbb P\left(\mathcal M\cap\mathcal S_{\le m}=\emptyset\middle|N,\vec q,\vec\nu,\vec\mu\right)
    &\le\prod_{i=1}^M\frac{\left|\pool\left(N_m-\sum_{j=1}^{i-1}\mu_j\right)\right|}{\left|\pool\left(N-\sum_{j=1}^{i-1}\mu_j\right)\right|}\\
    &\le\begin{cases}\displaystyle \prod_{i=1}^M\frac{\left|\pool\left(N_m\right)\right|}{\left|\pool\left(N\right)\right|},&m\le m'_{\max}\\
    \displaystyle \prod_{i=1}^M\frac{\left|\pool\left(N_{m'_{\max}}\right)\right|}{\left|\pool\left(N\right)\right|},&m>m'_{\max}\end{cases}\\
    &=\begin{cases}\displaystyle\left[\frac{\left|\pool\left(N_m\right)\right|}{\left|\pool\left(N\right)\right|}\right]^M,&m\le m'_{\max}\\
    \displaystyle\left[\frac{\left|\pool\left(N_{m'_{\max}}\right)\right|}{\left|\pool\left(N\right)\right|}\right]^M,&m>m'_{\max},\end{cases}
\end{align}
where $m'_{\max}$ be the largest $m$ that satisfies $N_m-\sum_{j=1}^{M-1}\mu_j\ge V$. Similarly, $\mmax$ is the largest $m$ that satisfies $N_m-\norm{\vec\mu}_1\ge0$. Thus, there will at most be $V-\min\mathcal V\ge\mmax-m'_{\max}$ terms for which $m$ is too large to satisfy $N_m-\sum_{j=1}^{M-1}\mu_j\ge V$ and we will find these terms will asymptotically yield second-order contributions to our quantities of interest.

As the final upper bound is independent of $\vec\mu$ and $\vec q$ then we can marginalize over $\vec\mu$ and $\vec q$ to get
\begin{equation}
    \mathbb P\left(\mathcal M\cap\mathcal S_{\le m}=\emptyset\middle|N,\vec\nu\right)\le\begin{cases}\displaystyle\left[\frac{\left|\pool\left(N_m\right)\right|}{\left|\pool\left(N\right)\right|}\right]^M,&m\le m'_{\max}\\
    \displaystyle\left[\frac{\left|\pool\left(N_{m'_{\max}}\right)\right|}{\left|\pool\left(N\right)\right|}\right]^M,&m>m'_{\max},\end{cases}
\end{equation}
and finally, lower bound the probability of interest:
\begin{equation}\label{eq: probability lower bound}
    \mathbb P\left(\tilde m\le m\middle|N,\vec\nu\right)\ge \begin{cases}\displaystyle 1-\left[\frac{\left|\pool\left(N_m\right)\right|}{\left|\pool\left(N\right)\right|}\right]^M,&m\le m'_{\max}\\
        \displaystyle 1-\left[\frac{\left|\pool\left(N_{m'_{\max}}\right)\right|}{\left|\pool\left(N\right)\right|}\right]^M,&m>m'_{\max}.\end{cases}
\end{equation}
Finally, we note the asymptotic scaling:
\begin{equation}\label{eq: probability asymtotics}
    \left[\frac{\left|\pool\left(N_m\right)\right|}{\left|\pool\left(N'\right)\right|}\right]^M\sim\left[\frac{N_m}{N}\right]^{M\max\mathcal V},
\end{equation}
where we note this asymptotic scaling does not necessarily hold for $m>m'_{\max}$, but we no longer have any terms of this form.

However, in our proposed pool exploration strategy, the $A_m$ are drawn from a dependent distribution. We take $A_m$ as the ansatz element with the minimum loss in $\mathcal S_{\le m}$. Assuming this dependent sequence is a better exploration strategy than independent random sampling then this strategy should drive the sequence to a local minima faster than the independent proposal strategy. Thus, one would expect the cumulative probability distribution of termination to shift to larger probabilities:
\begin{equation}
    \mathbb P\left(\tilde m\le m\middle|N,\vec\nu\right)\le\mathbb P\left(\hat m\le m\middle|N,\vec\nu\right),
\end{equation}
where $\hat m$ is the random variable for the number of steps taken to find a local minimum using this dependent sequence. That is, $\hat m$ is the random variable for the number of steps taken such that the minimum ansatz element in $\mathcal S_{\le\hat m-1}$ is a local minimum.

We can write the mean of a monotonically increasing function $f$ with respect to $\tilde m$ as
\begin{align}
    \mathbb E\left(f\left(\tilde m\right)\middle|N,\vec\nu\right)&\equiv\sum_{m=0}^{\mmax}f\left(m\right)\mathbb P\left(\tilde m=m\middle|N,\vec\nu\right)\label{eq: mean definition}\\
    &=f\left(0\right)+\sum_{m=1}^{\mmax}\left[f\left(m\right)-f\left(m-1\right)\right]\mathbb P\left(\tilde m\ge m\middle|N,\vec\nu\right)\label{eq: replace with cumulative}\\
    &=f\left(0\right)+\sum_{m=0}^{\mmax-1}\left[f\left(m+1\right)-f\left(m\right)\right]\left[1-\mathbb P\left(\tilde m\le m\middle|N,\vec\nu\right)\right]\\
    &=f\left(\mmax\right)-\sum_{m=0}^{\mmax-1}\left[f\left(m+1\right)-f\left(m\right)\right]\mathbb P\left(\tilde m\le m\middle|N,\vec\nu\right),
\end{align}
and similarly for $\hat m$. Using this, we can lower bound the expected values:
\begin{align}
    \mathbb E\left(\tilde m\middle|N,\vec\nu\right)&\ge\mathbb E\left(\hat m\middle|N,\vec\nu\right),\label{eq: mean inequality}\\
    \mathbb E\left(\left|\mathcal S_{\le\tilde m}\right|\middle|N,\vec\nu\right)&\ge\mathbb E\left(\left|\mathcal S_{\le\hat m}\right|\middle|N,\vec\nu\right).\label{eq: loss mean inequality}
\end{align}

Next, we proceed by considering specifically Explore-ADAPT-VQE.

\subsection{Statistical Analysis of Explore-ADAPT-VQE}
\label{appendix: explore}
Now we substitute for the bounding cumulative probability distribution to obtain the asymptotic scaling for the mean:
\begin{align}
    \mathbb E\left(\left|\mathcal S_{\le\tilde m}\right|\middle|N,\vec\nu\right)
    &=\sum_{m=0}^{\mmax}\left|\mathcal S_{\le m}\right|\mathbb P\left(\tilde m= m\middle|N,\vec\nu\right)\label{eq: sub in S}\\
    &=\left|\mathcal S_0\right|+\sum_{m=1}^{\mmax}\left|\mathcal S_{m}\right|\mathbb P\left(\tilde m\ge m\middle|N,\vec\nu\right)\label{eq: follow derivation}\\
    &=\left|\mathcal S_0\right|+\sum_{m=0}^{\mmax-1}\left|\mathcal S_{m+1}\right|\mathbb P\left(\tilde m> m\middle|N,\vec\nu\right)\\
    &\le\left|\mathcal S_0\right|+\sum_{m=0}^{m'_{\max}-1}\left|\mathcal S_{m+1}\right|\left[\frac{\left|\pool\left(N_m\right)\right|}{\left|\pool\left(N\right)\right|}\right]^M+\sum_{m=m'_{\max}}^{\mmax-1}\left|\mathcal S_{m+1}\right|\left[\frac{\left|\pool\left(N_{m'_{\max}}\right)\right|}{\left|\pool\left(N\right)\right|}\right]^M\label{eq: sub bound}\\
    &=\left|\mathcal S_0\right|+\Theta\left[\sum_{m=0}^{m'_{\max}-1}N_m^{\max\mathcal V-1}\left(\frac{N_m}{N}\right)^{M\max\mathcal V}+\sum_{m=m'_{\max}}^{\mmax-1}N_m^{\max\mathcal V-1}\left(\frac{N_{m'_{\max}}}{N}\right)^{M\max\mathcal V}\right].\label{eq: sub in asymtotics}
\end{align}
Where \cref{eq: sub in S,eq: follow derivation} follow from \cref{eq: mean definition,eq: replace with cumulative} and in \cref{eq: sub bound,eq: sub in asymtotics} we have substituted in the inequality given in \cref{eq: probability lower bound} and then the asymptotics in Eqs.~\eqref{eq: probability asymtotics} and \eqref{eq:f QEB cardinality} or \eqref{eq: qubit cardinality}. However $N_{m}=\bigO{1}$ for $m\ge m'_{\max}$, so both $\sum_{m=m'_{\max}}^{\mmax-1}N_m^{\max\mathcal V-1}\left(\frac{N_{m'_{\max}}}{N}\right)^{M\max\mathcal V}=\bigO{N^{-M\max\mathcal V}}$ and $\sum_{m=m'_{\max}}^{\mmax-1}N_m^{\max\mathcal V-1}\left(\frac{N_m}{N}\right)^{M\max\mathcal V}=\bigO{N^{-M\max\mathcal V}}$. On the other hand, the first term in the first summation of \cref{eq: sub in asymtotics} is $\bigTheta{N^{\max\mathcal V-1}}$. Therefore, assuming $\max\mathcal V>1$, we can neglect the second summation and extend the first summation from $m_{\max}'$ to $\mmax$, to leading order:
\begin{equation}
    \mathbb E\left(\left|\mathcal S_{\le\tilde m}\right|\middle|N,\vec\nu\right)=\left|\mathcal S_0\right|+\mathcal O\left[\sum_{m=0}^{\mmax-1}N_m^{\max\mathcal V-1}\left(\frac{N_m}{N}\right)^{M\max\mathcal V}\right].
\end{equation}

Now consider the second term:
\begin{equation}
    \sum_{m=0}^{\mmax-1}N_m^{\max\mathcal V-1}\left(\frac{N_m}{N}\right)^{M\max\mathcal V}=N^{\max\mathcal V-1}\sum_{m=0}^{\mmax-1}\left(1-\frac{1}{N}\sum_{i=0}^{m-1}\nu_i\right)^{\left(M+1\right)\max\mathcal V-1}.\label{eq:near integral form}
\end{equation}
Noting we can bound $\frac{1}{N}\sum_{l=0}^{m-1}\nu_l$ as follows:
\begin{equation}
    \frac{1}{N}\sum_{l=0}^{m-1}\nu_l\ge \frac{m}{N}=a\frac{m}{\mmax},\quad\text{where}\quad a\equiv\frac{\mmax}{N}=\bigTheta{1}.
\end{equation}
Note $a\in\left[0,1\right]$, using this we bound \cref{eq:near integral form} with
\begin{align}
    N^{\max\mathcal V-1}\sum_{m=0}^{\mmax-1}\left(1-\frac{1}{N}\sum_{j=0}^{m-1}\nu_j\right)^{\left(M+1\right)\max\mathcal V-1}&
    \le N^{\max\mathcal V-1}\sum_{m=0}^{\mmax-1}\left(1-a\frac{m}{\mmax}\right)^{\left(M+1\right)\max\mathcal V-1}\\
    &\sim N^{\max\mathcal V-1}\mmax\smashoperator{\int\limits_0^1}\left(1-ax\right)^{\left(M+1\right)\max\mathcal V-1}\dd{x}\equiv I.
\end{align}
Next, using the substitution $y=1-ax$ we can evaluate the integral:
\begin{align}
    I&= N^{\max\mathcal V-1}\frac{\mmax}{a}\int\limits_{1-a}^1y^{\left(M+1\right)\max\mathcal V-1}\dd{y}\\
    &=\frac{N^{\max\mathcal V-1}\mmax}{a\left(M+1\right)\max\mathcal V}\left[1-\left(1-a\right)^{\left(M+1\right)\max\mathcal V}\right].
\end{align}

Indeed, $M$ can depend on $N$, and so will have an asymptotic scaling 
$M=\bigOmega{1},\bigO{N}$. 
Thus, the asymptotic scaling of the mean is
\begin{equation}
    \mathbb E\left(\left|\mathcal S_{\le\tilde m}\right|\middle|N,\vec\nu\right)=\left|\mathcal S_0\right|+\bigO{\frac{N^{\max\mathcal V}}{M}},
\end{equation}
as $m_s=\bigO{N}$.

Following a similar method, one can show
\begin{equation}
    \mathbb E\left(\tilde m\middle|N,\vec\nu\right)=\bigO{\frac{N}{M}}.
\end{equation}

Further, inequalities \eqref{eq: mean inequality} and \eqref{eq: loss mean inequality} yield
\begin{align}
    \mathbb E\left(\left|\mathcal S_{\le\hat m}\right|\middle|N,\vec\nu\right)&=\bigO{\left|\mathcal S_0\right|+\frac{N^{\max\mathcal V}}{M}},\\
    \mathbb E\left(\hat m\middle|N,\vec\nu\right)&=\bigO{\frac{N}{M}}.
\end{align}

As these scalings are independent of $\vec v$, then conditioning our probabilities and means with some prior distribution for $\vec v$ will leave the scalings invariant:
\begin{align}
    \mathbb E\left(\left|\mathcal S_{\le\hat m}\right|\middle|N\right)&=\bigO{\left|\mathcal S_0\right|+\frac{N^{\max\mathcal V}}{M}},\\
    \mathbb E\left(\hat m\middle|N\right)&=\bigO{\frac{N}{M}}.
\end{align}

\subsection{Statistical Analysis of Dynamic-ADAPT-VQE}
\label{appendix: dynamic}

Finally, when we additionally apply layering, we have the complication that the pool size reduces throughout the layer. Thus, the upper bound on the means becomes
\begin{align}
    \mathbb E\left(\left|\mathcal S_{\le\hat m}\right|\middle|N\right)&\le\sum_{t=1}^{t_{\max}}\sum_{\substack{\vec q\in\mathcal V^t\\\colon\norm{\vec q}_1\le N}}\mathbb E\left(\left|\mathcal S_{\le\tilde m}\right|\middle|N-\norm{\vec q}_1\right)\mathbb P\left(\vec q\right)=\bigO{\left|\mathcal S_0\right|+\frac{N^{\max\mathcal V}}{M}},\\
    \mathbb E\left(\hat m\middle|N\right)&\le\sum_{t=1}^{t_{\max}}\sum_{\substack{\vec q\in\mathcal V^t\\\colon\norm{\vec q}_1\le N}}\mathbb E\left(\tilde m\middle|N-\norm{\vec q}_1\right)\mathbb P\left(\vec q\right)=\bigO{\frac{N}{M}},
\end{align}
where $\mathbb P\left(\vec q\right)$ is the frequency at which the algorithm had 
already placed $\dim\vec q$ gates in a layer each acting on $q_i$ qubits prior 
to a given iteration. $\mathbb P\left(\vec q\right)$ will decay with $\dim\vec 
q$ due to early termination or use of larger ansatz elements.

\section{Generalized Commutativity}
\label{appendix: generalized commutativity}

We can generalize the notion of commutativity used in the body of this article as follows:
\begin{definition}[Generalized Commutativity]
    $\nset_{\textrm{G}}\left(\pool, A\right)$ and its complement define valid notions of generalized non-commutation and generalized commutation over $\pool$, respectively, if:
    \begin{equation}\label{eq: general commutativity condition}
        \nset_{\textrm{O}}\left(\pool,A\right)\subseteq\nset_{\textrm{G}}\left(\pool,A\right)\quad\forall A\in\pool.
    \end{equation}
\end{definition}

In particular, note that Property \ref{prop: non-blocking} holds for generalized commutativity. This is because \cref{eq: general commutativity condition} ensures that the ordering of ansatz elements found by \cref{alg: SubpoolExploration} does not matter.

Therefore, if the shallowness of circuits is of less importance, then supersets of the support non-commuting set can be used:
\begin{equation}
    \nset_{\textrm{O}}\left(\pool,A\right)\subseteq\nset_{\textrm{G}}\left(\pool,A\right)\subseteq\nset_{\textrm{S}}\left(\pool,A\right)\quad\forall A\in\pool.
\end{equation}

Alternatively, in physical systems, one can imagine cross-talk from simultaneous qubit operations being problematic, and so a more generous notion of support, including some padding, may be beneficial:
\begin{equation}
    \nset_{\textrm{O}}\left(\pool,A\right)\subseteq\nset_{\textrm{S}}\left(\pool,A\right)\subseteq\nset_{\textrm{G}}\left(\pool,A\right)\quad\forall A\in\pool.
\end{equation}

\section{Noise models}
\label{appendix: noise models}

In this appendix, we document the details of our noise models.

\subsection{Noise channels acting on a single qubit}
\label{appendix: single qubit noise model}

This subsection describes how noise acts on a single qubit.

\textit{Amplitude damping} noise acts on the density matrix of a single-qubit as
\begin{align}
    \mathcal \damp(\gamma)
    \left[\left(
    \begin{matrix}
        \rho_{00}&\rho_{01}\\
        \rho_{10}&\rho_{11}
    \end{matrix}
    \right)\right]
    \coloneqq
    \left(
    \begin{matrix}
        \rho_{00}+\gamma\rho_{11}&\sqrt{1-\gamma}\rho_{01}\\
        \sqrt{1-\gamma}\rho_{01}&\left(1-\gamma\right)\rho_{11}
    \end{matrix}
    \right).
\end{align}
Here, the decay constant $\gamma$ is given by
\begin{align}
 \gamma(\omega_1,\dt) \coloneqq1-e^{-\omega_1\dt},
\end{align}
where $\dt$ is the duration for which the qubit is exposed to amplitude damping, and $\omega_1\coloneqq T_1^{-1}$ is determined by the $T_1$ time.
Below, we denote amplitude damping of the $r$th qubit of a density matrix $\rho$ as $\damp(\gamma(\omega_1, \dt), r)[\rho]$.

\textit{Dephasing} noise on a single qubit $r$ of a density matrix $\rho$ is modeled using the following noise channel:
\begin{align}
\dephase(p_z,r)\left[\rho\right]\coloneqq (1-p_z)\rho + p_z \pauliz(r)[\rho].
\end{align}
Here, $\pauliz(r)$ denotes the channel induced by the Pauli-$Z$ gate acting on the $r$th qubit. The phase-flip probability is given by
\begin{align}
 p_z(\omega_z, \dt)=\frac{1}{2}\left(1-e^{-\omega_z\dt}\right),
\end{align}
where $\dt$ is again the duration for which the qubit is dephasing.
The decay constant $\omega_z$, on the other hand, is set by the $T_1$ and $T_2^*$ times:
\begin{equation}
    \omega_z=\frac{1}{T_2^*}-\frac{1}{2T_1}=\omega_2^*-\frac{1}{2}\omega_1.
\end{equation}

\textit{Depolarizing} noise on a single qubit $r$ is modeled using the channel:
\begin{equation}\label{eq: asymmetric depolarizing}
    \mathcal 
\depolarize(p,r)\left[\rho\right]\coloneqq
    \left(1-p\right)\rho
    +\frac{p}{3}\sum_{\pauli\in\{\paulix, \pauliy, \pauliz\}} \pauli(r)[\rho].
\end{equation}
Here, $p \in [0,1]$ is the polarization probability, while $\pauli(r)$ are the $\paulix, \pauliy, \pauliz$ channels induced by the corresponding Pauli gates, acting on qubit $r$.

\subsection{Noise channels acting on an ansatz-element layer}
\label{appendix: layer noise model}

Next, we explain how the noise channels acting on a single qubit are used in the noisy simulation of an ansatz circuit $\Lambda_t$.
To begin with, we decompose the ansatz circuit $\Lambda_t$ into $l=1,...,L_t$ layers of support commuting ansatz-element layers $\left\{\layer_l\right\}$ as $\Lambda_t = \layer^{\text{o}}_{L_t} \circ\cdots\circ \layer^{\text{o}}_l \circ\cdots\circ \layer^{\text{o}}_1$, \cref{eq: LayerDecomposition}.
For amplitude damping and dephasing noise, each ansatz-element layer $\layer^{\text{o}}_l$ is transpiled into columns of native gates that can be implemented in parallel.
The native gate with the longest execution time of each native-gate column sets the column execution time.
The sum of the column execution times then gives the execution time $\dt_l$ of the ansatz-element layer $\layer^{\text{o}}_l$.
After each ansatz-element layer $\layer^{\text{o}}_l$ amplitude damping is implemented by applying an amplitude-damping channel to every qubit $r=1,..., N$ in an amplitude-damping layer:
\begin{align}
 \damplayer(\omega_1, \dt_l) 
   = \bigotimes_{r=1}^{N}  \damp(\gamma(\omega_1, \dt_l), r).
\end{align}
This results in the amplitude-damped ansatz circuit
\begin{align}\label{eq: damping ansatz}
 \Lambda_t(\omega_1) = \damplayer(\omega_1, 
\dt_{L_t})\circ\layer^{\text{o}}_{L_t}
                     \circ \cdots \circ
                     \damplayer(\omega_1, \dt_1)\circ\layer^{\text{o}}_1.
\end{align}

Similarly, after each ansatz-element layer $\layer^{\text{o}}_l$, dephasing is implemented by applying a dephasing channel to every qubit $r=1,..., N$ in a dephasing layer:
\begin{align}
 \dephaselayer(\omega_z, \dt_l)
    = \bigotimes_{r=1}^{N} \dephase(p_z(\omega_z, \dt_l), r).
\end{align}
This results in the dephased ansatz circuit
\begin{align}\label{eq: dephasing ansatz}
 \Lambda_t(\omega_z) = \dephaselayer(\omega_z, 
\dt_{L_t})\circ\layer^{\text{o}}_{L_t} 
                     \circ \cdots \circ
                     \dephaselayer(\omega_z, \dt_1)\circ\layer^{\text{o}}_1.
\end{align}

Finally, for depolarizing noise, we first apply the whole ansatz-element layer.
We then apply a depolarizing channel to each qubit, the exact number of times that qubit was a target qubit of a CNOT gate in the preceding layer.
More specifically, defining $M_{r,l}$ to be the number of times the $r$th qubit was the target qubit of a CNOT gate in the $l$th layer, we model the effect of depolarization after the layer $\layer^{\text{o}}_l$ by:
\begin{align}
 \depolarizelayer(p,l)
    = \bigotimes_{r=1}^{N} \left(\prod_{i=1}^{M_{l,r}} \depolarize(p,r)\right).
\end{align}
This results in the depolarized ansatz circuit 
\begin{align}\label{eq:depolarizing ansatz}
 \Lambda_t(p) = \depolarizelayer(p, L_t)\circ\layer^{\text{o}}_{L_t} \circ 
\cdots
                           \circ\depolarizelayer(p, 1)\circ\layer^{\text{o}}_1.
\end{align}

\section{Derivation of noise susceptibility relations}
\label{appendix: noise susceptibility}

In this section, we derive the noise susceptibility expressions of Eqs.~\eqref{eq: NoiseSusceptibilityExplicit}.
We start from the definition of noise susceptibility in \cref{eq: NoiseSusceptibilityDef} and differentiate the amplitude-damped ansatz circuit $\Lambda_t(\omega_1)$ \cref{eq: damping ansatz}, the dephased ansatz circuit $\Lambda_t(\omega_z)$ \cref{eq: dephasing ansatz}, and the depolarized ansatz circuit $\Lambda_t(p)$ \cref{eq:depolarizing ansatz}, with respect to the noise parameter $\omega_1$, $\omega_z$, and $p$, respectively.
This results in the following expressions
\begin{subequations}
\label{eq:CircuitDiffs} 
\begin{align}
 \left.\frac{\partial\Lambda_t(\omega_1)}{\partial \omega_1}\right|_{\omega_1=0}
& = \sum_{l=1}^{L_t}\sum_{r=1}^{N} 
        \layer^{\text{o}}_{L_t} \circ ...\circ
\left(\left.\frac{\partial\damp(\gamma,r)}{\partial\gamma}\right|_{\gamma=0}    
\times \tau_l \right)        
        \circ \layer^{\text{o}}_{l} \circ ...\circ \layer^{\text{o}}_{1},\\
 \left.\frac{\partial\Lambda_t(\omega_z)}{\partial \omega_z}\right|_{\omega_z=0}
&  = \sum_{l=1}^{L_t}\sum_{r=1}^{N} 
        \layer^{\text{o}}_{L_t} \circ ...\circ 
        \left(\left.\frac{\partial\dephase(p_z,r)}{\partial p_z}\right|_{p_z=0}
        \times \frac{\tau_l}{2} \right)        
        \circ \layer^{\text{o}}_{l} \circ ...\circ \layer^{\text{o}}_{1},\\
 \left.\frac{\partial\Lambda_t(p)}{\partial p}\right|_{p=0} 
 &  = \sum_{l=1}^{L_t}\sum_{r=1}^{N} 
        M_{l,r}\layer^{\text{o}}_{L_t} \circ ...\circ 
        \left(\left.\frac{\partial\depolarize(p,r)}{\partial p}\right|_{p=0}
        \right)
        \circ \layer^{\text{o}}_{l} \circ ...\circ \layer^{\text{o}}_{1}.
\end{align}
\end{subequations}
Next, we compute the derivatives of the individual channels, given as
\begin{subequations}
 \label{eq:OneQubitChannelDiffs}
\begin{align}
\left.\frac{\partial\damp(\gamma,r)}{\partial\gamma}\right|_{\gamma=0}
& = d\damp(r) - \mathds{1},\\
\left.\frac{\partial\dephase(p_z,r)}{\partial p_z}\right|_{p_z=0}
& = \pauliz(r) - \mathds{1}, \\
\left.\frac{\partial\depolarize(p,r)}{\partial p}\right|_{p=0}
& = \frac{1}{3}\left(\paulix(r) + \pauliy(r) + \pauliz(r) \right)
 - \mathds{1}.
\end{align}
\end{subequations}
Here we use the linear, but the no longer positive nor trace-preserving map
\begin{align}\label{eq: damping derivative map}
    d\damp(r) = \damp\left(\frac{3}{4},r\right)  + \frac{1}{4}\residual(r),
\end{align}
with the residual linear map
\begin{align}
    \residual
    \left[\left(
    \begin{matrix}
        \rho_{00}&\rho_{01}\\
        \rho_{10}&\rho_{11}
    \end{matrix}
    \right)\right]
    \coloneqq
    \left(
    \begin{matrix}
        \rho_{11} & 0          \\
        0         & -\rho_{11}
    \end{matrix}
    \right),
\end{align}
which admits to the following representation
\begin{align}
    \residual[\rho] = K_1 \rho K_1^{\dagger} - K_2 \rho K_2^{\dagger},
    \quad \text{ with }
    K_1  \coloneqq
    \left(
    \begin{matrix}
        0 & 1\\
        0 & 0
    \end{matrix}
    \right),\text{ and }
    K_2 \coloneqq
    \left(
    \begin{matrix}
        0 & 0          \\
        0 & 1
    \end{matrix}
    \right).
\end{align}
Combining the channel derivatives, Eqs.~\eqref{eq:OneQubitChannelDiffs}, back into the ansatz-circuit derivatives, Eqs.~\eqref{eq:CircuitDiffs}, and substituting these back into the noise susceptibility definition, \cref{eq: NoiseSusceptibilityDef}, we obtain
\begin{align}
 \chi_t^\damp
 &= \sum_{l=1}^{L_t}\sum_{r=1}^{N} 
       \tau_l \left( 
\trace \left[H  \layer^{\text{o}}_{L_t} \circ...\circ d\damp(r) \circ 
\layer^{\text{o}}_{l}\circ...\circ\layer^{\text{o}}_{1} [\rho_0] \right] 
- \trace \left[H  \layer^{\text{o}}_{L_t} \circ...\circ \mathds{1} \circ 
\layer^{\text{o}}_{l}\circ...\circ\layer^{\text{o}}_{1} [\rho_0] \right]
        \right),\\
 \chi_t^\dephase
&= \sum_{l=1}^{L_t}\sum_{r=1}^{N} 
       \frac{\tau_l}{2} \left( 
\trace \left[H  \layer^{\text{o}}_{L_t} \circ...\circ \pauliz(r) \circ 
\layer^{\text{o}}_{l}\circ...\circ\layer^{\text{o}}_{1} [\rho_0] \right] 
- \trace \left[H  \layer^{\text{o}}_{L_t} \circ...\circ \mathds{1} \circ 
\layer^{\text{o}}_{l}\circ...\circ\layer^{\text{o}}_{1} [\rho_0] \right]
        \right),\\
 \chi_t^\depolarize
&=\sum_{l=1}^{L_t}\sum_{r=1}^{N} M_{l,r} \!\!\!\!\!\!
 \sum_{\pauli\in\{\paulix,\pauliy,\pauliz\}} \!\!\frac{1}{3}
        \left(
        \trace\left[ H \layer^{\text{o}}_{L_t} \circ ...\circ \pauli(r)
        \circ \layer^{\text{o}}_{l} \circ ...\circ \layer^{\text{o}}_{1} 
[\rho_0] \right]
        -
        \trace\left[ H \layer^{\text{o}}_{L_t} \circ ...\circ \mathds{1}
        \circ \layer^{\text{o}}_{l} \circ ...\circ \layer^{\text{o}}_{1} 
[\rho_0] \right]
        \right).
\end{align}
To summarize these expressions in a compact form, we define energy expectation values, where the original ansatz circuit [$\Lambda_t$ in \cref{eq: LayerDecomposition}] is perturbed by a linear map $\mathcal{M}$ acting on qubit $r$ after layer $\layer^{\text{o}}_l$ as
\begin{align}
\label{eq: NoiseSusceptibilityExpectationValues}
\bound(\mathcal{M},r,l) 
= \trace\left[ H 
\layer^{\text{o}}_{L_t}\!\circ\!...\!\circ\!\layer^{\text{o}}_{l+1}
                             \!\circ\!
                  \mathcal{M}(r)\!\circ\!\layer^{\text{o}}_{l}
                  \!\circ\!...\!\circ\!
                                         \layer^{\text{o}}_{1} [\rho_0] \right].
\end{align}
As $\bound(\mathds{1},r,l)$ is identical to the noiseless energy bound $\bound_t(0)$, we can further define energy fluctuations as
\begin{align}
 \label{eq:EnergyFluctuations}
 \delta\bound(\mathcal{M},r,l) = \bound(\mathcal{M},r,l) - \bound_t(0).
\end{align}
Averaging these fluctuations over the layer and qubit indices, $l$ and $r$, respectively, further allows us to define an average noise-induced energy fluctuation for amplitude damping, dephasing, and depolarizing noise as
\begin{subequations}
\label{eq: AverageEnergyFluctuations}
\begin{align}
 d\bound(\Lambda_t,\damp)
 &= \frac{1}{L_t N}\sum_{l=1}^{L_t}\sum_{r=1}^{N} \tau_l 
\delta\bound(d\damp,r,l),\\
 d\bound(\Lambda_t,\dephase)
 &= \frac{1}{L_t N}\sum_{l=1}^{L_t}\sum_{r=1}^{N} 
 \frac{\tau_l}{2} \delta\bound(\pauliz,r,l),\\
 d\bound(\Lambda_t,\depolarize)
 &=\frac{1}{N_{II}}\sum_{l=1}^{L_t}\sum_{r=1}^{N} \frac{M_{l,r}}{3}
  \!\!\!\sum_{\pauli\in\{\paulix,\pauliy,\pauliz\}} \!\!\!\!\!\!\!
  \delta\bound(\pauli,r,l).
\end{align}
\end{subequations}
These expressions allow us to cast the noise susceptibility of amplitude damping, dephasing, and depolarizing noise in a compact form, given by Eqs.~\ref{eq: NoiseSusceptibilityExplicit}, as summarized in the body of this article.

\end{document}